\documentclass[lettersize,onecolumn]{IEEEtran}
\usepackage{amsmath,amssymb,amsthm,mathtools,bm}
\usepackage{aliascnt}
\usepackage{enumitem}
\usepackage{microtype}
\usepackage{xcolor}
\usepackage{tikz}
\usetikzlibrary{arrows.meta,positioning,calc,fit,decorations.pathreplacing}
\usepackage{hyperref}
\usepackage[nameinlink,noabbrev]{cleveref}
\hypersetup{hidelinks,pdftitle={A Self-Coupling Gaussian Extremal Inequality for Information-Constrained Optimal Transport with Applications to Rate-Distortion-Perception},pdfauthor={Jiayang Zou, Jun Chen, Ayfer Ozgur, Jia Wang, Tsachy Weissman}}
\allowdisplaybreaks[2]
\newtheorem{theorem}{Theorem}
\newaliascnt{lemma}{theorem}
\newtheorem{lemma}[lemma]{Lemma}
\aliascntresetthe{lemma}
\newaliascnt{proposition}{theorem}
\newtheorem{proposition}[proposition]{Proposition}
\aliascntresetthe{proposition}
\newaliascnt{corollary}{theorem}
\newtheorem{corollary}[corollary]{Corollary}
\aliascntresetthe{corollary}
\theoremstyle{definition}
\newaliascnt{definition}{theorem}

\aliascntresetthe{definition}
\theoremstyle{definition} 
\newaliascnt{remark}{theorem}
\newtheorem{remark}[remark]{Remark}
\aliascntresetthe{remark}
\crefname{theorem}{Theorem}{Theorems}
\crefname{lemma}{Lemma}{Lemmas}
\crefname{proposition}{Proposition}{Propositions}
\crefname{corollary}{Corollary}{Corollaries}
\crefname{definition}{Definition}{Definitions}
\crefname{remark}{Remark}{Remarks}
\newcommand{\R}{\mathbb R}
\newcommand{\E}{\mathbb E}
\newcommand{\dd}{\mathrm{d}}
\newcommand{\mcN}{\mathcal N}
\newcommand{\mcP}{\mathcal P}
\newcommand{\Tr}{\operatorname{Tr}}
\newcommand{\Sym}{\operatorname{Sym}}
\newcommand{\Cov}{\operatorname{Cov}}
\newcommand{\Var}{\operatorname{Var}}
\newcommand{\Law}{\operatorname{Law}}
\newcommand{\diag}{\operatorname{diag}}

\newcommand{\pos}[1]{\left[#1\right]_+}
\newcommand{\norm}[1]{\left\lVert#1\right\rVert}
\newcommand{\ip}[2]{\left\langle#1,#2\right\rangle}
\newcommand{\KL}[2]{\mathrm{KL}\!\left({#1}\,\middle\|\,{#2}\right)}

\newcommand{\cA}{\mathcal A}
\newcommand{\cE}{\mathcal E}

\setlist[enumerate]{leftmargin=*,itemsep=2pt}

\usepackage{amsmath,amsfonts}
\usepackage{algorithmic}
\usepackage{algorithm}
\usepackage{array}
\usepackage[caption=false,font=normalsize,labelfont=sf,textfont=sf]{subfig}
\usepackage{textcomp}
\usepackage{stfloats}
\usepackage{url}
\usepackage{verbatim}
\usepackage{graphicx}
\usepackage{cite}
\begin{document}

\title{A Self-Coupling Gaussian Extremal Inequality for Information-Constrained Optimal Transport with \\ Applications to Rate-Distortion-Perception}

\author{%
Jiayang~Zou,
Jun~Chen,~\IEEEmembership{Senior Member,~IEEE,}
Ayfer~\"Ozg\"ur,~\IEEEmembership{Senior Member,~IEEE,}
Jia~Wang
%
\thanks{%
J. Zou and A. \"Ozg\"ur are with the Department
of Electrical Engineering, Stanford University,
Stanford, CA 94305, USA
(e-mail: jyangzou@stanford.edu; aozgur@stanford.edu).%
}%
\thanks{%
J. Chen is with the Department of Electrical and Computer
Engineering, McMaster University, Hamilton, ON L8S 4L7, Canada
(e-mail: chenjun@mcmaster.ca).%
}%
\thanks{%
J. Wang is with the Institute of Image
Communication and Network Engineering, Shanghai Jiao Tong
University, Shanghai 200240, China
(e-mail: jiawang@sjtu.edu.cn).%
}%
}


\maketitle

\begin{abstract}
How much mutual information is needed to achieve a target correlation  between two copies of a prescribed distribution? We prove a sharp Gaussian extremal inequality for this self-coupling problem: among distributions with a fixed mean and covariance, the Gaussian has the greatest minimum information cost for meeting a positive semidefinite cross-covariance target. The inequality is strict for non-Gaussian distributions when the target is positive definite.
The proof uses a midpoint gluing construction. Two Ornstein--Uhlenbeck half-bridges, each started from the prescribed endpoint law, meet at a common intermediate distribution and are joined by conditional gluing. Changing the Gaussian reference then turns the resulting relative-entropy bound into the desired extremal inequality.
As a first application, scaling one endpoint of the self-coupling establishes Gaussian extremality for rate-distortion-perception with unlimited common randomness, under quadratic distortion and squared Wasserstein-2 perception measures. We also establish an analogous result for finite common randomness, where the problem involves two mutual information constraints associated with an auxiliary random variable. In that setting, we retain an intermediate observation on one half-bridge and balance the two information budgets simultaneously using the Poincar\'{e}-Miranda theorem.
\end{abstract}

\begin{IEEEkeywords}
  Gaussian extremality, information-constrained optimal transport, Ornstein--Uhlenbeck, rate-distortion-perception theory, self-coupling. 
\end{IEEEkeywords}

\section{Introduction}\label{sec:introduction}

Information-constrained optimal transport asks how much two prescribed
distributions can be brought into agreement using a limited amount of
statistical dependence. Its basic problem, as defined in \cite{BaiWuOzgur2023}, is
\begin{equation}\label{eq:icot-introduction}
 \inf_{\substack{\pi\in\Pi(\mu,\nu)\\ I_\pi(X;Y)\le R}}
 \E_\pi c(X,Y),
\end{equation}
where $\Pi(\mu,\nu)$ is the set of  couplings of $\mu$ and $\nu$,
$c$ is a transport cost, and
$I_\pi(X;Y)=\KL{\pi}{\mu\otimes\nu}$ is the mutual information under $\pi$, with
$\mathrm{KL}$ denoting Kullback--Leibler divergence (also known as relative entropy). Without the mutual information constraint, \eqref{eq:icot-introduction} reduces to the classical Kantorovich transport problem.

The role of the information constraint is especially clear in the self-coupling case $\nu=\mu$. Under quadratic cost, the identity coupling gives zero transport cost, whereas a finite information budget limits how closely the two copies can track each other. If their common mean is $m$ and covariance is $\Sigma$, then
\begin{equation}\label{eq:self-cost-introduction}
 \E\norm{X-Y}^2
 =2\Tr\Sigma-2\E\langle X-m,Y-m\rangle.
\end{equation}
Minimizing transport cost thus amounts to maximizing self-correlation: how much correlation can a given mutual information budget support? We establish a sharp Gaussian benchmark for this question.

The information-theoretic roots of \eqref{eq:icot-introduction} go back to
Shannon's source-coding theorems and the rate-distortion formalism
\cite{Shannon1948,Shannon1959,Berger1971,CoverThomas2006}. In classical
rate-distortion theory, for a source $X\sim\mu$, one minimizes the expected
distortion $\E c(X,\widehat X)$ over conditional distributions
$P_{\widehat X|X}$ subject to a mutual information constraint
$I(X;\widehat X)\le R$; equivalently, the source marginal $\mu$ is fixed
while the reproduction marginal is left free to be optimized. If the
reproduction marginal is additionally prescribed to be $\nu$, the resulting
problem is precisely of the form \eqref{eq:icot-introduction}. Such
output-constrained source-coding problems were studied in
\cite{SaldiLinderYuksel2015Randomized}.

The viewpoint of interpreting \eqref{eq:icot-introduction} as an information-constrained optimal transport problem, and its connection to transport--entropy theory, was introduced in \cite{BaiWuOzgur2023}. In particular, \cite{BaiWuOzgur2023} showed that this viewpoint yields a strengthening and information-constrained generalization of Talagrand's Gaussian transportation inequality, which bounds quadratic transport cost to a Gaussian reference in terms of relative entropy \cite{Talagrand1996,BobkovGotze1999,Villani2009}. In the same work, the strengthened inequality was applied to derive a relay-channel converse, recovering the resolution of Cover's relay-channel problem in the Gaussian case \cite{WuBarnesOzgur2019}.
Related information-constrained inequalities have also appeared in
soft-minoration and rate-distortion integral bounds
\cite{Liu2023SoftMinoration,Liu2026RateDistIntegral}. A Lagrangian relaxation
of the mutual-information constraint in \eqref{eq:icot-introduction} leads to
entropy-regularized optimal transport and its computational developments,
including Sinkhorn regularization \cite{Cuturi2013,PeyreCuturi2019}.

In this work, we study the information-constrained optimal transport problem
in the special case where the two prescribed marginals are the same. Under
quadratic cost, this self-coupling problem amounts to asking how much
correlation can be created between two copies of a given distribution using a
limited amount of mutual information. For a scalar law $\mu$ with mean $m$
and variance $\sigma^2>0$, our main result takes the form
\begin{equation}\label{eq:scalar-result-introduction}
\sup_{\substack{\pi\in\Pi(\mu,\mu)\\I_\pi(X;Y)\le R}}
 \E_\pi[(X-m)(Y-m)]
 \ge \sigma^2\sqrt{1-e^{-2R}}.
\end{equation}
The Gaussian law attains equality, whereas every non-Gaussian scalar law
satisfies a strict inequality for $R>0$. Thus, among distributions with a
given variance, the Gaussian requires the most mutual information to achieve
a prescribed level of self-correlation. More generally, for a multivariate law with covariance matrix
$\Sigma\succ0$, our main theorem shows that the minimum mutual information
required to achieve a prescribed positive semidefinite cross-covariance
target $0\preceq S\prec\Sigma$ is no larger than that required for a Gaussian
law with the same covariance matrix. The comparison is strict for
non-Gaussian laws whenever $S\succ0$.

This result belongs to a broader family of Gaussian extremality phenomena in
information theory. Under covariance constraints, Gaussian distributions
maximize entropy and play extremal roles in rate-distortion theory and many
information inequalities; related Gaussian comparison principles have been
developed more recently in \cite{Courtade2018,ArasCourtade2022}. The
self-coupling problem considered here, however, poses a different type of
extremal question: the marginal distribution itself is prescribed, and the
optimization is over the dependence between two copies of that law. Even
when the common marginal is Gaussian, admissible couplings need not be
jointly Gaussian, so establishing a sharp Gaussian comparison requires
controlling the full class of couplings.

The appearance of correlation invites a comparison with
Hirschfeld--Gebelein--R\'enyi maximal correlation
\cite{Hirschfeld1935,Gebelein1941,Renyi1959}. Maximal correlation optimizes over square-integrable transforms of a fixed pair of random variables. In
\eqref{eq:scalar-result-introduction}, by contrast, we optimize over joint laws with prescribed marginals. One might be inclined to prove \eqref{eq:scalar-result-introduction} by starting with a correlated Gaussian pair that attains equality in the bound and pushing both coordinates through a nonlinear transport map to the target marginal $\mu$. The resulting coupling, however, may lose correlation and thus fail to establish the desired bound. Indeed, the Hermite expansion and Mehler formula show that nonlinear components decay faster than linear ones \cite{Janson1997}. We therefore construct the dependence directly from the prescribed marginal law $\mu$.



Our proof begins by relaxing the correlation constraint via convex duality. Changing to a correlated Gaussian reference then absorbs the quadratic term in the objective, reducing the comparison to an upper bound on the minimum relative entropy over self-couplings. To obtain this bound, we send each endpoint law through the same Ornstein--Uhlenbeck (OU) smoothing kernel. The resulting half-bridges share an intermediate marginal and can therefore be joined by sampling their endpoints independently conditional on a common midpoint. This construction preserves both endpoint laws. The relative entropy of the path law is the sum of the two endpoint relative entropies minus the relative entropy of the shared midpoint. Projecting onto the endpoints then gives the coupling bound needed for the dual comparison. The midpoint relative entropy also provides the gain that makes the inequality strict for non-Gaussian laws when the cross-covariance target is positive definite.

The self-coupling inequality has a direct application to rate-distortion-perception (RDP) theory \cite{chen2026overview}, initiated by Blau and Michaeli through the introduction of  a distributional perception constraint into the rate-distortion (RD) framework \cite{BlauMichaeli2019}.
Following Freirich et al. \cite{Freirich2021},  in this work we consider quadratic distortion and squared  Wasserstein-2  perception constraints. Under perfect perception, the reconstruction must have the source law, so the self-coupling inequality provides a starting point. Allowing the two laws to differ permits a reduction in distortion by shrinking the reconstruction toward the  source mean. This displacement controls the perception loss, while the self-correlation bound controls distortion; the operation requires no additional mutual information. Choosing the amount of shrinkage appropriately gives a simple proof that, among sources with the same mean and covariance, the Gaussian has the largest minimum achievable distortion. The comparison is strict for non-Gaussian sources whenever the Gaussian optimal rate allocation assigns positive rate to every covariance eigendirection.

A further question arises when the randomness shared by the encoder and decoder is limited
\cite{SaldiLinderYuksel2015CommonRandomness,Wagner2022,XLCZ2025,XLCYZ2026}. The single-letter characterization of the relevant RDP limit  in \cite{QuChenYuXu2025} involves two couplings linking the respective endpoints to a common auxiliary random variable, with a separate mutual information constraint on each coupling.
We use a variational identity to recast this characterization in the form of a weighted sum of quadratic errors, bringing the comparison back to moments of a self-coupling. To accommodate the unequal information budgets, we retain an intermediate observation on one half-bridge. Simultaneous rate balancing, justified by the Poincar\'{e}--Miranda theorem, selects observation strengths that satisfy both information constraints and the required moment bounds
\cite{Miranda1940,Kulpa1997}.
This proves Gaussian extremality for every finite common-randomness budget, with strictness for non-Gaussian sources under the same positive rate allocation condition on the Gaussian optimizer. 


The paper is organized as follows.
\Cref{sec:self-coupling} proves the self-coupling
inequality and discusses its consequences.
\Cref{sec:rdp} applies the inequality to establish Gaussian extremality for quadratic Wasserstein RDP  with unlimited common randomness. 
\Cref{sec:rdpc-formulation} treats the more challenging setting of limited common randomness.
\Cref{sec:conclusion} concludes the paper.
The appendices provide supporting arguments on duality, approximation, Gaussian optimization, continuity, and variational identities.

\section{The self-coupling inequality}
\label{sec:self-coupling}
\subsection{Statement and consequences}

 Let $\mcP_2(\R^d)$ denote the set of probability laws on $\R^d$ with finite second moment.
The relative entropy  of $P$ with respect to $Q$  is $\KL{P}{Q}=\int\log(\dd P/\dd Q)\,\dd P$ when $P\ll Q$,
and is $+\infty$ otherwise. For a coupling $\pi\in\Pi(\mu,\nu)$, the mutual information is $I_\pi(X;Y)=\mathrm{KL}(\pi\|\mu\otimes\nu)$.
We also write $(X,Y)\in\Pi(\mu,\nu)$ when the joint law of $(X,Y)$ belongs to this
coupling class. Throughout the paper, logarithms are natural.

Fix a probability law $\mu$ on $\R^d$ with mean $m$ and covariance $\Sigma$. A self-coupling is a pair $(X,Y)$ whose marginals both equal $\mu$. The choice lies in their dependence: independence requires no mutual information, while bringing the two copies into closer agreement generally requires more. For quadratic transport cost, this tradeoff is particularly clear because
\begin{equation}\label{eq:self-cost-correlation}
 \E\norm{X-Y}^2
 =2\Tr\Sigma-2\E\ip{X-m}{Y-m}.
\end{equation}
Thus, minimizing transport cost under an information budget amounts to
maximizing the correlation between two copies of the same law.

For a jointly Gaussian scalar pair with marginal variance $\sigma^2$ and correlation coefficient $s\in[0,1)$, the mutual information is $-\frac12\log(1-s^2)$. Consequently, the largest centered covariance achievable at information budget $R$ is $\sigma^2\rho(R)$, where
\begin{equation}\label{eq:rho-definition}
 \rho(r):=\sqrt{1-e^{-2r}},\qquad r\ge0.
\end{equation}
Our first result shows that every law with the same covariance can attain or exceed this Gaussian benchmark. Its matrix form accommodates different correlation requirements in different directions. We write $\Sym H=(H+H^\top)/2$ and use $\preceq$ for the Loewner order.

\begin{theorem}[The self-coupling inequality]
\label{thm:matrix-self-coupling}
Let $\mu\in \mcP_2(\R^d)$ have mean $m$ and covariance $\Sigma\succ0$.  For every symmetric
$S$ satisfying $0\preceq S\prec\Sigma$, define
$\tilde{S}=\Sigma^{-1/2}S\Sigma^{-1/2}$.  Then
\begin{equation}\label{eq:matrix-coupling-result}
 \min_{\substack{(X,Y)\in\Pi(\mu,\mu)\\
       \Sym\E[(X-m)(Y-m)^\top]\succeq S}}
 I(X;Y)
 \le -\frac12\log\det(I_d-\tilde{S}^2).
\end{equation}
If $\mu=\mcN(m,\Sigma)$, equality holds, and a jointly Gaussian pair with cross-covariance $S$ attains the minimum. If $\mu$ is non-Gaussian, the inequality is strict whenever $0\prec S\prec\Sigma$.
\end{theorem}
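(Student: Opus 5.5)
Since mutual information is invariant under invertible affine maps applied to each coordinate, I will first normalize. Replacing $(X,Y)$ by $(\Sigma^{-1/2}(X-m),\Sigma^{-1/2}(Y-m))$ turns $\mu$ into a law with mean $0$ and covariance $I_d$, and turns the constraint into $\Sym\E[XY^\top]\succeq\tilde S$. So I may take $m=0$, $\Sigma=I_d$, $S=\tilde S$, and write $\gamma=\mcN(0,I_d)$.

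\textbf{Existence, Gaussian case, and duality.} Existence of the minimum follows from three facts: $\Pi(\mu,\mu)$ is tight, the constraint is closed under weak convergence because second moments are fixed and uniformly integrable, and $I$ is weakly lower semicontinuous. In the Gaussian case, the jointly Gaussian pair with cross-covariance $S$ attains $-\frac12\log\det(I-S^2)$. The matching lower bound is the standard Gaussian rate-distortion fact that jointly Gaussian pairs minimize $I$ given second moments.

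For the inequality I will pass to the convex dual. Since $\pi\mapsto I_\pi$ is convex and the constraint is linear in $\pi$, I expect
\[
\min I \;=\;\sup_{\Lambda\succeq0}\Bigl[\Tr(\Lambda S)+\inf_{\pi\in\Pi(\mu,\mu)}\bigl(I_\pi(X;Y)-\E_\pi[X^\top\Lambda Y]\bigr)\Bigr],
\]
with a Slater point available because $S\prec I$ (for instance, a suitable mixture with the identity coupling). It then suffices to show that for each $\Lambda$ the inner infimum for $\mu$ is at most the inner infimum for $\gamma$, strictly when relevant.

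\textbf{Gaussian change of reference.} Let $G_C$ be the centered Gaussian on $\R^{2d}$ with standard normal marginals and cross-covariance $C$, where $0\preceq C\prec I$. Its log-density relative to $\gamma\otimes\gamma$ is $x^\top\Lambda y$ plus quadratic terms in $x$ alone and in $y$ alone plus a constant, with $\Lambda=C(I-C^2)^{-1}$. Every $\Lambda\succeq0$ arises this way. The marginal quadratic terms have the same expectation under every $\pi\in\Pi(\mu,\mu)$, because the second moments of $\mu$ and $\gamma$ agree. Using $I_\pi=\KL{\pi}{\gamma\otimes\gamma}-2\KL{\mu}{\gamma}$, I obtain
\[
I_\pi-\E_\pi[X^\top\Lambda Y]=\KL{\pi}{G_C}-2\KL{\mu}{\gamma}+c_\Lambda ,
\]
where the constant $c_\Lambda$ depends only on $\Lambda$. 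For $\mu=\gamma$ the infimum of the right-hand side equals $c_\Lambda$, attained at $\pi=G_C$. So the task reduces to the key bound
\[
\inf_{\pi\in\Pi(\mu,\mu)}\KL{\pi}{G_C}\le 2\KL{\mu}{\gamma}-\delta ,\qquad \delta\ge0 .
\]

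\textbf{Midpoint gluing.} Set $A=C^{1/2}$ and let $K$ be the OU kernel $z=Ax+(I-A^2)^{1/2}W$ with $W\sim\gamma$ independent of $x$. Then $G_C$ is the law of two endpoints that are conditionally independent given a common midpoint $Z\sim\gamma$, each endpoint drawn from the reverse kernel of $\gamma$ through $K$.

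I will build $\pi$ the same way from $\mu$. Draw $X\sim\mu$ and $Z\sim X$ pushed through $K$, so that $Z\sim\mu K$. Then draw $Y$ from the reverse kernel of $\mu$ given $Z$, conditionally independent of $X$. This makes $Y\sim\mu$, so $\pi\in\Pi(\mu,\mu)$.

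By the chain rule on the triple $(X,Z,Y)$, the forward half contributes $\KL{\mu}{\gamma}$, since the kernels agree. The backward half contributes $\KL{\mu}{\gamma}-\KL{\mu K}{\gamma}$, by the chain rule applied to the joint law of $(Y,Z)$ in both orders. Projecting onto $(X,Y)$ can only decrease relative entropy, so the key bound holds with $\delta=\KL{\mu K}{\gamma}$. This proves the inequality.

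\textbf{Strictness, the main obstacle.} Suppose $0\prec S$ and $\mu$ is non-Gaussian. I need the dual supremum to be attained at some $\Lambda^*$ with $C^*\succ0$. I would first try to get this from the Gaussian optimality conditions: the Gaussian dual optimizer should be $\Lambda^*=S(I-S^2)^{-1}\succ0$, i.e. $C^*=S$. Then $A\succ0$ is invertible, so $\mu K$ is $\mu$ pushed forward by an invertible linear map and convolved with a Gaussian. Because the Gaussian characteristic function never vanishes, this operation is injective, so $\mu K\neq\gamma$ whenever $\mu\neq\gamma$. Since $\mu$ has covariance $I$ and is non-Gaussian, $\mu\neq\gamma$, and therefore $\delta>0$.

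The delicate point is turning this gap at a single $\Lambda$ into a strict gap for the sup-inf. I would handle it directly: evaluate the primal constraint at the glued coupling for $\Lambda^*$, then perturb $C$ slightly upward so that the constraint $\Sym\E[XY^\top]\succeq S$ is met exactly. Continuity of $\KL{\mu K}{\gamma}$ in $C$ should keep the strict gap through this perturbation.
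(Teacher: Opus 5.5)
Your proof of the non-strict inequality follows the paper's route: duality, the correlated reference $Q_S$ with $M_S=S(I_d-S^2)^{-1}$, midpoint gluing with $T=S^{1/2}$, and the path accounting $2\KL{\mu}{\gamma}-\KL{\mu K_T}{\gamma}$. The strictness step, however, has a genuine gap. The midpoint gain is a gain in the Lagrangian, not in mutual information. Write $j(C)=-\frac12\log\det(I_d-C^2)$, $\delta_C=\KL{\mu K_{C^{1/2}}}{\gamma}$, and $H_C=\Sym\E_{\pi_C}[XY^\top]$ for the glued coupling $\pi_C$. Then all your construction certifies is
\[
 I_{\pi_C}(X;Y)\le j(C)+\Tr\bigl(M_C(H_C-C)\bigr)-\delta_C .
\]
Since $H_C=\E\bigl[\E[X\mid Z]\E[X\mid Z]^\top\bigr]\succeq C$, the coupling at $C=S$ is already feasible. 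The upward perturbation is therefore unnecessary and only raises $j(C)$. The real obstruction is that the excess-correlation term need not be dominated by $\delta_C$. For a smooth standardized scalar law with $\E X^3=\kappa_3\neq0$ and small $s$, a Hermite--Mehler expansion gives, to leading order, $H_s-s\approx\kappa_3^2s^2/2$ and $\delta_s\approx\kappa_3^2s^3/12$. So $M_s(H_s-s)\approx6\delta_s$, and the bound exceeds $j(s)$. Exploiting the gain primally would require lowering the strength and balancing rates, as in the construction behind \Cref{thm:marked-bridge}. Nor does a gap at $M_S$ alone help: $M_S$ is the Gaussian dual optimizer, not necessarily an optimizer of the dual for $\mu$.

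The missing idea is to stay in the dual and argue by contradiction. Your change-of-reference formula with $\pi=Q_S$ and reference $Q_B$ gives the exact Gaussian dual gap
\[
 J_\gamma(S)-\Tr(M_BS)+\Lambda_\gamma(M_B)=\KL{Q_S}{Q_B}.
\]
Suppose $J_\mu(S)=J_\gamma(S)$. Then any $\epsilon$-optimal multiplier $M_{B_\epsilon}$ for $\mu$ satisfies $\KL{Q_S}{Q_{B_\epsilon}}+\KL{\mu K_{B_\epsilon^{1/2}}}{\gamma}\le\epsilon$. The first term forces $B_\epsilon\to S$. Lower semicontinuity of relative entropy then gives $\KL{\mu K_{S^{1/2}}}{\gamma}=0$; continuity is not needed. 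This contradicts your characteristic-function argument.

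Two smaller repairs are also needed.
\begin{enumerate}
\item The identity $I_\pi=\KL{\pi}{\gamma\otimes\gamma}-2\KL{\mu}{\gamma}$ becomes $\infty-\infty$ when $\KL{\mu}{\gamma}=\infty$, for example for discrete $\mu$, which the theorem allows. The paper handles this by proving the comparison, including its gain, for the smoothed laws $\mu_n=\mu K_{c_nI_d}$ with $c_n\uparrow1$. It then passes to the limit using tightness, uniform integrability of $X_iY_j$, and joint lower semicontinuity.
\item A mixture with the identity coupling is not a Slater point when $\mu$ is continuous. It is not absolutely continuous with respect to $\mu\otimes\mu$, so its mutual information is infinite. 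The duality formula still holds: the paper derives it from lower semicontinuity of $J_\nu$ and Fenchel--Moreau. Alternatively, your own glued coupling with $S\prec T^2\prec I_d$ is a valid Slater point, since $I(X;Y)\le I(X;Z)\le j(T)<\infty$ and its cross-moment is at least $T^2$.
\end{enumerate}
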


The proof of Theorem \ref{thm:matrix-self-coupling} uses a midpoint gluing construction. We pass both endpoint laws through the same Gaussian smoothing kernel, bringing them to a common intermediate law. The resulting half-bridges can then be joined by sampling their endpoints conditionally on the shared midpoint. The entropy of the midpoint provides a gain in the comparison and yields strictness for non-Gaussian laws when the target is positive definite. The details can be found in \Cref{sec:midpoint-proof} and \Cref{subsec:midpoint-entropy}.

\begin{corollary}[Scalar self-correlation]
\label{cor:scalar-self-correlation}
\label{cor:strict-scalar-self-coupling}
Let $\mu$ be a law on $\R$ with mean $m$ and variance $\sigma^2>0$.
For every finite $R\ge0$, there exists $(X,Y)\in\Pi(\mu,\mu)$ such that
\begin{equation}\label{eq:scalar-self-correlation}
 I(X;Y)\le R,
 \qquad \E[(X-m)(Y-m)]\ge\sigma^2\rho(R).
\end{equation}
The maximum achievable covariance equals $\sigma^2\rho(R)$ for Gaussian
$\mu$.  If $R>0$ and $\mu$ is non-Gaussian, there exists a self-coupling satisfying
\begin{equation}\label{eq:strict-scalar-self-correlation}
 I(X;Y)\le R,
 \qquad \E[(X-m)(Y-m)]>\sigma^2\rho(R).
\end{equation}
\end{corollary}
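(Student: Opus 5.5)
We will derive the corollary from \Cref{thm:matrix-self-coupling} with $d=1$, together with a continuity and monotonicity argument in the target $S$. For $d=1$, write $\Sigma=\sigma^2$ and take a scalar target $S=s\sigma^2$ with $s\in[0,1)$, so that $\tilde S=s$. The theorem then says there is a self-coupling with $\E[(X-m)(Y-m)]\ge s\sigma^2$ and $I(X;Y)\le-\frac12\log(1-s^2)$. The minimum in \eqref{eq:matrix-coupling-result} is stated to be attained, so we may use such a coupling directly.

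Given a finite $R\ge0$, choose $s=\rho(R)=\sqrt{1-e^{-2R}}\in[0,1)$. Then $-\frac12\log(1-s^2)=R$, and the theorem yields \eqref{eq:scalar-self-correlation} at once. The case $R=0$ is covered by $s=0$, where the independent coupling works.

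For Gaussian $\mu$, the equality statement of the theorem shows the bound is sharp. We argue by contradiction. Suppose some self-coupling had $I\le R$ and covariance $c>\sigma^2\rho(R)$. If $c\ge\sigma^2$, then $X=Y$ almost surely, which forces $I=\infty$, so we may assume $c<\sigma^2$. That coupling is feasible for the target $S=c$, so the Gaussian equality gives $I\ge-\frac12\log(1-c^2/\sigma^4)>R$, a contradiction.

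For strictness with $R>0$ and non-Gaussian $\mu$, take $s=\rho(R)\in(0,1)$, so that $0\prec S\prec\Sigma$. The strict part of the theorem gives a coupling with $I(X;Y)=R'<R$ and covariance at least $s\sigma^2$. This has the right information budget but not yet a strictly larger covariance. To fix this, apply the theorem again with a slightly larger target $s'\in(s,1)$ chosen so that $-\frac12\log(1-s'^2)\le R$. A cleaner way to organize this is through the value function $V(s)=\min\{I:\text{cov}\ge s\sigma^2\}$. $V$ is nondecreasing in $s$. It is also convex, because mixtures of couplings (via a time-sharing variable that is then forgotten) keep the marginals, mix the covariances linearly, and $I$ is convex in the joint law for fixed marginals. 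A finite convex function on $[0,1)$ is continuous on the open interval. Since $V(\rho(R))<R$, continuity gives some $s'>\rho(R)$ with $V(s')<R$. Any minimizer at $s'$ then has $I<R$ and covariance $\ge s'\sigma^2>\sigma^2\rho(R)$, which is \eqref{eq:strict-scalar-self-correlation}.

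The main obstacle is this last step: turning strictness in the information bound into strictness in the covariance. It needs both the continuity of $V$, obtained from convexity via mixing couplings, and the attainment of the minimum, which the theorem's use of ``$\min$'' provides.
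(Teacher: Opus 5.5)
Your proposal is correct and follows the paper's proof essentially step for step. You apply \Cref{thm:matrix-self-coupling} in dimension one at $s=\rho(R)$ with attainment, read off Gaussian sharpness from the equality case, and obtain strictness from $J_\nu(s)<R$ together with continuity of the convex value function and attainment at a slightly larger $s'$. One slip should be deleted: the sentence proposing to ``apply the theorem again'' with $s'>s$ and $-\frac12\log(1-s'^2)\le R$ asks for something impossible, since $s'>\rho(R)$ forces $-\frac12\log(1-s'^2)>R$. It is the continuity of the value function, not the Gaussian bound, that supplies $s'$.
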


Corollary \ref{cor:scalar-self-correlation}  implies that
 among laws with the same variance, the Gaussian requires the most mutual information to achieve a prescribed positive self-correlation. Its proof is  given at the end of
\Cref{subsec:midpoint-entropy}.


Two immediate consequences of Theorem \ref{thm:matrix-self-coupling} follow. First, one can prescribe the full covariance matrix of the transport displacement.

\begin{corollary}[Matrix distortion]
\label{cor:matrix-distortion}
Under the assumptions of \Cref{thm:matrix-self-coupling}, let
$0\prec A\preceq2\Sigma$ and define $\bar A=\Sigma^{-1/2}A\Sigma^{-1/2}$.
Then
\begin{equation}\label{eq:matrix-distortion}
 \min_{\substack{(X,Y)\in\Pi(\mu,\mu)\\
                 \E[(X-Y)(X-Y)^\top]\preceq A}}
 I(X;Y)
 \le -\frac12\log\det\left(\bar A-\frac14\bar A^2\right).
\end{equation}
Equality holds when $\mu$ is Gaussian with covariance $\Sigma$.
\end{corollary}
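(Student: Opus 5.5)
We reduce Corollary~\ref{cor:matrix-distortion} to \Cref{thm:matrix-self-coupling} through the identity relating displacement covariance to cross-covariance. For any $(X,Y)\in\Pi(\mu,\mu)$, writing $C=\E[(X-m)(Y-m)^\top]$, we have
\begin{equation*}
 \E[(X-Y)(X-Y)^\top]=2\Sigma-C-C^\top=2\Sigma-2\Sym C .
\end{equation*}
Hence the constraint $\E[(X-Y)(X-Y)^\top]\preceq A$ is equivalent to $\Sym C\succeq S$ with $S:=\Sigma-\tfrac12 A$. The hypothesis $0\prec A\preceq 2\Sigma$ gives $0\preceq S\prec\Sigma$, which is exactly the range allowed in \Cref{thm:matrix-self-coupling}. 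So the two minimization problems have the same feasible set and the same objective, and the minima coincide. The corollary's bound therefore follows from \eqref{eq:matrix-coupling-result} once we rewrite the right-hand side.

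For the rewriting, compute $\tilde S=\Sigma^{-1/2}S\Sigma^{-1/2}=I_d-\tfrac12\bar A$. Since $\tilde S$ is a polynomial in $\bar A$, the two commute, and
\begin{equation*}
 I_d-\tilde S^2=I_d-\Bigl(I_d-\tfrac12\bar A\Bigr)^2=\bar A-\tfrac14\bar A^2 .
\end{equation*}
Thus $-\tfrac12\log\det(I_d-\tilde S^2)=-\tfrac12\log\det(\bar A-\tfrac14\bar A^2)$. This determinant is positive because the eigenvalues $a$ of $\bar A$ lie in $(0,2]$, so $a-a^2/4>0$.

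For the equality case, take $\mu=\mcN(m,\Sigma)$. \Cref{thm:matrix-self-coupling} then gives equality in \eqref{eq:matrix-coupling-result}, and the equivalence of feasible sets transfers that equality to \eqref{eq:matrix-distortion}. The attaining pair is jointly Gaussian with cross-covariance $S=\Sigma-A/2$.

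The argument has no genuine obstacle. The only points needing care are the endpoint conventions: $A=2\Sigma$ corresponds to $S=0$, which is still allowed, and strict positivity $A\succ0$ is what ensures $S\prec\Sigma$. Both are covered by the theorem's hypotheses.
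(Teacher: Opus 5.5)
Your proposal is correct and follows essentially the same route as the paper: you rewrite the displacement covariance as $2\Sigma-2\Sym\E[(X-m)(Y-m)^\top]$, apply \Cref{thm:matrix-self-coupling} with $S=\Sigma-A/2$, and use $I_d-(I_d-\bar A/2)^2=\bar A-\bar A^2/4$. Your explicit checks that $0\preceq S\prec\Sigma$ and that the determinant is positive are details the paper leaves implicit.
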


\begin{proof}
Since both marginals have covariance $\Sigma$, 
\[
 \E[(X-Y)(X-Y)^\top]
 =2\bigl(\Sigma-\Sym\E[(X-m)(Y-m)^\top]\bigr).
\]
Apply \Cref{thm:matrix-self-coupling} with $S=\Sigma-A/2$, and note that
$I_d-(I_d-\bar A/2)^2=\bar A-\bar A^2/4$.
\end{proof}

Second, one can allocate a separate information budget to control the self-correlation along each covariance eigenmode.


\begin{corollary}[Spectral rate allocation]
\label{cor:spectral-allocation}
Let $\Sigma=O\diag(\sigma_1^2,\ldots,\sigma_d^2)O^\top$, where $O$ is
orthogonal and $\sigma_i>0$.  For any finite $r_1,\ldots,r_d\ge0$, there
exists $(X,Y)\in\Pi(\mu,\mu)$ such that
\begin{align}
 I(X;Y)&\le\sum_{i=1}^d r_i,\label{eq:spectral-rate}\\
 O^\top\Sym\E[(X-m)(Y-m)^\top]O
 &\succeq\diag\bigl(\sigma_i^2\rho(r_i)\bigr).
 \label{eq:spectral-cross}
\end{align}
In particular,
\begin{equation}\label{eq:spectral-trace-distortion}
 \E\norm{X-Y}^2
 \le 2\sum_{i=1}^d\sigma_i^2\bigl(1-\rho(r_i)\bigr).
\end{equation}
\end{corollary}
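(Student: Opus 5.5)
The plan is to apply \Cref{thm:matrix-self-coupling} with a target $S$ chosen diagonal in the eigenbasis of $\Sigma$. Set
\[
S:=O\diag\bigl(\sigma_i^2\rho(r_i)\bigr)O^\top .
\]
Since $\Sigma^{-1/2}=O\diag(\sigma_i^{-1})O^\top$, we get $\tilde S=O\diag(\rho(r_i))O^\top$. Then $I_d-\tilde S^2=O\diag(1-\rho(r_i)^2)O^\top=O\diag(e^{-2r_i})O^\top$, so
\[
-\tfrac12\log\det(I_d-\tilde S^2)=\sum_i r_i .
\]
This is exactly the budget in \eqref{eq:spectral-rate}. Moreover, the constraint $\Sym\E[(X-m)(Y-m)^\top]\succeq S$ is equivalent, after conjugating by $O$, to \eqref{eq:spectral-cross}.

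The only obstacle is that the theorem requires $S\prec\Sigma$, i.e.\ $\rho(r_i)<1$. Since each $r_i$ is finite, $\rho(r_i)=\sqrt{1-e^{-2r_i}}<1$, so this holds. Also $S\succeq0$ because $\rho\ge0$. The theorem then asserts that the minimum over feasible couplings is at most $\sum_i r_i$. In particular the feasible set is nonempty, and a minimizer (or any coupling whose value does not exceed the bound) gives the required $(X,Y)$. If one worries whether the minimum is attained, it suffices to note that the theorem is stated with $\min$. Alternatively, one can take a near-minimizer and use the strict slack when the bound exceeds the infimum. In the equality case, attainment is exactly what the theorem's $\min$ provides.

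For \eqref{eq:spectral-trace-distortion}, use \eqref{eq:self-cost-correlation}:
\[
\E\norm{X-Y}^2=2\Tr\Sigma-2\Tr\Sym\E[(X-m)(Y-m)^\top].
\]
By \eqref{eq:spectral-cross}, the trace of the symmetrized cross-covariance, which is invariant under conjugation by $O$, is at least $\sum_i\sigma_i^2\rho(r_i)$. Combining this with $\Tr\Sigma=\sum_i\sigma_i^2$ gives the bound.
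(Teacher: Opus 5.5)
Your proposal is correct and follows the paper's proof. The paper also takes $S=O\diag(\sigma_i^2\rho(r_i))O^\top$, notes that $\tilde S$ has eigenvalues $\rho(r_i)$ so the bound in \Cref{thm:matrix-self-coupling} equals $\sum_i r_i$, and derives the distortion bound from \eqref{eq:self-cost-correlation}. You also make explicit two points the paper leaves implicit: that $0\preceq S\prec\Sigma$ holds because each $r_i$ is finite, and that attainment of the minimum (proved in \Cref{app:duality}) supplies the coupling.
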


\begin{proof}
Choose $S=O\diag(\sigma_i^2\rho(r_i))O^\top$.  The eigenvalues of
$\tilde{S}=\Sigma^{-1/2}S\Sigma^{-1/2}$ are $\rho(r_i)$, so the right-hand side
of \eqref{eq:matrix-coupling-result} equals
$-\frac12\sum_i\log(1-\rho(r_i)^2)=\sum_i r_i$. This proves \eqref{eq:spectral-rate} and \eqref{eq:spectral-cross}. The distortion bound \eqref{eq:spectral-trace-distortion} then follows from \eqref{eq:self-cost-correlation}.
\end{proof}

\subsection{A Gaussian reference and the midpoint construction}
\label{sec:midpoint-proof}


Assume that the source has mean $m$ and positive-definite covariance
matrix $\Sigma$.  For any self-coupling $(X,Y)$ of the source law, define
the standardized pair
\[
 \widetilde X:=\Sigma^{-1/2}(X-m),
 \qquad
 \widetilde Y:=\Sigma^{-1/2}(Y-m).
\]
Both $\widetilde X$ and $\widetilde Y$ then have the same law $\nu$,
which has mean zero and covariance $I_d$.  Since the transformation
$x\mapsto \Sigma^{-1/2}(x-m)$ is bijective, it preserves mutual
information:
\[
 I(\widetilde X;\widetilde Y)=I(X;Y).
\]
Moreover, the cross-covariance transforms by congruence:
\[
 \E[\widetilde X\widetilde Y^\top]
 =
 \Sigma^{-1/2}
 \E[(X-m)(Y-m)^\top]
 \Sigma^{-1/2}.
\]
Thus, the original problem is equivalent to its standardized version with source law $\nu$ and Gaussian reference $\gamma=\mathcal N(0,I_d)$. Henceforth, we work in this normalized setting and, for simplicity, relabel $(\widetilde X,\widetilde Y)$ as $(X,Y)$.

For a symmetric matrix $S$, define the information cost
\begin{equation}\label{eq:Jmu}
 J_\nu(S):=
 \inf_{\substack{(X,Y)\in\Pi(\nu,\nu)\\
                  \Sym\E[XY^\top]\succeq S}}
 I(X;Y).
\end{equation}
We aim to show $J_\nu(S)\le J_\gamma(S)$.  
Introducing a positive-semidefinite multiplier $M$ replaces the hard correlation constraint with the reward $\E[X^\top MY]$.
 This leads to the free-energy functional
\begin{equation}\label{eq:free-energy}
 \Lambda_\nu(M):=
 \sup_{(X,Y)\in\Pi(\nu,\nu)}
 \{\E[X^\top MY]-I(X;Y)\},\qquad M\succeq0.
\end{equation}
Cone duality then gives
\begin{equation}\label{eq:fenchel}
 J_\nu(S)=\sup_{M\succeq0}
 \{\Tr(MS)-\Lambda_\nu(M)\}.
\end{equation}
The compactness and conjugacy argument is given in Appendix 
\ref{app:duality}.   It remains to prove 
$\Lambda_\nu(M)\ge\Lambda_\gamma(M)$.

Temporarily assume that $\nu=f\gamma$ and $\KL{\nu}{\gamma}<\infty$. The general case is treated in Appendix \ref{app:smoothing}. For any self-coupling $\pi$ of $\nu$,  the relative entropy chain rule gives
\begin{equation}\label{eq:MI-two-KL}
I_\pi(X;Y)
=\mathrm{KL}(\pi|\gamma\otimes\gamma)-2\mathrm{KL}(\nu|\gamma),
\end{equation}
where $\mathrm{KL}(\nu|\gamma)$ does not depend on $\pi\in\Pi(\nu,\nu)$.
Thus, up to an additive constant, the free energy is the quadratic reward minus  the relative entropy with respect to the product Gaussian law.  We absorb the quadratic reward by introducing a correlated Gaussian reference.


For $0\preceq S\prec I_d$, let $Q_S$ denote the joint law of $(G,\widetilde G)$, where
\begin{equation}\label{eq:gaussian-reference}
 G\sim\gamma,\qquad
 \widetilde G=SG+(I_d-S^2)^{1/2}E,
 \qquad E\sim\gamma\text{ independent of }G.
\end{equation}
The pair $(G,\widetilde G)$ has cross-covariance $S$.  Set
\begin{equation}\label{eq:multiplier}
 M_S=S(I_d-S^2)^{-1}.
\end{equation}
Every $M\succeq0$ can be written in this form for a unique $S$, given by
\begin{equation}\label{eq:S-from-M}
 S=2M\bigl(I_d+(I_d+4M^2)^{\frac{1}{2}}\bigr)^{-1}.
\end{equation}
On each eigenspace, the corresponding eigenvalues $m$ and $s$ satisfy $m=s/(1-s^2)$.

The inverse of the block covariance matrix
$\bigl(\begin{smallmatrix}I_d&S\\S&I_d\end{smallmatrix}\bigr)$ has
diagonal blocks $(I_d-S^2)^{-1}$ and off-diagonal blocks $-M_S$.
Consequently,
\begin{equation}\label{eq:reference-density}
 \log\frac{\dd Q_S}{\dd (\gamma\otimes\gamma)}(x,y)
 =x^\top M_Sy-\frac12x^\top SM_S x-\frac12y^\top SM_Sy
   -\frac12\log\det(I_d-S^2).
\end{equation}
Because both marginals have covariance $I_d$, the endpoint quadratic terms have a fixed expectation:
$\frac12\E[X^\top SM_SX+Y^\top SM_SY]=\Tr(SM_S)$.
Using \eqref{eq:MI-two-KL}, we obtain the exact identity
\begin{equation}\label{eq:change-reference}
 \E_\pi[X^\top M_SY]-I_\pi(X;Y)
 =2\mathrm{KL}(\nu\|\gamma)+\Tr(SM_S)
   +\frac12\log\det(I_d-S^2)-\mathrm{KL}(\pi\|Q_S).
\end{equation}
When the marginals are Gaussian, choosing $\pi=Q_S$ makes the last term zero. Hence,
\begin{equation}\label{eq:gaussian-free-energy-explicit}
 \Lambda_\gamma(M_S)=\Tr(SM_S)+\frac12\log\det(I_d-S^2).
\end{equation}
Taking the supremum in \eqref{eq:change-reference} now yields
\begin{equation}\label{eq:free-energy-exact-gap}
 \Lambda_\nu(M_S)-\Lambda_\gamma(M_S)
 =2\mathrm{KL}(\nu\|\gamma)
   -\inf_{\pi\in\Pi(\nu,\nu)}\mathrm{KL}(\pi\|Q_S).
\end{equation}
Thus, the desired comparison reduces to finding a self-coupling whose relative entropy with respect to $Q_S$ is at most the sum of the two marginal relative entropies. The midpoint construction below produces such a coupling. 


For a symmetric $0\preceq T\preceq I_d$, define the Gaussian
OU kernel
\begin{equation}\label{eq:OU-kernel-definition}
 K_T(x,\cdot)=\Law\bigl(Tx+(I_d-T^2)^{1/2}E\bigr),
 \qquad E\sim\gamma.
\end{equation}
We write $K_Tf(x)=\int f(z)K_T(x,\dd z)$ for its action on functions and
$\nu K_T$ for its action on probability laws.  The joint Gaussian law
$\gamma(\dd x)K_T(x,\dd z)$ is  symmetric in $x$ and $z$.
Thus, $K_T$ is reversible with invariant law $\gamma$.
For commuting symmetric contractions $T_1,T_2$,
\begin{equation}\label{eq:OU-semigroup}
 K_{T_1}K_{T_2}=K_{T_1T_2}.
\end{equation}
Indeed, composing the kernels combines their independent Gaussian noises, whose total covariance is
$$T_2(I_d-T_1^2)T_2+I_d-T_2^2=I_d-(T_1T_2)^2.$$
When $T=e^{-t}I_d$, the parameter $t$ is the usual OU time.

Set $T=S^{1/2}$.  The reference coupling $Q_S$ can be split into two
equal OU steps:
\[
 \widetilde Q_S(\dd x,\dd z,\dd y)
 =\gamma(\dd x)K_T(x,\dd z)K_T(z,\dd y).
\]
To change both endpoint laws to $\nu$, start with the observation law
$\nu(\dd x)K_T(x,\dd z)$.  Its midpoint marginal is
\begin{equation}\label{eq:midpoint-marginal}
 \nu_T:=\nu K_T=f_T\gamma,\qquad f_T=K_Tf.
\end{equation}
A second copy of the same observation law has the same midpoint
marginal.  We can therefore join the two copies by sampling their endpoints independently  from the same posterior conditional on the shared midpoint $z$.  Bayes' formula and
reversibility yield the posterior law of an endpoint given $z$:
\begin{equation}\label{eq:posterior-half-bridge}
 L_T^\nu(z,\dd x)=\frac{f(x)}{f_T(z)}K_T(z,\dd x).
\end{equation}
The resulting path law is
\begin{equation}\label{eq:midpoint-generative-law}
 P_S^\nu(\dd x,\dd z,\dd y)
 =\nu(\dd x)K_T(x,\dd z)L_T^\nu(z,\dd y)
 =\nu_T(\dd z)L_T^\nu(z,\dd x)L_T^\nu(z,\dd y).
\end{equation}
These two factorizations describe equivalent ways to sample the construction: draw one endpoint and then sample the other, or draw the shared midpoint and then draw two conditionally independent endpoints from the posterior. In either description, $X-Z-Y$ is a Markov chain, and both endpoints have law $\nu$. See Fig. \ref{fig:midpoint-gluing} for an illustration.

\begin{figure}[t]
\centering
\begin{tikzpicture}[
 scale=1.0,
 state/.style={draw,circle,minimum size=10mm,fill=blue!5},
 midpoint/.style={state,fill=orange!12},
 obs/.style={-{Latex[length=2mm]},thick},
 rev/.style={-{Latex[length=2mm]},thick,dashed},
 lab/.style={font=\small,align=center}]
\node[state] (wl) at (0,0) {$X$};
\node[midpoint] (zl) at (3.7,0) {$Z_1$};
\node[midpoint] (zr) at (7,0) {$Z_2$};
\node[state] (yr) at (10.7,0) {$Y$};
\draw[obs] (wl) -- node[above,lab] {OU observation $K_T$} (zl);
\draw[obs] (yr) -- node[above,lab] {OU observation $K_T$} (zr);
\node[lab,below=3mm of wl] {$\nu$};
\node[lab,below=3mm of yr] {$\nu$};
\node[lab] at (5.35,-0.75) {same midpoint law\quad $\nu_T$};
\draw[decorate,decoration={brace,mirror,amplitude=4pt}]
 (3.3,-1.15) -- (7.4,-1.15);
\node[lab] at (5.35,-1.75) {glue conditionally over $\nu_T$};
\node[state] (w) at (1.65,-3) {$X$};
\node[midpoint] (z) at (5.35,-3) {$Z$};
\node[state] (y) at (9.05,-3) {$Y$};
\draw[obs] (w) -- node[above,lab] {$K_T$} (z);
\draw[rev] (z) -- node[above,lab] {Bayes reverse $L_T^\nu$} (y);
\node[lab,below=3mm of w] {$\nu$};
\node[lab,below=3mm of z] {$\nu_T$};
\node[lab,below=3mm of y] {$\nu$};
\end{tikzpicture}
\caption{The top row shows the two observation laws, one from each endpoint; they have the same midpoint marginal. Conditional gluing yields the Markov chain shown in the bottom row. Its first transition is an OU observation, and its second is the Bayes reverse of that observation under $\nu$.}
\label{fig:midpoint-gluing}
\end{figure}

Taking the Radon–Nikodym derivative of \eqref{eq:midpoint-generative-law} with respect to the Gaussian reference path law gives
\begin{equation}\label{eq:tilted-bridge}
 \frac{\dd P_S^\nu}{\dd \widetilde Q_S}(x,z,y)
 =\frac{f(x)f(y)}{f_T(z)}.
\end{equation}
The denominator arises from the two posterior half-bridges, each of which contributes a factor $f_T(z)^{-1}$; sampling the shared midpoint from $f_T\gamma$ restores one factor of $f_T(z)$.

\subsection{The entropy bound and Gaussian comparison}
\label{subsec:midpoint-entropy}

The three marginals of the constructed path are known, so its
relative entropy can be evaluated directly from
\eqref{eq:tilted-bridge}. Projecting onto the endpoints then gives
the bound needed in \eqref{eq:free-energy-exact-gap}.
\begin{lemma}[Midpoint entropy bound]
\label{lem:ou-midpoint}
Suppose $\nu=f\gamma$ and $\mathrm{KL}(\nu\|\gamma)<\infty$.  For
$0\preceq S\prec I_d$ and $T=S^{1/2}$, the endpoint law
$\pi_S^\nu$ of \eqref{eq:midpoint-generative-law} belongs to
$\Pi(\nu,\nu)$ and satisfies
\begin{equation}\label{eq:midpoint-bound}
 \mathrm{KL}(\pi_S^\nu\|Q_S)
 \le 2\mathrm{KL}(\nu\|\gamma)-\mathrm{KL}(\nu K_T\|\gamma).
\end{equation}
\end{lemma}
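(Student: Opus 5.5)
We compute the relative entropy of the full path law $P_S^\nu$ against $\widetilde Q_S$ and then project onto the endpoints. Three facts are needed.

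First, $\pi_S^\nu\in\Pi(\nu,\nu)$. This is immediate from the first factorization in \eqref{eq:midpoint-generative-law}. The $x$-marginal is $\nu$ by construction. The $z$-marginal is $\nu K_T=\nu_T$. Since $L_T^\nu$ is the Bayes reverse of $K_T$ under $\nu$, we have $\nu_T L_T^\nu=\nu$, so the $y$-marginal is also $\nu$. To check this identity, reversibility of $K_T$ with respect to $\gamma$ gives $\gamma(\dd z)K_T(z,\dd x)=\gamma(\dd x)K_T(x,\dd z)$. Integrating $f_T(z)\,\frac{f(x)}{f_T(z)}K_T(z,\dd x)\gamma(\dd z)$ over $z$ then yields $f(x)\gamma(\dd x)=\nu(\dd x)$. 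The same computation verifies \eqref{eq:posterior-half-bridge}, since $\int K_T(z,\dd x)=1$ shows $L_T^\nu(z,\cdot)$ is a probability measure.

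Second, we use the density \eqref{eq:tilted-bridge}:
\[
\mathrm{KL}(P_S^\nu\|\widetilde Q_S)=\E_{P_S^\nu}\bigl[\log f(X)+\log f(Y)-\log f_T(Z)\bigr].
\]
The marginals of $X,Y,Z$ under $P_S^\nu$ are $\nu,\nu,\nu_T$, so this equals $2\mathrm{KL}(\nu\|\gamma)-\mathrm{KL}(\nu_T\|\gamma)$. Here the endpoint $(x,y)$-marginal of $\widetilde Q_S$ is $Q_S$, because $K_TK_T=K_{T^2}=K_S$ by \eqref{eq:OU-semigroup}. The main technical point is integrability: splitting the expectation requires each term to be finite. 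Since $\mathrm{KL}(\nu\|\gamma)<\infty$, the positive parts of $\log f(X)$ and $\log f(Y)$ are integrable. By the data-processing inequality under $K_T$, $\mathrm{KL}(\nu_T\|\gamma)\le\mathrm{KL}(\nu\|\gamma)<\infty$, so $\log f_T(Z)$ is integrable under $\nu_T$. Its negative part is controlled by $\int f_T\log^- f_T\,\dd\gamma\le e^{-1}$. Hence all three terms are finite and linearity applies. Also $f_T>0$ everywhere when $T\prec I_d$, because the Gaussian kernel has full support; this holds since $S\prec I_d$. So \eqref{eq:tilted-bridge} is well defined $\widetilde Q_S$-a.e.

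Third, we project onto $(x,y)$. The data-processing (chain rule) inequality for relative entropy under marginalization gives
\[
\mathrm{KL}(\pi_S^\nu\|Q_S)\le\mathrm{KL}(P_S^\nu\|\widetilde Q_S),
\]
and combining this with the second step yields \eqref{eq:midpoint-bound}. The gap in this inequality is the conditional relative entropy of $Z$ given $(X,Y)$, which we discard. Some care is needed at the boundary case $S$ singular, where $T$ has a kernel. There $K_T$ still has full-support Gaussian noise, since $I_d-T^2\succ0$ holds even when $T$ is singular. The same argument therefore goes through unchanged, with $Q_S$ a possibly product-like Gaussian.
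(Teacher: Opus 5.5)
Your proposal is correct and follows the paper's proof. It uses the same three steps: the marginal identities from the factorization \eqref{eq:midpoint-generative-law}, evaluation of $\mathrm{KL}(P_S^\nu\|\widetilde Q_S)$ through the density $f(x)f(y)/f_T(z)$, and projection onto $(X,Y)$ by data processing with $K_TK_T=K_S$; your integrability remarks via $\mathrm{KL}(\nu_T\|\gamma)\le\mathrm{KL}(\nu\|\gamma)$ just spell out the paper's one-line finiteness claim. One small correction: $L_T^\nu(z,\cdot)$ has total mass one because $f_T=K_Tf$, not because $\int K_T(z,\dd x)=1$.
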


\begin{proof}
The factorization in \eqref{eq:midpoint-generative-law} gives endpoint
marginals $\nu$ and midpoint marginal $\nu_T$.
The semigroup identity sends $\widetilde Q_S$ to $Q_S$ upon projection
onto $(X,Y)$.  Thus, the data processing inequality and the three marginal identities
yield
\begin{align*}
 \mathrm{KL}(\pi_S^\nu\|Q_S)
 &\le \mathrm{KL}(P_S^\nu\|\widetilde Q_S)\\
 &=\E_{P_S^\nu}[\log f(X)+\log f(Y)-\log f_T(Z)]\\
 &=2\mathrm{KL}(\nu\|\gamma)-\mathrm{KL}(\nu_T\|\gamma).
\end{align*}
All terms are finite because relative entropy contracts under the
kernel $K_T$.
\end{proof}

The relative entropy of each endpoint is counted once, while that of the shared midpoint is subtracted once. This midpoint term provides precisely the gain needed in \eqref{eq:free-energy-exact-gap}.

\begin{proposition}[Free-energy comparison]
\label{prop:free-energy}
Let $\nu$ have mean zero and covariance $I_d$.  For every $M\succeq0$,
let $S$ be given by \eqref{eq:S-from-M}.  Then
\begin{equation}\label{eq:free-energy-comparison}
 \Lambda_\nu(M)
 \ge\Lambda_\gamma(M)+\mathrm{KL}(\nu K_{S^{1/2}}\|\gamma)
 \ge\Lambda_\gamma(M).
\end{equation}
\end{proposition}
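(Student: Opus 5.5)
The argument splits into two cases: a density case, where the exact identity \eqref{eq:free-energy-exact-gap} is combined with \Cref{lem:ou-midpoint}, and a general case handled by approximation.

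\emph{Density case.} First assume $\nu=f\gamma$ with $\KL{\nu}{\gamma}<\infty$. Given $M\succeq0$, let $S$ be defined by \eqref{eq:S-from-M}. Then $0\preceq S\prec I_d$ and $M=M_S$, so \eqref{eq:free-energy-exact-gap} applies:
\[
 \Lambda_\nu(M)-\Lambda_\gamma(M)=2\KL{\nu}{\gamma}-\inf_{\pi\in\Pi(\nu,\nu)}\KL{\pi}{Q_S}.
\]
We bound the infimum by evaluating at the coupling $\pi_S^\nu$ built in \Cref{lem:ou-midpoint}. By \eqref{eq:midpoint-bound},
\[
 \inf_\pi\KL{\pi}{Q_S}\le 2\KL{\nu}{\gamma}-\KL{\nu K_{S^{1/2}}}{\gamma}.
\]
Every term is finite, because relative entropy contracts under $K_T$. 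The $2\KL{\nu}{\gamma}$ terms therefore cancel and give the first inequality. The second inequality holds because relative entropy is nonnegative.

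A small check is needed first: \eqref{eq:change-reference} requires $\E_\pi[X^\top M_S Y]$ to be finite. This holds by Cauchy--Schwarz, since both marginals have second moment $d$. We must also confirm that $\Lambda_\gamma(M_S)$ is given by \eqref{eq:gaussian-free-energy-explicit}. For this, apply \eqref{eq:change-reference} with $\nu=\gamma$: it gives $\Lambda_\gamma\le\Tr(SM_S)+\frac12\log\det(I_d-S^2)$, and the choice $\pi=Q_S$ attains this bound.

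\emph{General case.} If $\KL{\nu}{\gamma}=\infty$, or if $\nu$ has no density, we approximate, following the smoothing scheme of Appendix~\ref{app:smoothing}. The natural approximants are $\nu^{(\epsilon)}=\nu K_{(1-\epsilon)I}$, suitably rescaled to keep covariance $I_d$. Alternatively, one can use $\nu^{(\epsilon)}=\Law\bigl(\sqrt{1-\epsilon}\,X+\sqrt{\epsilon}\,G\bigr)$, which has mean zero, covariance $I_d$, and finite relative entropy with respect to $\gamma$, because the Gaussian convolution has a bounded smooth density and finite second moment. Applying the density case to each $\nu^{(\epsilon)}$ gives
\[
 \Lambda_{\nu^{(\epsilon)}}(M)\ge\Lambda_\gamma(M)+\KL{\nu^{(\epsilon)}K_{S^{1/2}}}{\gamma}.
\]
Two limits then remain as $\epsilon\to0$.

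\textbf{Step 1: transfer couplings to $\nu$.} We need $\liminf_{\epsilon}\Lambda_{\nu^{(\epsilon)}}(M)\le\Lambda_\nu(M)$. A coupling of $\nu^{(\epsilon)}$ with itself is not a coupling of $\nu$, so it cannot be used directly. The plan is to take a near-optimal coupling $\pi_\epsilon$ of $\nu^{(\epsilon)}$ and extract a weakly convergent subsequence, which is possible by tightness from the uniform second moments. The limit $\pi$ lies in $\Pi(\nu,\nu)$. Mutual information is weakly lower semicontinuous, so $I_\pi\le\liminf I_{\pi_\epsilon}$. The reward $\E[X^\top MY]$ converges under weak convergence together with uniformly integrable second moments, and the approximants supply that uniform integrability. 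This yields $\Lambda_\nu(M)\ge\limsup\Lambda_{\nu^{(\epsilon)}}(M)$.

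\textbf{Step 2: pass the midpoint term to the limit.} We need $\KL{\nu K_T}{\gamma}\le\liminf\KL{\nu^{(\epsilon)}K_T}{\gamma}$. This follows from lower semicontinuity of relative entropy, because $\nu^{(\epsilon)}K_T\to\nu K_T$ weakly.

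The main obstacle is Step 1, specifically the uniform integrability of $X^\top MY$ along the chosen subsequence; I expect to handle it by simply citing the approximation argument of Appendix~\ref{app:smoothing}. Once both limits are in place, the density-case inequality passes to $\nu$, giving $\Lambda_\nu(M)\ge\Lambda_\gamma(M)+\KL{\nu K_{S^{1/2}}}{\gamma}$. This covers every law $\nu$ with mean zero and covariance $I_d$.
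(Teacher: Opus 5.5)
Your proposal is correct and follows the paper's route. The finite-entropy case comes from substituting \eqref{eq:midpoint-bound} into \eqref{eq:free-energy-exact-gap}. The general case uses the same covariance-preserving smoothing $\nu K_{cI_d}$, near-optimal couplings, tightness, uniform integrability of the reward, and lower semicontinuity of relative entropy; no rescaling is actually needed, since $\nu K_{cI_d}$ already has covariance $I_d$. The only cosmetic difference is that the paper passes to the limit in $I_{\pi_n}+\KL{\nu_nK_T}{\gamma}$ jointly, whereas you take the two liminfs separately, which works equally well.
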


\begin{proof}
When $\mathrm{KL}(\nu\|\gamma)<\infty$, substituting
\eqref{eq:midpoint-bound} into \eqref{eq:free-energy-exact-gap} yields the desired comparison.
For a general law, \Cref{app:smoothing} extends 
 the argument using covariance-preserving Gaussian smoothing and the joint lower semicontinuity of mutual information and midpoint relative entropy.
\end{proof}

\begin{proof}[Proof of \Cref{thm:matrix-self-coupling}]
    Returning to the correlation constraint reverses the direction of the comparison.   Equations \eqref{eq:fenchel} and
\eqref{eq:free-energy-comparison} yield
\begin{equation}\label{eq:J-order}
 J_\nu(S)\le J_\gamma(S).
\end{equation}
To evaluate the Gaussian side, \eqref{eq:reference-density} gives, for
every coupling $\pi\in\Pi(\gamma,\gamma)$,
\begin{equation}\label{eq:gaussian-KL-identity}
 I_\pi(X;Y)
 =-\frac12\log\det(I_d-S^2)
 +\Tr\!\left(M_S\bigl(\Sym\E_\pi[XY^\top]-S\bigr)\right)
 +\mathrm{KL}(\pi\|Q_S).
\end{equation}
If $\Sym\E_\pi[XY^\top]\succeq S$, the trace and relative-entropy terms are nonnegative.  The coupling $Q_S$ makes both terms zero.  Thus,
\begin{equation}\label{eq:gaussian-cost}
 J_\gamma(S)=-\frac12\log\det(I_d-S^2),
 \qquad 0\preceq S\prec I_d.
\end{equation}
Together with the attainment result  in Appendix \ref{app:duality}, this proves the
matrix comparison in \Cref{thm:matrix-self-coupling} after undoing the
affine normalization.

\textbf{Strictness from the midpoint relative entropy gain.}
Fix $0\preceq S\prec I_d$ with $\KL{\nu K_{S^{1/2}}}{\gamma}>0$.
We prove  $J_\nu (S) < J_\gamma (S)$ by contradiction, using a sequence of  approximately optimal multipliers in the dual representation.
Suppose instead that $J_\nu(S)=J_\gamma(S)$. For each $\epsilon>0$, choose
an $\epsilon$-optimal multiplier in \eqref{eq:fenchel}:
\[
 \Tr(M_\epsilon S)-\Lambda_\nu(M_\epsilon)
 \ge J_\nu(S)-\epsilon.
\]
Let $B_\epsilon$ denote the corresponding Gaussian reference correlation, so that
$M_\epsilon=B_\epsilon(I_d-B_\epsilon^2)^{-1}$.
Applying \eqref{eq:gaussian-KL-identity} with coupling $\pi=Q_S$ and reference
$Q_{B_\epsilon}$, and using
\eqref{eq:gaussian-free-energy-explicit}, gives the exact dual gap
\begin{equation}\label{eq:gaussian-dual-gap}
 J_\gamma(S)-\Tr(M_\epsilon S)+\Lambda_\gamma(M_\epsilon)
 =\KL{Q_S}{Q_{B_\epsilon}},
\end{equation}
Together with \eqref{eq:free-energy-comparison}, this yields
\begin{equation}\label{eq:strictness-two-entropies}
 \begin{aligned}
 \epsilon
 &\ge J_\gamma(S)-\Tr(M_\epsilon S)+\Lambda_\nu(M_\epsilon)\\
 &=\KL{Q_S}{Q_{B_\epsilon}}
   +\Lambda_\nu(M_\epsilon)-\Lambda_\gamma(M_\epsilon)\\
 &\ge\KL{Q_S}{Q_{B_\epsilon}}
   +\KL{\nu K_{B_\epsilon^{1/2}}}{\gamma}.
 \end{aligned}
\end{equation}
Thus, both nonnegative divergences tend to zero as $\epsilon\downarrow0$.  By Pinsker's inequality, $Q_{B_\epsilon}$ converges to $Q_S$ in total variation, and hence weakly. 
 Since these Gaussian laws have fixed marginals, their cross-moments are uniformly integrable, as shown in \Cref{app:duality}; it follows that  $B_\epsilon\to S$.
Continuity of the OU observation then gives
$\nu K_{B_\epsilon^{1/2}}\rightharpoonup\nu K_{S^{1/2}}$.
By lower semicontinuity of relative entropy, 
\[
 0<\KL{\nu K_{S^{1/2}}}{\gamma}
 \le\liminf_{\epsilon\downarrow0}
       \KL{\nu K_{B_\epsilon^{1/2}}}{\gamma}=0,
\]
a contradiction. Hence $J_\nu(S)<J_\gamma(S)$ whenever the midpoint relative entropy is positive.

For non-Gaussian $\nu$ and $S\succ0$, the midpoint relative entropy $\KL{\nu K_{S^{1/2}}}{\gamma}$ is strictly positive. Indeed, if $\nu K_{S^{1/2}}=\gamma$, then equality of characteristic functions gives
\[
 \widehat\nu(S^{1/2}t)e^{-t^\top(I_d-S)t/2}
 =e^{-\norm{t}^2/2},
 \qquad
 \widehat\nu(S^{1/2}t)=e^{-t^\top St/2}.
\]
Since $S^{1/2}$ is invertible, it follows that $\nu=\gamma$, a contradiction. 
This proves the strict clause of \Cref{thm:matrix-self-coupling}.
\end{proof}

\begin{proof}[Proof of \Cref{cor:scalar-self-correlation}]
Center and rescale the source to obtain a law $\nu$ with variance one. At $R=0$, independent copies have covariance zero. For $R>0$, let $s=\rho(R)\in(0,1)$. Theorem \ref{thm:matrix-self-coupling} in dimension one gives $J_\nu(s)\le J_\gamma(s)=R$, and attainment of the minimum in \eqref{eq:matrix-coupling-result}  provides the required self-coupling. For the Gaussian law, the maximum covariance is exactly $s$ by \eqref{eq:gaussian-cost}.

If $\nu$ is non-Gaussian, the strict part of Theorem \ref{thm:matrix-self-coupling} gives $J_\nu(s)<R$. Since $J_\nu$ is convex and finite on $(0,1)$, it is continuous there. We can therefore choose $s'\in(s,1)$ sufficiently close to $s$ that $J_\nu(s')<R$. A minimizing coupling for $J_\nu(s')$ has mutual information below $R$ and covariance at least $s'>\rho(R)$. Undoing the centering and rescaling multiplies the covariance by $\sigma^2$, proving \eqref{eq:strict-scalar-self-correlation}.
\end{proof}



\section{Gaussian extremality for rate--distortion--perception}
\label{sec:rdp}

This section establishes the RDP counterpart of the well-known Gaussian extremality result in RD theory, using the self-coupling inequality. Throughout this section, we assume that the source covariance is positive definite.


\subsection{Problem formulation and Gaussian comparison}

For $\mu,\nu\in\mcP_2(\R^d)$, the squared Wasserstein-2 distance is
\begin{equation}\label{eq:W2-definition}
 W_2^2(\mu,\nu)=\inf_{\pi\in\Pi(\mu,\nu)}\E_\pi\norm{X-Y}^2.
\end{equation}
For  $\mu\in\mcP_2(\R^d)$, define
\begin{equation}\label{eq:rdp-unlimited}
 D_\mu(R,P)
 :=\inf_{\substack{X\sim\mu,\ I(X;\widehat X)\le R\\
 W_2^2(\mu,\Law(\widehat X))\le P}}
       \E\norm{X-\widehat X}^2,
 \qquad R,P\ge0.
\end{equation}
The infimum is taken over joint laws of $(X,\widehat X)$ subject to the specified constraints.  This is the quadratic Wasserstein version of the
RDP function introduced by Blau and Michaeli
\cite{BlauMichaeli2019}. At $P=0$, the reconstruction must have
law $\mu$, so \eqref{eq:rdp-unlimited} reduces to the information-constrained
self-coupling problem.

We recall the Gaussian benchmark established by Qian et al. \cite[Theorem~5]{Qian25} and Qu et al.
\cite[Theorem~3]{QuChenYuXu2025}. Let $G=\mcN(m,\Sigma)$ with
$\Sigma\succ0$, and let
$\sigma_1^2,\ldots,\sigma_d^2$ be the eigenvalues of $\Sigma$.  Then
\begin{equation}\label{eq:gaussian-rdp}
\begin{split}
 D_G(R,P)
 =\min_{\substack{r_i\ge0\\ \sum_i r_i\le R}}
 \biggl\{&\sum_{i=1}^d\sigma_i^2e^{-2r_i}+\pos{\sqrt{\sum_{i=1}^d\sigma_i^2(1-\rho(r_i))^2}
             -\sqrt P}^{\,2}\biggr\}.
\end{split}
\end{equation}
In particular \cite[Theorem 1]{Zhang25}, for a scalar Gaussian source with  variance $\sigma^2>0$,
\begin{equation}\label{eq:scalar-rdp}
 D_G(R,P)
 =\sigma^2e^{-2R}
   +\pos{\sigma(1-\rho(R))-\sqrt P}^{\,2}.
\end{equation}
We use this formula  to compare an arbitrary source with the Gaussian source having the same mean and covariance.

It is instructive to begin in one dimension.  Suppose  $\mu$ has mean
$m$ and variance $\sigma^2>0$, and recall that $\rho(r)=\sqrt{1-e^{-2r}}$.
By \Cref{cor:scalar-self-correlation}, there is a self-coupling $(X,Y)$ such that
\[
 I(X;Y)\le R,
 \qquad
 \E[(X-m)(Y-m)]\ge\sigma^2\rho(R).
\]
For $0\le\tau\le1$, define the reconstruction
$\widehat X=m+\tau(Y-m)$.  By the data processing inequality, its information cost is at most $R$. Coupling $Y$ with its scaled version gives
\begin{align}
 W_2^2(\mu,\Law(\widehat X))
 &\le\sigma^2(1-\tau)^2,
 \label{eq:rdp-scalar-perception}\\
 \E[(X-\widehat X)^2]
 &\le\sigma^2\bigl(1+\tau^2-2\tau\rho(R)\bigr)
 =\sigma^2e^{-2R}+\sigma^2\bigl(\tau-\rho(R)\bigr)^2.
 \label{eq:rdp-scalar-scaling}
\end{align}
The distortion bound is minimized at scale $\rho(R)$, while the perception constraint requires $\tau$ to remain sufficiently close to $1$. The optimal choice is  
\begin{equation}\label{eq:rdp-scalar-scale}
 \tau=\max\left\{\rho(R),\,1-\frac{\sqrt P}{\sigma}\right\}.
\end{equation}
Substituting this choice gives the Gaussian benchmark in \eqref{eq:scalar-rdp}.
In higher dimensions, the same construction scales each covariance eigendirection, while the perception budget controls the total squared displacement.

\begin{theorem}[Gaussian extremality for RDP]
\label{thm:rdp-extremality}
Let $\mu\in\mcP_2(\R^d)$ have mean $m$ and covariance $\Sigma\succ0$, and let
$G=\mcN(m,\Sigma)$.  For every $R,P\ge0$,
\begin{equation}\label{eq:rdp-comparison}
 D_\mu(R,P)\le D_G(R,P).
\end{equation}
If $\mu$ is non-Gaussian and the minimization in
\eqref{eq:gaussian-rdp} admits an optimal allocation with $r_i>0$ for every $i$, then the inequality in 
\eqref{eq:rdp-comparison} is strict.  

\end{theorem}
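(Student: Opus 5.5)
We generalize the scalar scaling construction to $d$ dimensions, using \Cref{cor:spectral-allocation} in place of \Cref{cor:scalar-self-correlation}.

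\textbf{Construction.} Fix $R,P\ge0$ and let $(r_i)$ be an optimal allocation in \eqref{eq:gaussian-rdp}. A minimizer exists by compactness of the simplex and continuity of the objective. By \Cref{cor:spectral-allocation}, there is a self-coupling $(X,Y)\in\Pi(\mu,\mu)$ with $I(X;Y)\le\sum_i r_i\le R$ and
\[
C:=O^\top\Sym\E[(X-m)(Y-m)^\top]O\succeq\diag(\sigma_i^2\rho(r_i)).
\]
Choose a diagonal scaling $D_\tau=\diag(\tau_i)$ with $0\le\tau_i\le1$, and set $\widehat X=m+OD_\tau O^\top(Y-m)$. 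Data processing gives $I(X;\widehat X)\le R$.

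\textbf{Perception.} Coupling $Y$ with $\widehat X$ shows
\[
W_2^2(\mu,\Law\widehat X)\le\E\norm{Y-\widehat X}^2=\sum_i\sigma_i^2(1-\tau_i)^2.
\]

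\textbf{Distortion.} We have
\[
\E\norm{X-\widehat X}^2=\Tr\Sigma+\sum_i\tau_i^2\sigma_i^2-2\Tr(D_\tau C).
\]
Since $D_\tau\succeq0$ is diagonal, $\Tr(D_\tau C)\ge\sum_i\tau_i\sigma_i^2\rho(r_i)$. Hence
\[
\E\norm{X-\widehat X}^2\le\sum_i\sigma_i^2e^{-2r_i}+\sum_i\sigma_i^2(\tau_i-\rho(r_i))^2.
\]
So everything reduces to a deterministic problem in $u_i=\sigma_i\tau_i$ and $a_i=\sigma_i\rho(r_i)$, $b_i=\sigma_i$: minimize $\norm{u-a}^2$ subject to $\norm{b-u}\le\sqrt P$ and $a\le u\le b$ componentwise. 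Take $u$ on the segment from $a$ to $b$: $u=b+t(a-b)$ with $t=\min\{1,\sqrt P/\norm{b-a}\}$. This keeps each $\tau_i\in[\rho(r_i),1]$. It gives $\norm{u-a}^2=\pos{\norm{b-a}-\sqrt P}^2$, and $\norm{b-a}^2=\sum_i\sigma_i^2(1-\rho(r_i))^2$, which is exactly the Gaussian benchmark \eqref{eq:gaussian-rdp}. This proves \eqref{eq:rdp-comparison}.

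\textbf{Strictness.} This is the main obstacle. Suppose $\mu$ is non-Gaussian and all $r_i>0$. Then $S=O\diag(\sigma_i^2\rho(r_i))O^\top$ satisfies $0\prec S\prec\Sigma$, and the strict clause of \Cref{thm:matrix-self-coupling} gives $J(S)<\sum_i r_i$. By convexity, and hence continuity, of $J$ on the open set $0\prec S\prec\Sigma$, we may replace $\rho(r_i)$ by slightly larger $\rho(r_i')$, with each $r_i'>r_i$, while keeping the information cost at most $R$ and attainment available.

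Run the construction with $a'_i=\sigma_i\rho(r_i')$, keeping the error term $\sigma_i^2e^{-2r_i}$ replaced by $\sigma_i^2(1-\rho(r_i')^2)$. The resulting bound is
\[
\sum_i\sigma_i^2\bigl(1-\rho(r'_i)^2\bigr)+\pos{\norm{b-a'}-\sqrt P}^2.
\]
Each term is strictly decreasing in every $\rho(r_i')$: the first strictly, and the second weakly, since $\norm{b-a'}$ decreases. So the bound is strictly below $D_G(R,P)$.

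The care needed here is only in checking that the attainment part of \Cref{thm:matrix-self-coupling} applies at the perturbed target. This holds because $J_\nu$ is finite and convex near $S$, and the minimum is attained, as in the proof of \Cref{cor:scalar-self-correlation}. We must also confirm that the scaling stays feasible, $\tau_i\in[\rho(r_i'),1]$, which the segment choice guarantees.
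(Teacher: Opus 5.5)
Your proposal is correct and follows essentially the same route as the paper. It takes a self-coupling from \Cref{thm:matrix-self-coupling} (via \Cref{cor:spectral-allocation}) that meets the Gaussian per-eigendirection correlation targets, shrinks $Y$ toward the mean with scales chosen by the segment/projection argument, and obtains strictness from the strict clause plus continuity of the convex cost $J$ and attainment at a slightly enlarged target. The only cosmetic difference is in the strictness step: you re-run the projection at the enlarged target and use monotonicity in the $\rho_i$, whereas the paper keeps the original scales $\tau_i\ge\rho_i>0$ and reads off the explicit gain $2\delta\sum_i\sigma_i^2\tau_i$.
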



\subsection{Proof by endpoint scaling}

\begin{proof}[Proof of \Cref{thm:rdp-extremality}]
By translation and orthogonal invariance, we may assume  without loss of generality that $m=0$ and
$\Sigma=\diag(\sigma_1^2,\ldots,\sigma_d^2)$.

To construct a reconstruction, fix an allocation $r_i\ge0$ with $\sum_i r_i\le R$, and set
$\rho_i=\rho(r_i)$.  By \Cref{thm:matrix-self-coupling}, there is a
self-coupling $(X,Y)$ satisfying
\begin{equation}\label{eq:rdp-coordinate-coupling}
 I(X;Y)\le\sum_i r_i,
 \qquad
 \E[X_iY_i]\ge\sigma_i^2\rho_i
 \quad(1\le i\le d).
\end{equation}
For nonnegative scales $\tau_i$, define $\widehat X_i=\tau_iY_i$.  Then
\begin{align}
 I(X;\widehat X)&\le R,\nonumber\\
 W_2^2(\mu,\Law(\widehat X))
 &\le\sum_i\sigma_i^2(1-\tau_i)^2,
 \label{eq:rdp-vector-perception}\\
 \E\norm{X-\widehat X}^2
 &\le\sum_i\sigma_i^2e^{-2r_i}
       +\sum_i\sigma_i^2(\tau_i-\rho_i)^2.
 \label{eq:rdp-vector-scaling}
\end{align}
Choosing the scales amounts to a Euclidean projection problem. Define
\[
 v=(\sigma_i)_{i=1}^d,
 \qquad z=(\sigma_i\rho_i)_{i=1}^d,
 \qquad W_0=\norm{v-z}.
\]
We seek the point $x=(\sigma_i\tau_i){i=1}^d$ in the closed ball of radius $\sqrt P$ centered at $v$ that is closest to $z$. If $W_0\le\sqrt P$, choose $x=z$. Otherwise, choose the point on the segment from $v$ to $z$ at distance $\sqrt P$ from $v$. Equivalently,
\begin{equation}\label{eq:rdp-vector-scales}
 \tau_i=
 \begin{cases}
  \rho_i,&W_0\le\sqrt P,\\[1mm]
  1-\dfrac{\sqrt P}{W_0}(1-\rho_i),&W_0>\sqrt P.
 \end{cases}
\end{equation}
These scales lie in $[\rho_i,1]$, satisfy the perception constraint, and yield
\begin{equation}\label{eq:rdp-vector-value}
 \E\norm{X-\widehat X}^2
 \le\sum_i\sigma_i^2e^{-2r_i}
      +\pos{W_0-\sqrt P}^{\,2}.
\end{equation}
Minimizing over the rate allocation bounds $D_\mu(R,P)$ by the right-hand
side of \eqref{eq:gaussian-rdp}.

To establish strictness, suppose that $\mu$ is non-Gaussian and choose a minimizing allocation with $r_i>0$ for every $i$.   Let
$\nu=\Law(\Sigma^{-1/2}X)$ and set $S=\diag(\rho_i)$, so that
$0\prec S\prec I_d$.  The strict part of
\Cref{thm:matrix-self-coupling} gives $J_\nu(S)<\sum_i r_i\le R$.
Since $J_\nu$ is convex and finite, it is continuous on $0\prec S\prec I_d$. 
Hence, for some $\delta>0$, we have $S+\delta I_d\prec I_d$ and $J_\nu(S+\delta I_d)<R$. By attainment, there is a self-coupling satisfying
$\E[X_iY_i]\ge\sigma_i^2(\rho_i+\delta)$ for every $i$.
Using this coupling with the same positive scales as in  \eqref{eq:rdp-vector-scales} gives
\[
 \E\norm{X-\widehat X}^2
 \le D_G(R,P)-2\delta\sum_i\sigma_i^2\tau_i<D_G(R,P).
\]
\end{proof}

\begin{remark}[Equality for a non-Gaussian vector source]
\label{rem:rdp-nonstrict-vector}
Non-Gaussian vector sources can attain equality in \eqref{eq:rdp-comparison} when the Gaussian optimum assigns zero rate to some directions. Appendix \ref{app:rdp-nonstrict-vector} gives a two-dimensional example with a complete calculation.
\end{remark}

We next present an alternative construction based on posterior-mean interpolation. We consider two candidate  representations: the conditional mean induced by a self-coupling and a conventional RD optimizer. The first yields another proof of \eqref{eq:rdp-comparison}; the second is discussed in \Cref{rem:conventional-rd-interpolation}.

\subsection{Posterior-mean interpolation}
\label{subsec:posterior-interpolation}

The following construction builds on the Wasserstein geometry of the distortion–perception tradeoff developed in \cite{Freirich2021}.


\subsubsection{The interpolation curve}

Suppose that $X\sim\mu$ and $U$ satisfy
$\E[X\mid U]=U$ and $I(X;U)\le R$. Define
\[
 \varepsilon=\E\norm{X-U}^2,\qquad
 w=W_2(\mu,\Law(U)).
\]
Extend $(X,U)$ to a Markov chain $X-U-Y$ by sampling $Y$ given $U$ according to the conditional law induced by an optimal Wasserstein-2 coupling of $\Law(U)$ and $\mu$. Then $Y\sim\mu$ and $\E\norm{Y-U}^2=w^2$.
For  $P\ge0$, define
\begin{equation}\label{eq:appendix-posterior-decoder}
 \widehat X=
 \begin{cases}
  U,&w\le\sqrt P,\\[1mm]
  \dfrac{\sqrt P}{w}U+
       \left(1-\dfrac{\sqrt P}{w}\right)Y,&w>\sqrt P.
 \end{cases}
\end{equation}
The first case includes $w=0$. By the Markov property and the data processing inequality,
$I(X;\widehat X)\le I(X;U,Y)=I(X;U)\le R$.
When $w>\sqrt P$, coupling $\widehat X$ with $Y$ gives
\[
 W_2^2(\mu,\Law(\widehat X))
 \le\E\norm{Y-\widehat X}^2=P.
\]
When $w\le\sqrt P$, the perception constraint follows from $w^2\le P$.
Moreover, conditional independence and the conditional-mean property imply that $\E[X-U\mid U,Y]=0$. Hence the cross term in the distortion vanishes, and
\begin{equation}\label{eq:posterior-interpolation-curve}
 \E\norm{X-\widehat X}^2
 =\varepsilon+\E\norm{U-\widehat X}^2
 =\varepsilon+\pos{w-\sqrt P}^{\,2}.
\end{equation}
This identity illustrates the geometry: $\varepsilon$ is the estimation error incurred at $U$, while the second term is the cost of moving its law toward the source law. Under perfect perception, the distortion is $\varepsilon+w^2$; when $P\ge w^2$, it is $\varepsilon$.

\subsubsection{Endpoint bounds and Gaussian comparison}

To compare this curve with the Gaussian benchmark, we need bounds on both the estimation error $\varepsilon$ and the distortion under perfect perception, $\varepsilon+w^2$. The self-coupling provides both bounds simultaneously.

Work first in centered coordinates, with
$\Sigma=\diag(\sigma_1^2,\ldots,\sigma_d^2)\succ0$.
Fix $r_i\ge0$ such that $\sum_i r_i\le R$, and set $\rho_i=\rho(r_i)$.
Choose $(X,Y_0)$ as in \eqref{eq:rdp-coordinate-coupling}, and define
$U=\E[X\mid Y_0]$. We have $\E[X\mid U]=U$ by the tower property of conditional expectation
and  $I(X;U)\le R$ by the data processing inequality.
For each coordinate,
\[
 \E[U_iY_{0,i}]=\E[X_iY_{0,i}]\ge\sigma_i^2\rho_i.
\]
Cauchy--Schwarz therefore gives
$\E[U_i^2]\ge\sigma_i^2\rho_i^2$.
Define the two quantities associated with the Gaussian allocation:
\[
 D_0=\sum_i\sigma_i^2(1-\rho_i^2),\qquad
 W_0^2=\sum_i\sigma_i^2(1-\rho_i)^2.
\]
Conditional-mean orthogonality yields the first bound:
\begin{equation}\label{eq:interpolation-first-endpoint}
 \varepsilon=\sum_i\bigl(\sigma_i^2-\E[U_i^2]\bigr)\le D_0.
\end{equation}
For the second, $(U,Y_0)$ couples $\Law(U)$ and $\mu$,  so $w^2\le\E\norm{Y_0-U}^2$.
Since $\E[X-U\mid Y_0]=0$ and $Y_0-U$ is a function of $Y_0$,
we have $\E\langle X-U,Y_0-U\rangle=0$. Expanding the square gives
\begin{equation}\label{eq:interpolation-second-endpoint}
 \begin{split}
 \varepsilon+w^2
 &\le \E\norm{X-U}^2+\E\norm{Y_0-U}^2\\
 &=\E\norm{X-Y_0}^2\\
 &\le 2\sum_i\sigma_i^2(1-\rho_i)=D_0+W_0^2.
 \end{split}
\end{equation}
We use $(U,Y_0)$ to bound the Wasserstein distance, while the reconstruction in \eqref{eq:appendix-posterior-decoder} uses the optimal coupling $(U,Y)$ introduced above.

To interpolate between these endpoint estimates for an arbitrary perception tolerance, define
 $h_P(v)=(\sqrt v-\sqrt P)_+^2$ for $v,P\ge0$. The identity
\begin{equation}\label{eq:perception-variational}
 h_P(v)=\sup_{0\le\theta<1}
 \left\{\theta v-\frac{\theta P}{1-\theta}\right\}
\end{equation}
expresses this penalty as a supremum of weighted squared-distance terms.
For $P>0$, the supremum is attained at $\theta=0$ if $v\le P$, and at $\theta=1-\sqrt{P/v}$ if $v>P$. When $P=0$, it approaches $v$ as $\theta\uparrow1$.
For each $0\le\theta<1$, the two endpoint estimates imply
\[
 \varepsilon+\theta w^2
 =(1-\theta)\varepsilon+\theta(\varepsilon+w^2)
 \le D_0+\theta W_0^2.
\]
Subtracting $\theta P/(1-\theta)$ and taking the supremum over $\theta$ on both sides, the identity 
 \eqref{eq:perception-variational} yields
\begin{equation}\label{eq:interpolation-gaussian-allocation}
 \varepsilon+\pos{w-\sqrt P}^{\,2}
 \le D_0+\pos{W_0-\sqrt P}^{\,2}.
\end{equation}
The case $P=0$ follows by taking $\theta\uparrow1$. Minimizing the right-hand side over rate allocations and applying \eqref{eq:gaussian-rdp} proves
$D_\mu(R,P)\le D_G(R,P)$. Undoing the translation and orthogonal change of coordinates gives the same conclusion for an arbitrary mean and positive-definite covariance.

\begin{remark}[Starting from a conventional RD optimizer]
\label{rem:conventional-rd-interpolation}
One can also start with an optimal reconstruction $\widehat{X}_{\rm RD}$ for the ordinary quadratic RD problem: 
 $$\min_{P_{\widehat{X}|X}:\,I(X;\widehat{X})\le R}
    \E\norm{X-\widehat{X}}^2.$$
 Note that   $\widehat{X}_{\rm RD}= \E[X| \widehat{X}_{\rm RD}]$. 
Writing $\varepsilon_{\rm RD}=\E\norm{X-\widehat{X}_{\rm RD}}^2$ and
$w_{\rm RD}=W_2(\mu,\Law(\widehat{X}_{\rm RD}))$, the same interpolation yields
$\varepsilon_{\rm RD}+\pos{w_{\rm RD}-\sqrt P}^{\,2}$.
For a scalar source of variance $\sigma^2$, the classical Gaussian
RD bound gives
$\varepsilon_{\rm RD}\le\sigma^2e^{-2R}$.
To establish the full RDP comparison using the preceding endpoint argument,
it would suffice to show that
\begin{equation}\label{eq:rd-optimizer-endpoint}
 \varepsilon_{\rm RD}+w_{\rm RD}^2
 \le 2\sigma^2(1-\rho(R)).
\end{equation}
This requires additional control of the reconstruction law at $P=0$; the ordinary RD bound controls only the first endpoint. In multi-dimensional cases, a fixed ordinary RD optimizer can be suboptimal for the RDP problem under perfect perception, even for a Gaussian source, because the optimal rate allocation depends on the perception tolerance. \Cref{app:fixed-rd-representation} gives an explicit two-dimensional example.
\end{remark}

\section{Gaussian extremality with limited common randomness}
\label{sec:rdpc-formulation}

Unlike the classical RD problem, the fundamental RDP limit depends on the amount of common randomness available. As shown in \cite{Chen22}, under the sequence-level perception constraint, the RDP function introduced by Blau and Michaeli \cite{BlauMichaeli2019}  corresponds to the unlimited-common-randomness setting. We now extend the Gaussian extremality result to the RDP problem with limited common randomness, which we refer to as the RDPC problem. Throughout this section, the source law $\mu$ has mean $m$ and a positive-definite covariance matrix $\Sigma\succ0$.



\subsection{Problem formulation and Gaussian comparison targets}
\label{subsec:rdpc-formulation}

\subsubsection{Operational formulation and common auxiliary}

Consider an i.i.d. source $X^n$ with marginal law $\mu$. A distortion target $D$ is achievable under communication rate $R$, common-randomness rate $C$, and normalized squared Wasserstein-2 perception constraint $P$ if, for some blocklength $n$, there is a code with at most $\exp(nR)$ messages and a shared random seed taking at most $\exp(nC)$ values such that the reconstruction $\widehat X^n$ satisfies 
\[
\frac1n\E\norm{X^n-\widehat X^n}^2\le D,
\qquad
\frac1n W_2^2\bigl(\mu^{\otimes n},\Law(\widehat X^n)\bigr)\le P.
\]
Let $D_\mu(R,C,P)$ denote the infimum of $D$ over all such codes and blocklengths.


As shown by Qu et al. \cite[Theorem~1]{QuChenYuXu2025}, $D_\mu(R,C,P)$ has the following single-letter characterization:
\begin{equation}\label{eq:rdpc-common-representation}
 \begin{split}
 D_\mu(R,C,P)=
 \inf\biggl\{&\E\norm{X-M}^2+\pos{\sqrt{\E\norm{Y-M}^2}-\sqrt P}^{\,2}\biggr\},
 \end{split}
\end{equation}
where $(X,M)$ and $(Y,M)$ have the same marginal law for $M$ and satisfy
\begin{equation}\label{eq:rdpc-certificate-constraints}
 X,Y\sim\mu,\qquad \E[X\mid M]=M,\qquad
 I(X;M)\le R,\quad I(Y;M)\le R+C.
\end{equation}
The variable $M$ is a conditional-mean representation of the source. The endpoint $Y$ has the source law and serves as an anchor for perception. The two squared-error terms in \eqref{eq:rdpc-common-representation} measure the source-to-representation error and the anchor's displacement from that representation, respectively. In the unlimited-common-randomness limit, $D_\mu(R,C,P)$ reduces to the function $D_\mu(R,P)$ studied in \Cref{sec:rdp}:
\begin{equation}\label{eq:rdp-rdpc-identification}
 D_\mu(R,\infty,P)=D_\mu(R,P).
\end{equation}

For comparison, we recall the Gaussian formula from
\cite{QuChenYuXu2025}.  It can also be derived by combining the Gaussian weighted representation in \Cref{app:gaussian-structure} with the variational representation in \Cref{app:rdpc-variational}.
Let $G=\mcN(m,\Sigma)$, and let $\sigma_1^2,\ldots,\sigma_d^2$ be the
eigenvalues of $\Sigma\succ0$.
With $\rho(r)=\sqrt{1-e^{-2r}}$, for finite $R,C\ge0$ and $P\ge0$, we have
\begin{equation}\label{eq:gaussian-parallel}
\begin{split}
 D_G(R,C,P)
 =\min_{\substack{r_i,s_i\ge0\\
           \sum_i r_i\le R,\ \sum_i s_i\le R+C}}
 \Biggl\{&\sum_i\sigma_i^2e^{-2r_i}+\pos{\sqrt{\sum_i\sigma_i^2
       \bigl(1+\rho(r_i)^2-2\rho(r_i)\rho(s_i)\bigr)}
                -\sqrt P}^{\,2}\Biggr\}.
\end{split}
\end{equation}
The rates are allocated across covariance eigendirections: $r_i$ controls the posterior estimation error, while $s_i$ controls how closely the second endpoint aligns with the same posterior mean. This spectral formula holds for every positive-definite $\Sigma$. In dimension one, the minimum is attained by using the full rate budgets, $r=R$ and $s=R+C$, yielding
\begin{equation}\label{eq:gaussian-scalar-rdpc}
 D_G(R,C,P)
 =\sigma^2e^{-2R}
  +\pos{\sigma\sqrt{1+\rho(R)^2-2\rho(R)\rho(R+C)}
                     -\sqrt P}^{\,2}.
\end{equation}
The cases $C=0$, $P=0$, and $C=\infty$ are discussed in
\Cref{app:gaussian-special-cases}.

One possible approach to comparing a general source with the Gaussian benchmark is to start with a pair $(X,M)$ satisfying $X\sim\mu$, $\E[X\mid M]=M$, and $I(X;M)\le R$. One could then couple $\Law(M)$ with $\mu$ to obtain a pair $(Y,M)$ satisfying $I(Y;M)\le R+C$. A natural hope is that the correlation $\E[YM]$ can be controlled using only the marginal variances.
For a centered scalar source with variance $\sigma^2$, the construction in \Cref{subsec:posterior-interpolation} gives such a pair $(X,M)$ with $q:=\E M^2\ge\sigma^2\rho(R)^2$. If the bound \eqref{eq:false-two-marginal-comparison} proposed below  held, applying it to $\mu$ and $\Law(M)$ at rate $R+C$ would yield a coupling $(Y,M)$ with $I(Y;M)\le R+C$ and
$\E[YM]\ge\sigma\sqrt q \,\rho(R+C)$. Substituting these into \eqref{eq:rdpc-common-representation} would give
\[
 D_\mu(R,C,P)\le F(q)
 :=\sigma^2-q+
 \pos{\sqrt{\sigma^2+q-2\sigma\sqrt q\,\rho(R+C)}-\sqrt P}^{\,2}.
\]
By \eqref{eq:perception-variational}, $F$ is nonincreasing on
$[0,\sigma^2]$.  Thus,
$D_\mu(R,C,P)\le F(\sigma^2\rho(R)^2)=D_G(R,C,P)$,
with the Gaussian value given in \eqref{eq:gaussian-scalar-rdpc}.
The following example shows that this variance-based shortcut fails, even for scalar marginals.

\begin{remark}[A two-marginal obstruction]
\label{rem:two-marginal-false}
For arbitrary centered marginals with variances $\sigma_A^2$ and $\sigma_B^2$, consider the following extension of the self-coupling inequality:
\begin{equation}\label{eq:false-two-marginal-comparison}
 \sup_{\substack{\lambda\in\Pi(P_A,P_B)\\ I_\lambda(A;B)\le r}}
       \E_\lambda[AB]
 \stackrel{?}{\ge}\sigma_A\sigma_B\sqrt{1-e^{-2r}}.
\end{equation}
Let $A$ be uniform on $\{-1,1\}$ and let $B$ be standard Gaussian. Since $AB\le |B|$ pointwise for every coupling,
$\E[AB]\le\E|B|=\sqrt{2/\pi}$. The coupling $A=\operatorname{sgn}(B)$ attains this bound and has $I(A;B)=H(A)=\log 2$. Consequently,
\begin{equation}\label{eq:rad-gauss-max-cov}
 \sup_{\substack{\lambda\in\Pi(P_A,P_B)\\ I_\lambda(A;B)\le\log 2}}
       \E_\lambda[AB]
 =\sqrt{\frac{2}{\pi}}
 <\frac{\sqrt3}{2}
 =\sqrt{1-e^{-2\log 2}}.
\end{equation}
Thus, \eqref{eq:false-two-marginal-comparison} fails even when both marginals have variance one. This motivates constructing the common representation and its two endpoint couplings jointly, while preserving the conditional-mean structure in \eqref{eq:rdpc-certificate-constraints}.
\end{remark}

Conditional gluing provides a useful way to construct the two endpoint couplings. Let $\cA_\mu(R,C)$ denote the class of chains
\begin{equation}\label{eq:rdpc-admissible-chain}
 X-U-Y,\qquad X,Y\sim\mu,\qquad
 I(X;U)\le R,\quad I(Y;U)\le R+C,
\end{equation}
where $U$ takes values in an arbitrary standard Borel space. For a chain in this class, define
\begin{equation}\label{eq:rdpc-two-errors}
 M_U=\E[X\mid U],\qquad
 d_0=\E\norm{X-M_U}^2,\qquad
 d_1=\E\norm{Y-M_U}^2.
\end{equation}
By the tower property of conditional expectation, $\E[X\mid M_U]=M_U$, and by the data processing inequality, the information constraints in  \eqref{eq:rdpc-certificate-constraints}   hold with $M=M_U$.
Conversely, take any two feasible joint laws in \eqref{eq:rdpc-certificate-constraints} and disintegrate them as $P_{XM}=P_MK_X$ and $P_{YM}=P_MK_Y$. The joint law
$$P_M(\dd m) K_X(m, \dd x) K_Y(m, \dd y)$$
preserves both pair laws and makes $X$ and $Y$ conditionally independent given $M$. It therefore defines a chain in \eqref{eq:rdpc-admissible-chain}  with $U=M$. Consequently,
\begin{equation}\label{eq:rdpc-implicit-chain}
 D_\mu(R,C,P)=\inf_{\cA_\mu(R,C)}
 \bigl\{d_0+h_P(d_1)\bigr\},
 \qquad h_P(w)=(\sqrt w-\sqrt P)_+^2.
\end{equation}
This formulation lets us design $U$ first and recover the conditional-mean representation afterward.

The main comparison for this problem is as follows.

\begin{theorem}[Gaussian extremality for RDPC]
\label{thm:rdpc-extremality}
Let $\mu\in\mcP_2(\R^d)$ have mean $m$ and covariance $\Sigma\succ0$, and
let $G=\mcN(m,\Sigma)$. Then
\begin{equation}\label{eq:rdpc-extremality}
 D_\mu(R,C,P)\le D_G(R,C,P)
\end{equation}
for all finite $R,P\ge0$ and all $C\in[0,\infty]$.
If $\mu$ is non-Gaussian and $C<\infty$, the inequality is strict whenever the minimization in \eqref{eq:gaussian-parallel} admits a minimizing allocation with $r_i>0$ in every covariance eigendirection:
\begin{equation}\label{eq:rdpc-strict}
 D_\mu(R,C,P)<D_G(R,C,P).
\end{equation}
For $C=\infty$, the same strict comparison holds whenever a minimizing allocation in \eqref{eq:gaussian-rdp} has $r_i>0$ in every covariance eigendirection. In particular, for a non-Gaussian scalar source, the comparison is strict for every $R>0$, $P\ge0$, and $C\in[0,\infty]$.
\end{theorem}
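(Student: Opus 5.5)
We plan to prove \Cref{thm:rdpc-extremality} through the chain formulation \eqref{eq:rdpc-implicit-chain}. First normalize: by translation and rotation we take $m=0$ and $\Sigma=\diag(\sigma_i^2)$. The case $C=\infty$ follows at once from \eqref{eq:rdp-rdpc-identification} and \Cref{thm:rdp-extremality}, so we assume $C<\infty$. Fix a minimizing allocation $(r_i,s_i)$ in \eqref{eq:gaussian-parallel}. We then use the identity \eqref{eq:perception-variational} with weight $\theta\in[0,1)$. This reduces the comparison to the bound
\[
d_0+\theta d_1 \le D_0+\theta W_1^2, \qquad
D_0=\sum_i\sigma_i^2e^{-2r_i},\quad
W_1^2=\sum_i\sigma_i^2\bigl(1+\rho(r_i)^2-2\rho(r_i)\rho(s_i)\bigr),
\]
which must hold for a single chain in $\cA_\mu(R,C)$ that works for all $\theta$ simultaneously. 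Taking the supremum over $\theta$ then gives $d_0+h_P(d_1)\le D_0+h_P(W_1^2)$. Expanding both sides, it suffices to find a chain satisfying
\begin{equation*}
\E\norm{M_U}^2\ge\sum_i\sigma_i^2\rho(r_i)^2 \quad\text{(coordinatewise)},\qquad
\E\langle Y,M_U\rangle\ge\sum_i\sigma_i^2\rho(r_i)\rho(s_i),
\end{equation*}
ideally coordinatewise as well. We use $d_0=\Tr\Sigma-\E\norm{M_U}^2$ and $d_1=\Tr\Sigma+\E\norm{M_U}^2-2\E\langle Y,M_U\rangle$. One checks that $d_0+\theta d_1$ is decreasing in $\E\norm{M_U}^2$ whenever the correlation bound scales accordingly. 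If that monotonicity fails, we will instead aim for the stronger pair of bounds $\E[X_iM_i]\ge\sigma_i^2\rho(r_i)^2$ and $\E[Y_iM_i]\ge\sigma_i^2\rho(r_i)\rho(s_i)$ and redo the expansion.

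\textbf{Construction.} We adapt the midpoint gluing of \Cref{sec:midpoint-proof}. In the Gaussian model the optimal chain is $X\to Z\to Y$ with $Z$ the OU observation of $X$. Correlations multiply along it, and $U$ is an intermediate observation lying on the $Y$ side. So we take $\nu=\Law(\Sigma^{-1/2}X)$ and diagonal strengths $T_1,T_2$. We glue the half-bridge $X\to Z$ (kernel $K_{T_1}$) with the reversed half-bridge from $Y$ (kernel $K_{T_1}$), and on the $Y$ half-bridge we additionally retain an intermediate observation $V=K_{T_2}$-observation of $Y$. We then set $U=(Z,V)$, or a suitable sub-observation of it. The Markov structure $X-U-Y$ holds by construction, and $X,Y\sim\nu$.

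\textbf{Estimates.} Correlation bounds come from the mean property of OU kernels, $\E[Z\mid X]=T_1X$. Combined with Cauchy--Schwarz as in \eqref{eq:interpolation-first-endpoint}, this gives lower bounds on $\E[X_iM_i]$ and $\E[Y_iM_i]$ equal to the Gaussian products. The information costs $I(X;U)$ and $I(Y;U)$ are bounded above by their Gaussian values via the free-energy/midpoint comparison of \Cref{prop:free-energy} and \Cref{lem:ou-midpoint}. Concretely, we write each mutual information as a KL difference against the Gaussian reference and use that OU smoothing reduces $\mathrm{KL}(\cdot\|\gamma)$.

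\textbf{Main obstacle.} The hard step is calibration. The two information budgets $I(X;U)\le R$ and $I(Y;U)\le R+C$ must be met simultaneously, while the moment bounds must stay at least Gaussian in each coordinate. For a non-Gaussian $\nu$ the actual informations are not equal to the Gaussian ones but only bounded, and the bound is not monotone in the obvious way. We therefore treat the strengths $(t,t')\in[0,1]^2$ (scaled per coordinate along the Gaussian allocation path) as parameters. We show the maps $(t,t')\mapsto I(X;U)-R$ and $(t,t')\mapsto I(Y;U)-(R+C)$ are continuous, using the continuity results in the appendix, and that they have opposite signs on opposite faces of the square. The Poincar\'e--Miranda theorem then yields strengths at which both constraints hold with equality or slack, and we verify the moment bounds there.

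\textbf{Strictness.} We perturb as in \Cref{thm:rdp-extremality}. When every $r_i>0$, the midpoint gain $\mathrm{KL}(\nu K_{T}\|\gamma)>0$ gives strict slack in $I(X;U)$. Reinvesting that slack by increasing each $\rho(r_i)$ by some $\delta>0$ strictly lowers $d_0+h_P(d_1)$. For a scalar non-Gaussian source, the minimizing allocation $r=R>0$ automatically meets the positivity condition.
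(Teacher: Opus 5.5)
Your outer reduction is sound. A single chain with $q=\E\norm{M_U}^2\ge\Tr(\Sigma A^2)$ and $c=\E\ip{X}{Y}\ge\Tr(\Sigma AB)$ beats the Gaussian at every weight, because $d_0+\theta d_1=(1+\theta)V-(1-\theta)q-2\theta c$. (Note that $\E\ip{Y}{M_U}=\E\ip{X}{Y}$ by conditional independence.) The paper argues strictness exactly this way. Your geometry is also the paper's marked bridge, with your $T_1,T_2,V$ playing the roles of its $T,B,U$, but only if the auxiliary is the mark $V$ alone. With $U=(Z,V)$ both costs are automatically Gaussian-bounded, since $I(X;U)=I(X;Z)$ and $I(Y;U)=I(Y;V)$. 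But then $M_U=\E[X\mid Z]$, and $(X,Y)$ is the plain midpoint coupling at strength $T_1$. For a Gaussian source this gives $c=\Tr(\Sigma T_1^2)$ with $j(T_1)\le R$, which is the $C=0$ construction and in general falls short of $\Tr(\Sigma AB)$ when $C>0$. You also need $r_i\le s_i$ for the kernel from $V$ to $Z$ to exist. A minimizing allocation of \eqref{eq:gaussian-parallel} need not be ordered; that is why the paper proves \Cref{prop:gaussian-parallel}.

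The genuine gap is the cost $I(X;U)$ for $U=V$ and the calibration you build on it. Your claim that both costs are bounded by their Gaussian values via \Cref{lem:ou-midpoint} and \Cref{prop:free-energy} holds only for $I(Y;U)=j(B)-\KL{\nu_B}{\gamma}$, and that side needs no tuning at all. The channel $X\to U$ runs through the non-Gaussian Bayes reversal $L_\Delta$, and the self-coupling results do not transfer to it. What one can prove is \eqref{eq:marked-low-information}, $I(W;U)\le\sum_i[g(\alpha_i)+g'(\alpha_i)k_ie_i]-\KL{\nu_T}{\gamma}$. Its nonnegative correction comes from the same midpoint excess $e_i$ that improves the moments, and it is not known to be dominated by $\KL{\nu_T}{\gamma}$; if it were, no tuning would be needed.

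Your Poincar\'e--Miranda step does not fill this gap. Continuity of the actual mutual information in the strengths is not supplied anywhere; \Cref{lem:posterior-continuity} only gives continuity of $Q_\nu$. More importantly, at a point where $I(X;U)=R$ nothing forces the moment bounds. Lowering the strengths lowers the baselines $\alpha_i$ and $b_i\alpha_i$. Balancing total information with one global scale yields at best $\sum_i g'(\alpha_i)\bigl(k_ie_i-(a_i-\alpha_i)\bigr)\ge0$, which does not give the $\sigma_i^2$-weighted bounds when $d>1$.

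The paper's missing ingredients are these:
\begin{itemize}
\item The identity $H_{ii}=\E[W_iU_i]=\alpha_i+k_ie_i$, obtained by Gaussian regression along the forward chain $\widetilde W\to U\to Z$. The mean property $\E[Z\mid X]=T_1X$ does not yield it, since $U$ is drawn from $Z$ through $L_\Delta$.
\item A coordinatewise balance of the surrogate, $F_i(\alpha)=g(\alpha_i)+g'(\alpha_i)k_ie_i-g(a_i)=0$ for each active $i$, solved on $\prod_i[0,a_i]$ by Poincar\'e--Miranda with one parameter per coordinate.
\item Convexity of $g$, which turns each balance into $k_ie_i\ge a_i-\alpha_i$, hence $H_{ii}\ge a_i$ and $Q_\nu(T)_{ii}\ge a_ib_i$.
\end{itemize}

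For strictness, raising only $A$ breaks $A\preceq B$ when $a_i=b_i$, for instance at $C=0$. The paper instead perturbs both $A$ and $B$ by $\varepsilon I_d$ and reruns the balancing. It keeps the entropy gains uniformly positive via $\alpha_i^\varepsilon>\tau$ and $\kappa=\KL{\nu K_{\tau I_d}}{\gamma}>0$.
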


The proof of Theorem \ref{thm:rdpc-extremality}  rests on comparing two weighted quadratic errors.

\subsubsection{A weighted formulation of the perception constraint}

The objective in \eqref{eq:rdpc-implicit-chain} depends nonlinearly on the second error $d_1$ through $h_P(d_1)$. The variational identity \eqref{eq:perception-variational}, established
in \Cref{subsec:posterior-interpolation}, expresses this penalty as a supremum of affine functions of $d_1$. This suggests comparing the weighted errors $d_0 + \theta d_1$ for each fixed $\theta$.


Define the weighted error
\begin{equation}\label{eq:weighted-energy}
 \cE_{\mu,\theta}(R,C)
 :=\inf_{\cA_\mu(R,C)}
 \left\{\E\norm{X-M_U}^2
       +\theta\E\norm{Y-M_U}^2\right\},
 \quad 0\le\theta\le1.
\end{equation}
For fixed $R,C$, let $\mathcal D_{\rm att}\subset\R^2$ be the set of error pairs $(d_0,d_1)$ generated by chains in $\cA_\mu(R,C)$, and let $\mathcal D$ be its Euclidean closure. Since both $d_0+h_P(d_1)$ and $d_0+\theta d_1$ are continuous in $(d_0,d_1)$, their infima over $\mathcal D_{\rm att}$ agree with their infima over $\mathcal D$. Time sharing makes the error region convex, while uniform bounds on the two errors make $\mathcal D$ compact. Applying the minimax theorem to \eqref{eq:perception-variational} over $\mathcal D$ gives the following identity. The details, including the endpoint case $P=0$, are provided in \Cref{app:rdpc-variational}.

\begin{proposition}[Weighted variational representation of RDPC]
\label{prop:rdpc-variational}
For every finite $R,C\ge0$ and $P>0$,
\begin{equation}\label{eq:rdpc-energy-duality}
 D_\mu(R,C,P)
 =\sup_{0\le\theta<1}
 \left\{\cE_{\mu,\theta}(R,C)
             -\frac{\theta P}{1-\theta}\right\}.
\end{equation}
Under perfect perception,
\begin{equation}\label{eq:rdpc-exact-energy}
 D_\mu(R,C,0)=\cE_{\mu,1}(R,C).
\end{equation}
\end{proposition}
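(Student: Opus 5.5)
The plan is to reduce everything to a convex two-dimensional problem on the closed error region $\mathcal D$, and then apply a minimax theorem to the variational identity \eqref{eq:perception-variational}. I would start from \eqref{eq:rdpc-implicit-chain}, which gives $D_\mu(R,C,P)=\inf_{(d_0,d_1)\in\mathcal D_{\rm att}}\{d_0+h_P(d_1)\}$. Continuity of $h_P$ lets me replace $\mathcal D_{\rm att}$ by its closure $\mathcal D$. The same continuity argument shows $\cE_{\mu,\theta}(R,C)=\min_{\mathcal D}\{d_0+\theta d_1\}$.

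The first task is the structure of $\mathcal D$.

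\emph{Boundedness.} We have $d_0\le\E\norm{X-m}^2=\Tr\Sigma$, because $M_U$ is the MMSE estimator. Also $d_1\le 2\E\norm{Y-m}^2+2\E\norm{M_U-m}^2\le4\Tr\Sigma$. So $\mathcal D$ is compact.

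\emph{Convexity.} I would use time sharing. Given two chains $X-U_j-Y$ in $\cA_\mu(R,C)$, let $Q\sim\mathrm{Bern}(\lambda)$ be independent of both, and form $U=(Q,U_Q)$, with $(X,Y)$ drawn from the $Q$-th chain. The marginals of $X$ and $Y$ remain $\mu$, and $X-U-Y$ is still a Markov chain. Since $I(X;Q)=0$,
\[
I(X;U)=I(X;U_Q\mid Q)=\lambda I_1+(1-\lambda)I_2\le R,
\]
and the same computation gives $I(Y;U)\le R+C$. Moreover $M_U=\E[X\mid Q,U_Q]$ equals the $Q$-th conditional mean, so the error pair is the convex combination of the two pairs. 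Hence $\mathcal D_{\rm att}$ is convex, and therefore so is $\mathcal D$.

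Next, for $P>0$, I would write
\[
d_0+h_P(d_1)=\sup_{0\le\theta<1}\phi(\theta,d),\qquad \phi(\theta,d)=d_0+\theta d_1-\frac{\theta P}{1-\theta}.
\]
The function $\phi$ is affine, hence convex and continuous, in $d$ on the compact convex set $\mathcal D$. It is concave in $\theta$ on the convex set $[0,1)$, because $-\theta/(1-\theta)$ is concave. Sion's minimax theorem then gives
\[
\min_{d\in\mathcal D}\sup_{\theta}\phi=\sup_{\theta}\min_{d\in\mathcal D}\phi,
\]
and the inner minimum is $\cE_{\mu,\theta}(R,C)-\theta P/(1-\theta)$. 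This is exactly \eqref{eq:rdpc-energy-duality}. Sion's theorem requires compactness only on the minimizing side, which we have.

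For $P=0$, the identity $h_0(d_1)=d_1$ gives directly
\[
D_\mu(R,C,0)=\inf_{\mathcal D}\{d_0+d_1\}=\cE_{\mu,1}(R,C),
\]
so no minimax argument is needed there.

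The step I expect to need the most care is the convexity of $\mathcal D_{\rm att}$ under time sharing. One must confirm that $U=(Q,U_Q)$ takes values in a standard Borel space, which holds as a disjoint union. One must also check that the Markov property and both information bounds survive the mixing, which they do because $Q$ is independent of $X$ and of $Y$. The remaining point is compactness on the minimizing side when applying Sion's theorem; the uniform bounds above make this routine.
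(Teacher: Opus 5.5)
Your proposal is correct and follows essentially the same route as the paper. Both arguments pass to the closed error region, bound it by $d_0\le\Tr\Sigma$ and $d_1\le 4\Tr\Sigma$, prove convexity by time sharing with the label retained in the auxiliary, apply Sion's theorem to the variational identity \eqref{eq:perception-variational}, and handle $P=0$ directly. The only difference is that the paper first restricts $\theta$ to the compact interval $[0,\bar\theta]$, where $\bar\theta=\max\{0,1-\sqrt{P/(4\Tr\Sigma)}\}<1$, and only then extends back to $[0,1)$. As you observe, this truncation is not needed for the identity itself, since Sion requires compactness only on the minimizing side. It does, however, also give attainment of the dual maximizer, which the paper uses later in \Cref{prop:gaussian-parallel}.
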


It therefore suffices to compare the two-error problems at each weight:
\begin{equation}\label{eq:two-arm-target}
 \cE_{\mu,\theta}(R,C)\le\cE_{G,\theta}(R,C),
 \qquad 0\le\theta\le1.
\end{equation}
The perception tolerance no longer appears explicitly. Instead, we seek a common auxiliary variable that satisfies two information constraints and controls a weighted quadratic cost.

To identify the moments the construction should improve, center the source and write  $V=\Tr\Sigma$. For a feasible chain, define
\begin{equation}\label{eq:rdpc-moment-bookkeeping}
 q=\E\norm{\E[X-m\mid U]}^2,\qquad
 c=\E\langle X-m,Y-m\rangle.
\end{equation}
Conditional independence and orthogonality give
\begin{equation}\label{eq:weighted-energy-moments}
 d_0=V-q,\qquad d_1=V+q-2c,\qquad
 d_0+\theta d_1=(1+\theta)V-(1-\theta)q-2\theta c.
\end{equation}
Thus, increasing the posterior energy $q$ and the endpoint correlation $c$ improves each weighted objective.

\subsubsection{The Gaussian moment targets}
\label{subsec:gaussian-moment-targets}

Towards a construction that attains or improves the Gaussian
benchmark \eqref{eq:gaussian-parallel}, 
we first identify the moments selected by the Gaussian weighted problem. By translation and orthogonal invariance, we may work in a covariance eigenbasis. Specifically, for a general $\Sigma\succ0$, write $\Sigma=O\Lambda O^\top$, where $O$ is orthogonal and $\Lambda=\diag(\sigma_1^2,\ldots,\sigma_d^2)\succ0$, and define
\[
 \overline X=O^\top(X-m),\qquad
 \overline Y=O^\top(Y-m),\qquad
 \overline\mu=\Law(\overline X).
\]
This affine isometry preserves squared Euclidean distances, quadratic Wasserstein distances, and both mutual-information constraints, while leaving the auxiliary variable $U$ unchanged. It also preserves the Markov property. Moreover,
\[
 \E[\overline X\mid U]=O^\top\bigl(\E[X\mid U]-m\bigr),
\]
so the errors $d_0$ and $d_1$ are unchanged. The inverse transformation gives the converse correspondence; hence
$D_{\overline\mu}(R,C,P)=D_\mu(R,C,P)$ and
$\cE_{\overline\mu,\theta}(R,C)=\cE_{\mu,\theta}(R,C)$.
Thus, the following lemma and marked bridge construction can be formulated for a centered source in a covariance eigenbasis. In these coordinates, we continue to write $\mu$ for the transformed law and $\Sigma$ for its diagonal covariance $\Lambda$; the observation matrices $A$ and $B$ are expressed in the same basis. The conclusions for the original source follow by applying the inverse map $z\mapsto m+Oz$.

In these coordinates, a Gaussian optimizer can be represented using two diagonal matrices of observation strengths. Parameterize each scalar strength by $\rho(r)=\sqrt{1-e^{-2r}}$. The corresponding Gaussian information costs are expressed using its inverse $g=\rho^{-1}$ and the sum $j$ of the coordinate costs:
\begin{equation}\label{eq:g-j-definition}
 \begin{aligned}
 g(a)&=-\frac12\log(1-a^2),\qquad 0\le a<1,\\
 j(A)&=\sum_{i=1}^d g(a_i)=-\frac12\log\det(I_d-A^2),
 \qquad A=\diag(a_i).
 \end{aligned}
\end{equation}

\begin{lemma}[Gaussian representation and ordered strengths]
\label{lem:gaussian-normal-form}
In the covariance coordinates above, let $G=\mcN(0,\Sigma)$ with
$\Sigma=\diag(\sigma_1^2,\ldots,\sigma_d^2)\succ0$, and let $R,C\ge0$
be finite. For
$0\le\theta\le1$,
\begin{equation}\label{eq:gaussian-weighted-energy}
 \cE_{G,\theta}(R,C)
 =(1+\theta)V-
 \max_{\substack{0\preceq A,B\prec I_d\text{ diagonal}\\
                 j(A)\le R,\ j(B)\le R+C}}
 \sum_{i=1}^d\sigma_i^2
      \bigl[(1-\theta)a_i^2+2\theta a_ib_i\bigr].
\end{equation}
An optimizer can be chosen with $A\preceq B$. If $R>0$ and $\theta>0$, every optimizer satisfies $A\preceq B$, saturates both information constraints, and has matching supports: $a_i>0$ if and only if $b_i>0$. If $\theta=0$, one may take $B=A$.
\end{lemma}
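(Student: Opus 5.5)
The plan has two parts. First, prove \eqref{eq:gaussian-weighted-energy} through the moment bookkeeping \eqref{eq:weighted-energy-moments}: a Gaussian chain gives one inequality and a per-coordinate maximum-entropy bound gives the other. Second, read off the structure of the optimizers from the finite-dimensional maximization. Throughout I work in the centered covariance eigenbasis fixed above and write $V=\Tr\Sigma$. Since $d_0+\theta d_1=(1+\theta)V-(1-\theta)q-2\theta c$, it suffices to show that the set of pairs $(q,c)$ achievable in $\cA_G(R,C)$ is controlled by $(\sum_i\sigma_i^2a_i^2,\sum_i\sigma_i^2a_ib_i)$ with $j(A)\le R$ and $j(B)\le R+C$. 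Because the objective is nondecreasing in both $q$ and $c$ when $0\le\theta\le 1$, this domination is exactly what is needed.

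\textbf{Achievability.} Given diagonal $A,B$ in the feasible set, take $U\sim\mcN(0,I_d)$ and independent standard Gaussian vectors $E,E'$, and set
\[
X_i=\sigma_i\bigl(a_iU_i+\sqrt{1-a_i^2}E_i\bigr),\qquad Y_i=\sigma_i\bigl(b_iU_i+\sqrt{1-b_i^2}E'_i\bigr).
\]
Then $X-U-Y$ is a Markov chain and $X,Y\sim G$. The information costs are $I(X;U)=j(A)$ and $I(Y;U)=j(B)$. The moments are $M_U=(\sigma_ia_iU_i)_i$, $q=\sum_i\sigma_i^2a_i^2$ and $c=\sum_i\sigma_i^2a_ib_i$. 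This gives ``$\le$'' in \eqref{eq:gaussian-weighted-energy}.

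\textbf{Converse.} Take any chain in $\cA_G(R,C)$, and set $M=\E[X\mid U]$ and $N=\E[Y\mid U]$. Put $q_i=\E M_i^2$, $p_i=\E N_i^2$, $a_i=\sqrt{q_i}/\sigma_i$ and $b_i=\sqrt{p_i}/\sigma_i$.
\begin{enumerate}
\item Since $X$ is Gaussian, the Gaussian maximum-entropy bound and Hadamard's inequality give
\[
R\ge I(X;U)\ge h(X)-h(X-M)\ge\sum_i\tfrac12\log\frac{\sigma_i^2}{\sigma_i^2-q_i}=j(A).
\]
\item The same argument gives $j(B)\le R+C$. In both cases $a_i,b_i<1$ because the rates are finite.
\item By the Markov property, $\E[X_iY_i]=\E[M_iN_i]\le\sqrt{q_ip_i}=\sigma_i^2a_ib_i$.
\end{enumerate}
Hence $(q,c)$ is dominated by a feasible Gaussian pair, which gives ``$\ge$''. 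The feasible set is compact: $j(A)\le R$ keeps each $a_i\le\rho(R)<1$, and similarly for $B$. So the maximum is attained.

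\textbf{Structure of optimizers.} If $\theta=0$, the objective does not involve $B$, and $B=A$ is feasible because $R\le R+C$. Now let $\theta,R>0$. The objective $\Phi(A,B)=\sum_i\sigma_i^2[(1-\theta)a_i^2+2\theta a_ib_i]$ has nonnegative coefficients.
\begin{enumerate}
\item \emph{Saturation.} At an optimizer $A\ne0$, since a small positive $a_i$ already gives a positive value. If a constraint were slack, raising a coordinate with positive partial derivative would strictly increase $\Phi$. For the $B$-constraint such a coordinate exists because $A\ne0$; for the $A$-constraint it is $a_i>0$ itself. So both constraints are saturated.
\item \emph{Support matching.} Suppose $a_i=0<b_i$. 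Then the $b_i$-budget can be moved to a coordinate with $a_k>0$ for a first-order gain. Conversely, if $b_i=0<a_i$, raising $b_i$ costs only second order, since $g'(0)=0$, while $\partial\Phi/\partial b_i=2\theta\sigma_i^2a_i>0$. A first-order budget transfer along these lines excludes both cases.
\item \emph{KKT system.} On the common support $\{a_i>0\}$, the KKT conditions with multipliers $\kappa,\lambda>0$ read
\[
\frac{a_i}{1-a_i^2}=\frac{2\sigma_i^2\bigl((1-\theta)a_i+\theta b_i\bigr)}{\kappa},\qquad \frac{b_i}{1-b_i^2}=\frac{2\theta\sigma_i^2a_i}{\lambda}.
\]
\end{enumerate}

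\textbf{Main obstacle: the ordering $A\preceq B$.} My first attempt would use these equations. Setting $t_i=b_i/a_i$, they become $1/(1-a_i^2)=2\sigma_i^2(1-\theta+\theta t_i)/\kappa$ and $t_i/(1-b_i^2)=2\theta\sigma_i^2/\lambda$. The first is increasing in $t_i$, and in the second the left side is increasing in $t_i$ for fixed $a_i$. I would show that $t_i\ge1$ holds uniformly across coordinates, by comparing $\kappa$ and $\lambda$ using the fact that the $B$-budget $R+C$ is at least the $A$-budget $R$. If the multiplier comparison becomes messy, the fallback is an exchange argument. Replace $(A,B)$ on the coordinates where $a_i>b_i$ by a rebalanced pair $(A',B')$ that lowers $j(A)$ and raises $j(B)$ while keeping $\Phi$ non-decreasing, then re-spend the freed $A$-budget. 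Finally, to show that every optimizer (not just some optimizer) is ordered, I would use strict concavity of the reduced problem in $B$ for fixed $A$: since $g$ is strictly convex, the $B$-maximizer for fixed $A$ is unique.
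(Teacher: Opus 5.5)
\textbf{The ordering $A\preceq B$ is not proved.} Your derivation of \eqref{eq:gaussian-weighted-energy} is correct. The converse via $h(X)-h(X-M)$ and Hadamard's inequality is a clean alternative to the paper's route (coordinatewise subadditivity plus the scalar Gaussian RD bound), and it yields exactly $j(A)$ with $a_i=\sqrt{q_i}/\sigma_i$. Your saturation, support-matching and KKT steps match the paper's, done in amplitude rather than rate coordinates. One small repair: at $\theta=1$ one has $\partial\Phi/\partial a_i=2\sigma_i^2b_i$. So saturate the $B$-budget first, then raise $a_j$ at a coordinate with $b_j>0$.

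The ordering is the substantive content of the lemma and the property the marked construction relies on. You only announce that you ``would show'' $t_i\ge1$ uniformly by comparing $\kappa$ and $\lambda$, with an unspecified exchange argument as fallback. Your two separate monotonicity remarks do not give a coordinate-independent sign, because $\sigma_i^2$ varies across coordinates. The missing step is to eliminate $\sigma_i^2$ between your two stationarity equations. This gives
\[
 \frac{\theta\kappa}{\lambda}
 =\frac{t_i(1-a_i^2)(1-\theta+\theta t_i)}{1-a_i^2t_i^2}.
\]
For fixed $a_i$, the right side is strictly increasing in $t_i$ and equals $1$ at $t_i=1$. Equivalently, as in the paper, subtracting the equations gives
\[
 \frac{\kappa}{2}-\frac{\lambda}{2\theta}
 =\sigma_i^2(b_i-a_i)\Bigl[\frac{\theta(1-a_i^2)}{a_i}+\frac{1}{b_i}+a_i\Bigr],
\]
and the bracket is positive. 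Either way, the sign of $b_i-a_i$ is the same at every active coordinate. Suppose that sign were negative. Then $g(b_i)<g(a_i)$ on the common support, and both vanish off it. Your two saturations would then give $R+C=j(B)<j(A)=R$, a contradiction.

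\textbf{The ``every optimizer'' argument fails.} Uniqueness of the $B$-maximizer for a fixed $A$ says nothing about an optimizer $(A',B')$ with a different $A'$. It therefore cannot exclude unordered optimizers. No such device is needed. When $R>0$ and $\theta>0$, saturation, support matching and the KKT system hold at every optimizer. The sign argument above therefore applies to each optimizer directly, which is how the paper obtains this clause.
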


The proof in \Cref{app:gaussian-structure} first bounds the conditional means of an arbitrary chain with Gaussian endpoints in terms of its coordinatewise information costs. Independent Gaussian channels sharing a common Gaussian observation attain these bounds. The optimality conditions then imply an ordering of the observation strengths. In particular, the matrices $A$ and $B$ describe a Gaussian optimizer for \eqref{eq:weighted-energy}.

For these strengths, the two Gaussian moments are
\begin{equation}\label{eq:rdpc-gaussian-targets}
 q_G=\Tr(\Sigma A^2),\qquad c_G=\Tr(\Sigma AB).
\end{equation}
We will construct a chain for the original source with information costs at most $j(A)$ and $j(B)$, posterior energy at least $q_G$, and endpoint correlation at least $c_G$. The ordering $A\preceq B$ suggests placing the common auxiliary closer to one endpoint of a midpoint bridge, allowing the two endpoints to have different information costs with respect to the same auxiliary. This motivates the following marked midpoint construction.

\subsection{The marked midpoint construction}
\label{subsec:rdpc-construction}

To construct such a chain, we retain an intermediate observation along one half of the midpoint bridge and use it as the auxiliary variable $U$. One endpoint is linked to $U$ through a stronger Gaussian observation, while the other reaches it through the midpoint. We choose the observation strengths to meet the two information budgets while preserving the required moment gains. We state this extension of the midpoint construction in the centered covariance coordinates established in \Cref{subsec:gaussian-moment-targets}.

\begin{theorem}[Marked midpoint coupling]
\label{thm:marked-bridge}
Let $\mu$ be a centered probability law on $\R^d$ with covariance
$\Sigma=\diag(\sigma_i^2)\succ0$.  For every pair of diagonal matrices
$0\preceq A\preceq B\prec I_d$, there exists a Markov chain $X-U-Y$,
with $X,Y\sim\mu$, satisfying
\begin{align}
 I(X;U)&\le j(A),& I(Y;U)&\le j(B),
 \label{eq:marked-theorem-rates}\\
 \E\norm{\E[X\mid U]}^2&\ge\Tr(\Sigma A^2),&
 \E\ip{X}{Y}&\ge\Tr(\Sigma AB).
 \label{eq:marked-theorem-moments}
\end{align}
\end{theorem}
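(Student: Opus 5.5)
We would build the chain by reusing the midpoint gluing of \Cref{sec:midpoint-proof}, with one extra OU observation kept on one half-bridge. First standardize: write $\nu=\Law(\Sigma^{-1/2}X)$. Diagonal $A,B$ commute with $\Sigma$, so the targets become $\Tr(\Sigma A^2)$ and $\Tr(\Sigma AB)$ in weighted form, and it suffices to prove the coordinatewise statements $\E[\E[X_i\mid U]^2]\ge a_i^2$ and $\E[X_iY_i]\ge a_ib_i$ for $\nu$. We need the latter only summed with weights $\sigma_i^2$, but proving it per coordinate is cleaner.

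\textbf{The construction.} Take diagonal contractions $\beta,T$ to be chosen below. Let $Y\sim\nu$, let $U=\beta Y+(I_d-\beta^2)^{1/2}E_1$ (an OU observation $K_\beta$), and let $Z\sim K_{T_2}(U,\cdot)$, so that $\Law(Z)=\nu K_{\beta T_2}$. Independently run the other half-bridge $X\sim\nu$, $Z'\sim K_{T_1}(X,\cdot)$ with $T_1=\beta T_2$. The midpoint laws then agree, and we glue by drawing $X$ from the posterior $L^\nu_{T_1}(Z,\cdot)$ of \eqref{eq:posterior-half-bridge}. This gives a chain $X-Z-U-Y$, hence the Markov chain $X-U-Y$, with both endpoints distributed as $\nu$. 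In the Gaussian case the strengths are $\corr(Y,U)=\beta$ and $\corr(X,U)=\beta T_1T_2$. This suggests the target choice $\beta=B$ and $\beta^2T_2^2=A$, which is legitimate because $A\preceq B$ allows $T_2=(A)^{1/2}B^{-1}$ to be a contraction, with zero-support coordinates handled by letting those strengths vanish.

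\textbf{The rate constraints.} The second constraint is the easy one. $I(Y;U)$ is the mutual information of a Gaussian channel with input covariance $I_d$ used coordinatewise, so Gaussian input maximizes it and $I(Y;U)\le j(\beta)$. The first constraint is the hard one, because $X$ reaches $U$ only through a Bayes-reversed non-Gaussian half-bridge. For it we would repeat the relative-entropy bookkeeping of \Cref{lem:ou-midpoint}: write the path density relative to the Gaussian path law as $f(x)f(y)/f_{T_1}(z)$, and use the chain rule to express $I(X;U)$ through $\KL{\nu}{\gamma}$, $\KL{\nu K_{T_1}}{\gamma}$, $\KL{\nu K_\beta}{\gamma}$, and Gaussian terms. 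Because the non-Gaussian costs need not equal the Gaussian ones, we do not insist on $\beta=B$ exactly. Instead we introduce a scalar (or per-coordinate) reparametrization of the two strengths and define the deficit map
\[
\Phi(\beta,T_1)=\bigl(I(X;U)-j(A),\,I(Y;U)-j(B)\bigr).
\]
We then apply the Poincar\'e--Miranda theorem on a box of strengths to obtain a parameter at which both deficits are $\le0$. Each face of the box sends one coordinate of $\Phi$ to a definite sign: zero strength gives zero information, and full strength gives information at least the Gaussian value.

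\textbf{The moment bounds, and the main obstacle.} Here we need continuity of $\Phi$ in the strengths (weak continuity of OU kernels plus lower semicontinuity and data processing for the upper bound), and monotonicity of the moments along the chosen parameter curve. The posterior energy is handled by orthogonality. Since $U$ is an OU observation with fixed strength, the linear estimator $\beta' U$ already gives $\E[\E[X_i\mid U]^2]\ge(\E[X_iU_i])^2$. The same Cauchy--Schwarz step as in \Cref{subsec:posterior-interpolation} then reduces the bound to a cross-covariance bound. The cross-covariances $\E[X_iY_i]$ and $\E[X_iU_i]$ pass through the nonlinear posterior $L^\nu_{T_1}$, so they cannot be computed exactly. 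We would control them as in \Cref{prop:free-energy}: embed the reward $\E[X^\top MY]$ into the free energy, use the midpoint entropy gain to show that the constructed coupling beats the Gaussian reference value, and then pass from free energy back to moments by the duality \eqref{eq:fenchel}. The main obstacle is making this duality compatible with the two simultaneous information budgets, that is, tuning the strengths so that the rate equalities and the moment inequalities hold at the same parameter. We would try to fold the moment deficits into the Poincar\'e--Miranda map as additional coordinates. Failing that, we would prove that the moments are monotone in the strengths, so that meeting the rate equalities at strengths no smaller than the Gaussian ones automatically yields \eqref{eq:marked-theorem-moments}.
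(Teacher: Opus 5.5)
Your chain is the right one. With $\beta,T_2,T_1$ in the roles of $B,\Delta,T$ it is exactly the marked midpoint bridge. Your handling of $I(Y;U)$ is correct, and so is the reduction of the posterior energy to $\E[X_iU_i]$ through the linear predictor. There is an algebra slip. In the Gaussian case the $X$-to-$U$ strength is $T_1T_2=\beta T_2^2$, not $\beta T_1T_2$. So the Gaussian choice is $T_2=(AB^{-1})^{1/2}$, which is a contraction because $A\preceq B$; your $A^{1/2}B^{-1}$ generally is not (take $a_i=1/2$, $b_i=3/5$).

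The genuine gap is the step you flag as the main obstacle. The missing idea is that the moments are \emph{not} out of reach: they are exact functions of the midpoint posterior energy. Keep $\beta=B$, let $\alpha_i=b_i\delta_i^2$ be the composite strength (with $T_2=\diag(\delta_i)$), and let $t_i=\sqrt{b_i\alpha_i}$ be the midpoint strength. Given $Z$ the endpoints are conditionally i.i.d., so $\E[X_iY_i]=Q_\nu(T_1)_{ii}=b_i\alpha_i+e_i$ with $e_i=Q_\nu(T_1)_{ii}-t_i^2\ge0$. Gaussian regression along the forward chain $Y\to U\to Z$, together with $X-Z-U$, gives $\E[X_iU_i]=\alpha_i+k_ie_i$ with $k_i=(b_i-\alpha_i)/(1-b_i\alpha_i)\ge0$.

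Your entropy bookkeeping also does not reduce to KL terms plus Gaussian constants. Compare the path with $\nu(\dd x)K_{T_1}(x,\dd z)K_{T_2}(z,\dd u)$; the density ratio is $f_B(u)/f_{T_1}(z)$ with $f_D=\dd(\nu K_D)/\dd\gamma$, and no density for $\nu$ is needed. Changing the reference to $\nu\otimes\nu K_B$ leaves the bilinear term of $\log\frac{\dd K_{A_\alpha}(x,\cdot)}{\dd\gamma}(u)$, with $A_\alpha=\diag(\alpha_i)$. Its expectation exceeds its Gaussian value by $\sum_i g'(\alpha_i)k_ie_i$, so
\[
 I(X;U)\le\sum_i\bigl[g(\alpha_i)+g'(\alpha_i)k_ie_i\bigr]-\KL{\nu K_{T_1}}{\gamma}.
\]
The information overshoot in coordinate $i$ is therefore $g'(\alpha_i)$ times exactly the excess that raises $\E[X_iU_i]$.

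This dictates the fix. Since $I(Y;U)=j(B)-\KL{\nu K_B}{\gamma}\le j(B)$ for every choice, adjust only $\alpha\in\prod_i[0,a_i]$. Solve the $d$ equations $g(\alpha_i)+g'(\alpha_i)k_ie_i=g(a_i)$ by Poincar\'e--Miranda, using face values $-g(a_i)<0$ and $g'(a_i)k_ie_i\ge0$ and continuity from the continuity of $Q_\nu$. Convexity of $g$ then gives $k_ie_i\ge a_i-\alpha_i$, hence $\E[X_iU_i]\ge a_i$. Also $b_i(a_i-\alpha_i)\le b_ik_ie_i\le e_i$ because $b_ik_i\le1$, so $\E[X_iY_i]\ge a_ib_i$.

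None of your proposed substitutes closes this gap.
\begin{itemize}
\item A zero of $\bigl(I(X;U)-j(A),I(Y;U)-j(B)\bigr)$ says nothing about the moments.
\item That map has two components against $2d$ per-coordinate parameters. A single aggregate balance cannot give the coordinatewise bounds $\E[X_iU_i]\ge a_i$ that $\sigma_i^2$-weighted targets require.
\item Your claimed face sign fails: $I(Y;U)=j(\beta)-\KL{\nu K_\beta}{\gamma}$ lies \emph{below} the Gaussian value, strictly so for non-Gaussian $\nu$ and $\beta\succ0$.
\item Poincar\'e--Miranda needs genuine continuity of the actual mutual information, which lower semicontinuity does not supply. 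The paper balances the explicit, continuous upper bound instead.
\item The free-energy route of \Cref{prop:free-energy} controls reward minus information of an optimizing self-coupling, not the moments of a particular chain. Extending it to two budgets on a common auxiliary would need a new duality. For each multiplier it would also need a lower bound on your chain's Lagrangian, which again comes down to the identities above.
\item The monotonicity fallback runs the wrong way. The correction $\sum_i g'(a_i)k_ie_i$ is not known to be dominated by $\KL{\nu K_{T_1}}{\gamma}$, so respecting $I(X;U)\le j(A)$ may force $\alpha_i<a_i$. There, monotone moments would fall below the Gaussian targets. What rescues them is the quantitative link between the rate saved, $g(a_i)-g(\alpha_i)$, and the excess $k_ie_i$.
\end{itemize}
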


We first specify the joint law, leaving the observation strengths free. We then bound the moments and mutual information costs in terms of these strengths. These bounds guide the choice of parameters through the Poincar\'e--Miranda theorem.

\subsubsection{Gluing the two half-bridges}

Whiten the source and denote the resulting standardized law by $\nu$:
\[
 W=\Sigma^{-1/2}X\sim\nu,\qquad \E W=0,\quad\Cov(W)=I_d.
\]
We use the reversible Gaussian observation kernels $K_D$ defined in
\eqref{eq:OU-kernel-definition}, with $\gamma=\mcN(0,I_d)$ and diagonal
$0\preceq D\preceq I_d$. These kernels obey the composition rule
\eqref{eq:OU-semigroup}.

Start with the stronger observation $\widetilde W\xrightarrow{K_B}U$, where $\widetilde W\sim\nu$, and degrade $U$ to a coarser observation $Z$ through $K_\Delta$. The composite observation has strength $T=B\Delta$. Thus, we can form a second half-bridge, $W\xrightarrow{K_T}Z$, with the same endpoint marginal:
\[
 \nu K_T=\nu K_BK_\Delta.
\]
This shared marginal is where the two half-bridges meet.

The lower-rate Gaussian reference path
$W\xrightarrow{K_T}Z\xrightarrow{K_\Delta}U$ has composite strength $T\Delta$. For a Gaussian source, the reverse transition from $Z$ to $U$ is again $K_\Delta$, so choosing $T\Delta=A$ gives $I(W;U)=j(A)$. For a general source, however, this reverse transition depends on $\nu$, and the Gaussian channel formula cannot be applied directly to $I(W;U)$.
We therefore leave the composite strength adjustable, writing
$A_\alpha=\diag(\alpha_i)$ with $0\le\alpha_i\le a_i$. Reducing the strength creates a rate margin of $j(A)-j(A_\alpha)$.
We leave $\alpha$ unspecified until we have bounded the actual information cost and both moments, so that its choice can account for all the constraints. The two composition requirements are
\begin{equation}\label{eq:marked-design-equations}
 B\Delta=T,\qquad T\Delta=A_\alpha.
\end{equation}
They determine
\begin{equation}\label{eq:marked-design-parameters}
 t_i=\sqrt{b_i\alpha_i},\qquad
 \delta_i=\sqrt{\alpha_i/b_i}\quad\text{if }b_i>0;
 \qquad t_i=\delta_i=0\quad\text{if }b_i=0.
\end{equation}
Thus, $T^2=BA_\alpha$ and $0\preceq T\preceq B\prec I_d$. The mark $U$ lies between $\widetilde W$ and the midpoint $Z$. When $A_\alpha=B$, we can take $\Delta=I_d$, so $U=Z$ and the ordinary midpoint construction is recovered.

\begin{figure}[t]
\centering
\begin{tikzpicture}[
 endpoint/.style={draw,rounded corners,minimum width=21mm,minimum height=9mm,
                   fill=blue!5,align=center},
 interior/.style={endpoint,fill=orange!9},
 channel/.style={-{Latex[length=2mm]},thick},
 lab/.style={font=\small,align=center},
 node distance=19mm]
 \node[endpoint] (w) at (0,0) {$W\sim\nu$};
 \node[interior] (zl) at (3,0) {$Z_1$};
 \node[interior] (zr) at (5.8,0) {$Z_2$};
 \node[interior] (u) at (8.5,0) {$U$};
 \node[endpoint] (y) at (11.2,0) {$\widetilde W\sim\nu$};
 \draw[channel] (w) -- node[above,lab] {$K_T$} (zl);
 \draw[channel] (y) -- node[above,lab] {$K_B$} (u);
 \draw[channel] (u) -- node[above,lab] {$K_\Delta$} (zr);
 \draw[decorate,decoration={brace,mirror,amplitude=4pt}]
 (2.7,-0.8) -- (6.1,-0.8);
 \node[lab] at (4.4,-1.3) {same law $\nu K_T$\\conditional gluing};
 \node[lab,below=2mm of u] {retained mark};
 \node[endpoint] (w2) at (0,-2.6) {$W$};
 \node[interior] (z2) at (3.73,-2.6) {$Z$};
 \node[interior] (u2) at (7.47,-2.6) {$U$};
 \node[endpoint] (y2) at (11.2,-2.6) {$\widetilde W$};
 \draw[channel] (w2) -- node[below,lab] {$K_T$} (z2);
 \draw[channel] (z2) -- node[below,lab] {$L_\Delta$} (u2);
 \draw[channel] (u2) -- node[below,lab] {$L_B$} (y2);
\end{tikzpicture}
\caption{The top row shows two observation laws with the same midpoint marginal, $\Law(Z_1)=\Law(Z_2)=\nu K_T$. Conditional gluing produces the path shown in the bottom row. Its transitions $L_\Delta$ and $L_B$ are the Bayes reversals of the observation steps $U\to Z_2$ and $\widetilde W\to U$, respectively, under the source-dependent marginals. The resulting variables form the Markov chain $W-Z-U-\widetilde W$.}
\label{fig:marked-midpoint}
\end{figure}

Fig. \ref{fig:marked-midpoint} distinguishes the observation structure from the sampling procedure. The top row shows the forward Gaussian observations that define the two half-bridges. In the bottom row, we sample $W$, then $Z$, and then the right half-bridge conditional on $Z$. The transition kernels are Bayes reversals under the corresponding source-dependent marginals.

We now describe the joint law explicitly. Define
\begin{equation}\label{eq:marked-smoothed-laws}
 \nu_B=\nu K_B,\quad \nu_T=\nu K_T,\qquad
 f_B=\frac{\dd \nu_B}{\dd \gamma},\quad
 f_T=\frac{\dd \nu_T}{\dd \gamma}.
\end{equation}
Since $B,T\prec I_d$, both observations have nondegenerate Gaussian noise. Hence the smoothed densities $f_B$ and $f_T$ exist and are strictly positive, even if $\nu$ has no density. In our kernel notation, the first argument is the input, and the second denotes the output measure. Thus, the forward observation $U\to Z$ has joint law $\nu_B(\dd u)K_\Delta(u,\dd z)$. Reversibility with respect to $\gamma$ gives
\[
 \begin{split}
 \nu_B(\dd u)K_\Delta(u,\dd z)
 &=f_B(u)\gamma(\dd u)K_\Delta(u,\dd z)\\
 &=\gamma(\dd z)f_B(u)K_\Delta(z,\dd u).
 \end{split}
\]
The $Z$ marginal is $\nu_T$, since $\nu_BK_\Delta=\nu_T$. Integrating the last expression over $u$ therefore yields
\[
 f_T(z)=\int f_B(u)K_\Delta(z,\dd u)=(K_\Delta f_B)(z).
\]
Dividing by this marginal density gives the conditional law of $U$ given $Z=z$:
\begin{equation}\label{eq:marked-reverse-kernel}
 L_\Delta(z,\dd u)=\frac{f_B(u)}{f_T(z)}K_\Delta(z,\dd u).
\end{equation}
Indeed,
\[
 \nu_T(\dd z)L_\Delta(z,\dd u)=\nu_B(\dd u)K_\Delta(u,\dd z),\qquad
 \int L_\Delta(z,\dd u)=\frac{(K_\Delta f_B)(z)}{f_T(z)}=1.
\]
The resulting law
\begin{equation}\label{eq:marked-path-law}
 \mathsf P_{WZU}(\dd w,\dd z,\dd u)
 =\nu(\dd w)K_T(w,\dd z)L_\Delta(z,\dd u)
\end{equation}
has marginals $\nu$, $\nu_T$, and $\nu_B$ for $W$, $Z$, and $U$, respectively, and satisfies the Markov property $W-Z-U$. We write $\mathsf P_{WU}$ and $\mathsf P_U$ for its corresponding marginals.
To complete the path, disintegrate the stronger observation:
\begin{equation}\label{eq:marked-high-disintegration}
 \nu(\dd \widetilde w)K_B(\widetilde w,\dd u)
 =\nu_B(\dd u)L_B(u,\dd \widetilde w).
\end{equation}
This posterior kernel exists for any source law $\nu$. Indeed, since $B\prec I_d$, the Gaussian kernel $K_B(\widetilde w,\dd u)$ has a Lebesgue density, which we denote by $k_B(u\mid\widetilde w)$. Bayes' formula then gives
\[
 L_B(u,\dd \widetilde w)
 =\frac{k_B(u\mid\widetilde w)\nu(\dd \widetilde w)}
        {\int k_B(u\mid v)\nu(\dd v)}.
\]
The denominator is strictly positive, so this formula does not require $\nu$ to have a density.
Extend the path by sampling $\widetilde W$ from $L_B(U,\cdot)$, conditionally independently of $(W,Z)$ given $U$. Then
\begin{equation}\label{eq:marked-full-chain}
 W-Z-U-\widetilde W,\qquad W,\widetilde W\sim\nu.
\end{equation}
The two reverse-kernel identities show that $(\widetilde W,U,Z)$ has joint law
\[
 \nu(\dd \widetilde w)K_B(\widetilde w,\dd u)K_\Delta(u,\dd z).
\]
Since $K_BK_\Delta=K_T$, the pairs $(\widetilde W,Z)$ and $(W,Z)$ have the same law. The Markov property then implies that, conditional on $Z$, the two endpoints are independent draws from the same posterior distribution. Thus, retaining the mark $U$ preserves the midpoint coupling associated with $K_T$.

\subsubsection{Posterior energy and endpoint correlation}
\label{subsec:marked-estimates}

For the constructed chain, we seek lower bounds on
\[
 \E\norm{\E[W\mid U]}^2
 \qquad\text{and}\qquad
 \E\ip{W}{\widetilde W}.
\]
We keep the estimates coordinatewise so that the covariance weights
$\sigma_i^2$ can be restored at the end. The midpoint $Z$ makes the
second quantity directly accessible: given $Z$, the endpoints are
independent and have the same conditional mean
\[
 m_T(Z):=\E[W\mid Z]=\E[\widetilde W\mid Z].
\]
Therefore,
\[
 \E[W\widetilde W^\top]
 =\E\bigl[\E[W\mid Z]\E[\widetilde W\mid Z]^\top\bigr]
 =\E[m_T(Z)m_T(Z)^\top].
\]
Thus, the endpoint cross-moment is determined by the second moment of
the posterior mean at the midpoint. To track this moment as the
observation strength varies, define
\begin{equation}\label{eq:marked-posterior-energy}
 Q_\nu(D)
 =\E\bigl[\E[W\mid Z_D]\E[W\mid Z_D]^\top\bigr],\qquad
 Z_D=DW+(I_d-D^2)^{1/2}G,
\end{equation}
where $G\sim\gamma$ is independent of $W$. For a Gaussian source, the posterior mean is $DZ_D$, so $Q_\gamma(D)=D^2$. For a general standardized source, $DZ_D$ remains the best linear estimate, but the posterior mean may also have a nonlinear component. Since
$\E[Z_DZ_D^\top]=I_d$ and
$\E[\E[W\mid Z_D]Z_D^\top]=\E[WZ_D^\top]=D$, expanding the square gives
\begin{equation}\label{eq:marked-residual-matrix}
 Q_\nu(D)-D^2
 =\E\bigl[(\E[W\mid Z_D]-DZ_D)
          (\E[W\mid Z_D]-DZ_D)^\top\bigr]\succeq0.
\end{equation}
At the midpoint, define the excess in each coordinate by
\begin{equation}\label{eq:marked-excess-definition}
 e_i:=\E\bigl[(m_{T,i}(Z)-t_iZ_i)^2\bigr]
     =Q_\nu(T)_{ii}-t_i^2\ge0.
\end{equation}
Since $t_i^2=b_i\alpha_i$, the endpoint cross-moment satisfies
\begin{equation}\label{eq:marked-endpoint-energy}
 \E[W\widetilde W^\top]=Q_\nu(T),\qquad
 \E[W_i\widetilde W_i]=b_i\alpha_i+e_i.
\end{equation}
Here, $b_i\alpha_i$ is the Gaussian value at the chosen strengths, and $e_i$ is the additional endpoint correlation.

To bound the other target, define
$\overline W=\E[W\mid U]$ and $H=\E[WU^\top]$.
The linear predictor $HU$ provides a lower bound on the second moment of $\overline W$. Indeed, since $\E[UU^\top]=I_d$ and $\E[\overline WU^\top]=H$,
\begin{equation}\label{eq:marked-posterior-linear-bound}
 \E[\overline W\overline W^\top]-HH^\top
 =\E[(\overline W-HU)(\overline W-HU)^\top]\succeq0.
\end{equation}
In particular,
\[
 \E\norm{\E[W\mid U]}^2
 \ge\Tr(HH^\top)\ge\sum_i H_{ii}^2.
\]
It remains to compute $H_{ii}=\E[W_iU_i]$. The midpoint again helps: $W$ and $U$ are conditionally independent given $Z$, so
\[
 H_{ii}=\E\bigl[m_{T,i}(Z)\E[U_i\mid Z]\bigr].
\]
We compute $\E[U_i\mid Z]$ using the forward Gaussian chain
$\widetilde W\to U\to Z$. Conditional on $\widetilde W=y$,
\[
 U_i=b_i y_i+\sqrt{1-b_i^2}\,G_i,\qquad
 Z_i=\delta_i U_i+\sqrt{1-\delta_i^2}\,G_i',
\]
where $G_i$ and $G_i'$ are independent standard Gaussian variables. Conditional on $y$, the pair $(U_i,Z_i)$ has means $(b_i y_i,t_i y_i)$ and covariance entries
\[
 \Var(U_i\mid y)=1-b_i^2,\quad
 \Var(Z_i\mid y)=1-t_i^2,\quad
 \Cov(U_i,Z_i\mid y)=\delta_i(1-b_i^2).
\]
Gaussian regression therefore gives
\begin{equation}\label{eq:marked-Gaussian-regression}
 \E[U_i\mid\widetilde W=y,Z=z]
 =b_i y_i+
 \frac{\delta_i(1-b_i^2)}{1-t_i^2}(z_i-t_i y_i).
\end{equation}
Set $\beta_i=\delta_i(1-b_i^2)/(1-t_i^2)$. Averaging
\eqref{eq:marked-Gaussian-regression} over the posterior distribution of
$\widetilde W$ given $Z$, whose mean is $m_T(Z)$, yields
\[
 \E[U_i\mid Z]=k_i m_{T,i}(Z)+\beta_i Z_i,\qquad
 k_i:=b_i-\beta_i t_i
     =\frac{b_i-\alpha_i}{1-b_i\alpha_i}\ge0.
\]
Substituting this expression into the formula for $H_{ii}$ and using
$\E[m_{T,i}(Z)Z_i]=\E[W_iZ_i]=t_i$, we obtain
\begin{align}
 H_{ii}
 &=k_iQ_\nu(T)_{ii}+\beta_i t_i\notag\\
 &=\alpha_i+k_i e_i.
 \label{eq:marked-H-identity}
\end{align}
Here, the last equality follows from
$Q_\nu(T)_{ii}=t_i^2+e_i$ and
$k_i t_i^2+\beta_i t_i=\alpha_i$.
Thus, the same midpoint gains $e_i$ improve both quantities of interest:
\[
 \begin{aligned}
 \E\norm{\E[W\mid U]}^2&\ge\sum_i(\alpha_i+k_i e_i)^2,\\
 \E\ip{W}{\widetilde W}&=\sum_i(b_i\alpha_i+e_i).
 \end{aligned}
\]
Choosing $\alpha_i=a_i$ meets the Gaussian moment targets. To satisfy the information constraints as well, we next determine how these gains affect the bound on $I(W;U)$. This will guide our choice of $\alpha_i\le a_i$.

\subsubsection{Information bounds}

The pair $(\widetilde W,U)$ is generated by a Gaussian observation kernel, so its information bound follows directly from the Gaussian channel identity. The pair $(W,U)$, however, involves the reverse kernel $L_\Delta$. We therefore compare the path through $Z$ with a Gaussian reference path.


For the first bound, recall that a standardized input with law $\nu$, observed through $K_D$ for $0\preceq D\prec I_d$, satisfies
\begin{equation}\label{eq:marked-Gaussian-channel-accounting}
 I(W;Z_D)+\KL{\nu K_D}{\gamma}=j(D).
\end{equation}
To see this, apply the relative-entropy chain rule to the joint law
$\nu(\dd w)K_D(w,\dd z)$ with respect to $\nu(\dd w)\gamma(\dd z)$:
\[
 \E \KL{K_D(W,\cdot)}{\gamma}
 =I(W;Z_D)+\KL{\nu K_D}{\gamma}.
\]
Since $\E[WW^\top]=I_d$, the left-hand side is
$-\frac12\log\det(I_d-D^2)=j(D)$. Because the two joint laws have the same input marginal $\nu$, no density for the source is needed; the identity also shows that the smoothed relative entropy is finite. Applying this identity to the observation $\widetilde W\xrightarrow{K_B}U$ gives
\begin{equation}\label{eq:marked-high-information}
 I(\widetilde W;U)=j(B)-\KL{\nu_B}{\gamma}\le j(B).
\end{equation}
Thus, this information constraint holds for every choice of $\alpha$.

For $I(W;U)$, the conditional law of $U$ given $W$ is generally not a
Gaussian channel. To bound it, compare the path law in
\eqref{eq:marked-path-law} with the reference path obtained by replacing
$L_\Delta$ with $K_\Delta$:
\begin{equation}\label{eq:marked-reference-path}
 \mathsf Q_{WZU}(\dd w,\dd z,\dd u)
 =\nu(\dd w)K_T(w,\dd z)K_\Delta(z,\dd u).
\end{equation}
The endpoint channel of this reference path is
$K_TK_\Delta=K_{A_\alpha}$. Retaining $Z$ makes the density ratio
explicit: the two path laws share the factor
$\nu(\dd w)K_T(w,\dd z)$, and \eqref{eq:marked-reverse-kernel} gives
\[
 \frac{\dd \mathsf P_{WZU}}{\dd \mathsf Q_{WZU}}(w,z,u)
 =\frac{f_B(u)}{f_T(z)}.
\]
Taking logarithms and integrating with respect to $\mathsf P_{WZU}$
yields
\begin{equation}\label{eq:marked-path-KL}
 \begin{split}
 \KL{\mathsf P_{WZU}}{\mathsf Q_{WZU}}
 &=\E_{\mathsf P_{WZU}}[\log f_B(U)-\log f_T(Z)]\\
 &=\KL{\nu_B}{\gamma}-\KL{\nu_T}{\gamma}.
 \end{split}
\end{equation}
Projecting onto $(W,U)$ then bounds the divergence from the Gaussian
channel:
\begin{equation}\label{eq:marked-projected-KL}
 \KL{\mathsf P_{WU}}{\nu(\dd w)K_{A_\alpha}(w,\dd u)}
 \le \KL{\nu_B}{\gamma}-\KL{\nu_T}{\gamma}.
\end{equation}
To obtain a mutual-information bound, change the reference measure to
the product $\nu\otimes\nu_B$ of the actual marginals. The term
$\KL{\nu_B}{\gamma}$ then cancels:
\begin{align}
 I(W;U)
 &=\KL{\mathsf P_{WU}}{\nu(\dd w)K_{A_\alpha}(w,\dd u)}
   -\KL{\nu_B}{\gamma}\notag+\E\log\frac{\dd K_{A_\alpha}(W,\cdot)}{\dd \gamma}(U)\notag\\
 &\le-\KL{\nu_T}{\gamma}
   +\E\log\frac{\dd K_{A_\alpha}(W,\cdot)}{\dd \gamma}(U).
 \label{eq:marked-information-reference-change}
\end{align}
The remaining log-likelihood ratio is quadratic, so its expectation
depends only on the second moments already computed:
\begin{equation}\label{eq:marked-Gaussian-log-density}
 \log\frac{\dd K_{A_\alpha}(w,\cdot)}{\dd \gamma}(u)
 =j(A_\alpha)+\sum_{i=1}^d
 \left\{\frac{\alpha_iw_iu_i}{1-\alpha_i^2}
 -\frac{\alpha_i^2(w_i^2+u_i^2)}{2(1-\alpha_i^2)}\right\}.
\end{equation}
Both $W$ and $U$ are centered with covariance $I_d$, so the expected quadratic terms involving
 $W^2_i$ and $U^2_i$ match their Gaussian values. The correction
therefore depends on $H_{ii}-\alpha_i=k_i e_i$. Taking expectations and
using $g'(\alpha)=\alpha/(1-\alpha^2)$ gives
\begin{align}
 I(W;U)
 &\le j(A_\alpha)-\KL{\nu_T}{\gamma}
       +\sum_i g'(\alpha_i)(H_{ii}-\alpha_i)\notag\\
 &=\sum_i\bigl[g(\alpha_i)+g'(\alpha_i)k_i e_i\bigr]
       -\KL{\nu_T}{\gamma}.
 \label{eq:marked-low-information}
\end{align}
This estimate explains why we left the strengths adjustable. At the
unadjusted choice $\alpha_i=a_i$, it becomes
\[
 I(W;U)\le j(A)+\sum_i g'(a_i)k_i e_i-\KL{\nu_T}{\gamma},
\]
with $T$, $k_i$, and $e_i$ evaluated at that choice. Although the
moment estimates give $e_i\ge0$, they do not show that the nonnegative
correction is bounded by $\KL{\nu_T}{\gamma}$. Thus they do not by
themselves establish $I(W;U)\le j(A)$. For a Gaussian source, both
terms vanish, so the unadjusted choice $A_\alpha=A$ suffices.

Choosing $\alpha_i\le a_i$ creates a rate margin
$g(a_i)-g(\alpha_i)$ that can offset the correction
$g'(\alpha_i)k_i e_i$. Reducing the observation strengths also lowers
the baseline moments, but the gains $e_i$ and $k_i e_i$ can compensate
for this loss. We therefore seek coordinatewise choices of $\alpha_i$
that balance the rate and moment requirements. The next subsection
shows that such a choice exists and meets both moment targets.

\subsection{Rate balancing and Gaussian extremality}
\label{subsec:rdpc-rate-selection}

The preceding estimates reduce the construction to choosing the
observation strengths. We first choose them to satisfy both information
constraints, then use the resulting moment bounds to compare the RDPC
objective with the Gaussian benchmark in \eqref{eq:gaussian-parallel}.

\subsubsection{Simultaneous rate balancing}

\begin{proof}[Completion of the proof of \Cref{thm:marked-bridge}]
Let $\mathcal I=\{i:a_i>0\}$. If $\mathcal I=\varnothing$, take
independent endpoints and a constant auxiliary. Otherwise, define
\[
 \mathcal C=\prod_{i\in\mathcal I}[0,a_i]
\]
and set $\alpha_i=0$ for $i\notin\mathcal I$. Each
$\alpha\in\mathcal C$ determines a construction. Choose the observation
strengths by solving
\begin{equation}\label{eq:rate-stop}
 F_i(\alpha):=g(\alpha_i)+g'(\alpha_i)k_i(\alpha)e_i(\alpha)-g(a_i)=0,
 \qquad i\in\mathcal I.
\end{equation}
This equation balances the rate saved,
$g(a_i)-g(\alpha_i)$, against the posterior correction
$g'(\alpha_i)k_i e_i$. Although the system has one equation per
coordinate, each posterior correction depends on the full observation
vector. Dependence among the source coordinates therefore couples the
equations.

The posterior-energy continuity lemma in
\Cref{app:posterior-continuity} shows that each $e_i(\alpha)$ is jointly
continuous on $\mathcal C$. Since $T(\alpha)\preceq B\prec I_d$, the
observation noise remains uniformly nondegenerate, and the other terms
in $F_i$ are continuous as well. On each pair of opposite faces,
\begin{equation}\label{eq:marked-PM-faces}
 F_i(\alpha)\big|_{\alpha_i=0}=-g(a_i)<0,\qquad
 F_i(\alpha)\big|_{\alpha_i=a_i}=g'(a_i)k_i e_i\ge0.
\end{equation}
The Poincar\'e--Miranda theorem \cite{Miranda1940,Kulpa1997} 
therefore
gives a common zero $\alpha^*\in\mathcal C$. As a multidimensional
analogue of the intermediate value theorem, it ensures that all the
rate-balance equations vanish at the same point. The strict sign on
the lower face implies $\alpha_i^*>0$ for every $i\in\mathcal I$.

Use the common zero to set the strengths in the marked bridge. Summing
\eqref{eq:rate-stop} and applying \eqref{eq:marked-low-information} gives
\begin{equation}\label{eq:marked-low-information-stopped}
 I(W;U)\le j(A)-\KL{\nu_T}{\gamma}\le j(A).
\end{equation}
The other information constraint follows from
\eqref{eq:marked-high-information} .


It remains to verify that the rate adjustment preserves the moment
targets. The rate balances ensure this. By convexity of $g$, for each
active coordinate,
\begin{equation}\label{eq:marked-convex-rate-balance}
 g'(\alpha_i)k_i e_i
 =g(a_i)-g(\alpha_i)
 \ge g'(\alpha_i)(a_i-\alpha_i).
\end{equation}
Since $g'(\alpha_i)>0$, we have $k_i e_i\ge a_i-\alpha_i$, and hence
\begin{equation}\label{eq:marked-H-target}
 H_{ii}=\alpha_i+k_i e_i\ge a_i.
\end{equation}
If $k_i>0$, then
\[
 \frac{1}{k_i}-b_i=\frac{1-b_i^2}{b_i-\alpha_i}>0,
\]
so $e_i\ge b_i(a_i-\alpha_i)$. If $k_i=0$, the ordering
$\alpha_i\le a_i\le b_i$ forces $\alpha_i=a_i=b_i$. Thus, in either
case,
\begin{equation}\label{eq:marked-Q-target}
 Q_\nu(T)_{ii}=b_i\alpha_i+e_i\ge a_i b_i.
\end{equation}
For inactive coordinates, $a_i=\alpha_i=0$, so
$H_{ii}=b_i e_i\ge0$ and $Q_\nu(T)_{ii}\ge0$.

Returning to the original covariance coordinates in
\eqref{eq:marked-posterior-linear-bound} and
\eqref{eq:marked-endpoint-energy}  gives
\begin{align}
 \E\norm{\Sigma^{1/2}\overline W}^2
 &\ge\Tr(\Sigma HH^\top)
 \ge\sum_i\sigma_i^2 H_{ii}^2
 \ge\Tr(\Sigma A^2),
 \label{eq:marked-q-weighted}\\
 \E\ip{\Sigma^{1/2}W}{\Sigma^{1/2}\widetilde W}
 &=\Tr(\Sigma Q_\nu(T))
 \ge\Tr(\Sigma AB).
 \label{eq:marked-c-weighted}
\end{align}
The endpoints $X=\Sigma^{1/2}W$ and
$Y=\Sigma^{1/2}\widetilde W$ both have law $\mu$. This invertible
change of coordinates preserves mutual information and the Markov
property, while \eqref{eq:marked-q-weighted}--\eqref{eq:marked-c-weighted}
give the required Euclidean moment bounds.
\end{proof}

The construction, moment estimates, information bounds, and posterior-energy
continuity lemma require only finite second moments; no density for the
source is needed. Applying the inverse isometry from
\Cref{subsec:gaussian-moment-targets} carries the constructed endpoints
back to the original source, with mean $m$ and covariance
$\Sigma\succ0$. This preserves both information bounds and the centered
moment bounds. In the original coordinates, the observation-strength
matrices are $OAO^\top$ and $OBO^\top$.

\begin{corollary}[Strict improvement for positive observation strengths]
\label{cor:marked-bridge-strict}
Under the assumptions of \Cref{thm:marked-bridge}, suppose that $\mu$
is non-Gaussian and $0\prec A\preceq B\prec I_d$. Then there exists a
chain $X-U-Y$ satisfying \eqref{eq:marked-theorem-rates} and strictly
exceeding both moment bounds:
\begin{equation}\label{eq:marked-strict-moments}
 \E\norm{\E[X\mid U]}^2>\Tr(\Sigma A^2),
 \qquad
 \E\ip{X}{Y}>\Tr(\Sigma AB).
\end{equation}
\end{corollary}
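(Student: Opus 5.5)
The plan is to obtain strictness by \emph{overshooting} the targets. Apply the construction of \Cref{thm:marked-bridge} to slightly enlarged strengths $A_\eta=A+\eta I_d$ and $B_\eta=B+\eta I_d$, with $\eta>0$ small enough that $0\prec A_\eta\preceq B_\eta\prec I_d$. The moment bounds \eqref{eq:marked-theorem-moments} at $(A_\eta,B_\eta)$ then give
\[
\E\norm{\E[X\mid U]}^2\ge\Tr(\Sigma A_\eta^2)>\Tr(\Sigma A^2),
\qquad
\E\ip{X}{Y}\ge\Tr(\Sigma A_\eta B_\eta)>\Tr(\Sigma AB),
\]
since $\sigma_i^2>0$ and $a_i,b_i\ge0$. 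So everything reduces to showing that, for small $\eta$, the information costs of this chain still satisfy $I(X;U)\le j(A)$ and $I(Y;U)\le j(B)$. The source of slack is the midpoint divergence, which the proof of \Cref{thm:marked-bridge} discarded in \eqref{eq:marked-high-information} and \eqref{eq:marked-low-information-stopped}.

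\textbf{High-rate constraint.} By \eqref{eq:marked-high-information}, $I(Y;U)=j(B_\eta)-\KL{\nu K_{B_\eta}}{\gamma}$. Since $B\succ0$ is invertible and $\nu$ is non-Gaussian, the characteristic-function argument at the end of the proof of \Cref{thm:matrix-self-coupling} gives $\KL{\nu K_B}{\gamma}>0$. As $\eta\downarrow0$ we have $\nu K_{B_\eta}\rightharpoonup\nu K_B$, so lower semicontinuity of relative entropy gives $\liminf_\eta\KL{\nu K_{B_\eta}}{\gamma}\ge\KL{\nu K_B}{\gamma}>0$. Together with $j(B_\eta)\to j(B)$, this yields $I(Y;U)<j(B)$ for small $\eta$.

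\textbf{Low-rate constraint.} By \eqref{eq:marked-low-information-stopped}, $I(X;U)\le j(A_\eta)-\KL{\nu K_{T_\eta}}{\gamma}$, where $T_\eta^2=B_\eta\,\diag(\alpha^{*,\eta})$ and $\alpha^{*,\eta}$ is the Poincar\'e--Miranda zero for the data $(A_\eta,B_\eta)$. I need a lower bound on $\KL{\nu K_{T_\eta}}{\gamma}$ that is uniform in small $\eta$.
\begin{enumerate}
\item The zeros $\alpha^{*,\eta}$ lie in $\prod_i[0,a_i+\eta]$, which is compact, so along any sequence $\eta_n\downarrow0$ they have a cluster point $\hat\alpha\in\prod_i[0,a_i]$.
\item Every coordinate of $\hat\alpha$ is positive. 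On the face $\alpha_i=0$ we have $F_i=-g(a_i+\eta)\le-g(a_i)<0$, uniformly in $\eta$. If the maps $F_i$, viewed as functions of $(\alpha,\eta)$, are jointly continuous, then their zeros stay a fixed distance from the lower faces, so $\hat\alpha_i>0$ for all $i$.
\item Hence $\hat T=(B\,\diag(\hat\alpha))^{1/2}\succ0$, and again $\KL{\nu K_{\hat T}}{\gamma}>0$.
\item Weak continuity of $D\mapsto\nu K_D$ and lower semicontinuity of relative entropy give $\liminf_n\KL{\nu K_{T_{\eta_n}}}{\gamma}\ge\KL{\nu K_{\hat T}}{\gamma}>0$.
\item Since every cluster point has this property, $\liminf_{\eta\downarrow0}\KL{\nu K_{T_\eta}}{\gamma}>0$. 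Combined with $j(A_\eta)\to j(A)$, this gives $I(X;U)<j(A)$ for small $\eta$.
\end{enumerate}
The resulting chain, taken back to the original coordinates as in the proof of \Cref{thm:marked-bridge}, satisfies \eqref{eq:marked-theorem-rates} together with \eqref{eq:marked-strict-moments}.

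\textbf{Main obstacle.} The delicate step is the joint continuity used in step 2, i.e.\ of $k_i(\alpha)e_i(\alpha)$ in $(\alpha,\eta)$. The posterior-energy continuity lemma in \Cref{app:posterior-continuity} is stated for fixed $B$; I would re-run its argument with $B_\eta$ varying. The observation noise stays uniformly nondegenerate because $T\preceq B_\eta\preceq B+\eta_0 I_d\prec I_d$, and $k_i=(b_i+\eta-\alpha_i)/(1-(b_i+\eta)\alpha_i)$ is explicitly continuous. If this extension proves awkward, the fallback is to avoid the limit altogether. Fix the zero $\alpha^*$ for the \emph{unperturbed} $(A,B)$, which satisfies $\alpha_i^*>0$ and has strictly positive slack $\KL{\nu K_{T}}{\gamma}>0$. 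Then increase the strengths slightly, via $\alpha\mapsto\alpha^*+\eta\mathbf 1$ and $B\mapsto B+\eta I_d$, and use continuity of the right-hand side of \eqref{eq:marked-low-information} at this single fixed point to keep both rates within budget while strictly raising $H_{ii}$ and $Q_\nu(T)_{ii}$.
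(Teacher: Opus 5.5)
Your proof is correct and follows the paper's overall plan. You enlarge the targets to $A+\eta I_d$ and $B+\eta I_d$, read the strict moment bounds off \Cref{thm:marked-bridge}, and pay for the extra rate with the midpoint divergences discarded in \eqref{eq:marked-high-information} and \eqref{eq:marked-low-information-stopped}.

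Where you differ is in how you bound $\KL{\nu K_{T_\eta}}{\gamma}$ away from zero uniformly in $\eta$. Your argument is qualitative: cluster points of the Poincar\'e--Miranda zeros, joint continuity of the rate-balance map, and lower semicontinuity of relative entropy. The paper instead uses an a priori bound. Since $0\le e_i^\eta\le Q_\nu(T_\eta)_{ii}\le1$ and $0\le k_i^\eta\le b_i+\eta<1$, the balance \eqref{eq:rate-stop} forces $g(\alpha_i^\eta)+g'(\alpha_i^\eta)\ge g(a_i)$. Hence $\alpha_i^\eta>\tau$ for any fixed $\tau>0$ with $g(\tau)+g'(\tau)<\min_i g(a_i)$, and so $T_\eta,B_\eta\succeq\tau I_d$. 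The factorization $K_DK_{\tau D^{-1}}=K_{\tau I_d}$ and data processing then bound both divergences below by the single constant $\kappa=\KL{\nu K_{\tau I_d}}{\gamma}>0$. This avoids compactness and semicontinuity entirely and makes the admissible $\eta$ explicit: $j(A_\eta)-j(A)<\kappa$ and $j(B_\eta)-j(B)<\kappa$. Your route gives up this explicit constant in exchange for a limit argument, which only shows that some small enough $\eta$ exists.

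The step you flag as the main obstacle is not one. \Cref{lem:posterior-continuity} is stated for arbitrary diagonal $D_n\to D$ with $D_n,D\preceq rI_d$ and involves no fixed $B$. It therefore applies verbatim to $T(\alpha,\eta)=((B+\eta I_d)\diag(\alpha))^{1/2}$ with $r=\max_i b_i+\eta_0$. In any case, the bound $k_ie_i\le1$ settles your step~2 directly. For the high-rate term, data processing through $K_{B_\eta}K_{BB_\eta^{-1}}=K_B$ gives $\KL{\nu K_{B_\eta}}{\gamma}\ge\KL{\nu K_B}{\gamma}$ without any limit.

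Your fallback also works, but ``strictly raising $H_{ii}$'' is not automatic, since $H_{ii}=\alpha_i+k_ie_i$ need not be monotone in the strengths. You would need a case split. If $\alpha_i^*<a_i$, strict convexity of $g$ gives $H_{ii}>a_i$ and $Q_\nu(T)_{ii}>a_ib_i$ already at $\eta=0$, and continuity preserves these. If $\alpha_i^*=a_i$, the shift by $\eta$ gives strictness directly.
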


\begin{proof}
We improve both moments by slightly increasing the target observation
strengths and use the resulting entropy gains to absorb the additional
information cost. 
For $0<\varepsilon<1-\max_i b_i$, set
\[
 A_\varepsilon=A+\varepsilon I_d,\qquad
 B_\varepsilon=B+\varepsilon I_d,
\]
and apply the rate-selection construction to these new targets. Denote
the resulting parameters by
$\alpha_i^\varepsilon,k_i^\varepsilon,e_i^\varepsilon$, and
$T_\varepsilon$.  By
\eqref{eq:marked-low-information-stopped} and
\eqref{eq:marked-high-information}, the constructed chain satisfies
\[
 \begin{aligned}
 I(X_\varepsilon;U_\varepsilon)
 &\le j(A_\varepsilon)-\KL{\nu K_{T_\varepsilon}}{\gamma},\\
 I(Y_\varepsilon;U_\varepsilon)
 &=j(B_\varepsilon)-\KL{\nu K_{B_\varepsilon}}{\gamma}.
 \end{aligned}
\]
We now show that both relative-entropy terms stay bounded away from
zero, so that a sufficiently small increase in the target strengths
keeps the information costs within the original budgets.

The rate balances keep the selected strengths uniformly bounded away
from zero. Indeed,
$Q_\nu(T_\varepsilon)_{ii}\le \E W_i^2=1$, so
$0\le e_i^\varepsilon\le1$. Also,
\[
 0\le k_i^\varepsilon
 =\frac{b_i+\varepsilon-\alpha_i^\varepsilon}
        {1-(b_i+\varepsilon)\alpha_i^\varepsilon}
 \le b_i+\varepsilon<1.
\]
Choose $0<\tau<\min_i a_i$ small enough that
$g(\tau)+g'(\tau)<\min_i g(a_i)$; this is possible because
$g(0)=g'(0)=0$. The rate-balance equation
\eqref{eq:rate-stop} for the perturbed targets gives
\[
 g(a_i+\varepsilon)
 =g(\alpha_i^\varepsilon)
   +g'(\alpha_i^\varepsilon)k_i^\varepsilon e_i^\varepsilon
 \le g(\alpha_i^\varepsilon)+g'(\alpha_i^\varepsilon).
\]
Since $g+g'$ is increasing, $\alpha_i^\varepsilon\le\tau$ would
contradict the choice of $\tau$. Hence $\alpha_i^\varepsilon>\tau$ and
$t_i^\varepsilon=\sqrt{(b_i+\varepsilon)\alpha_i^\varepsilon}>\tau$.
Thus $T_\varepsilon,B_\varepsilon\succeq\tau I_d$.

Set $\kappa=\KL{\nu K_{\tau I_d}}{\gamma}$. This divergence is finite
by \eqref{eq:marked-Gaussian-channel-accounting}. It is strictly
positive by the characteristic-function argument in the proof of
\Cref{thm:matrix-self-coupling}: since $\tau I_d$ is invertible and
$\nu$ is non-Gaussian, $\nu K_{\tau I_d}\ne\gamma$.

Now let $D$ be any diagonal matrix with
$\tau I_d\preceq D\prec I_d$. Since
$K_DK_{\tau D^{-1}}=K_{\tau I_d}$ and
$\gamma K_{\tau D^{-1}}=\gamma$, the data processing inequality gives
\[
 \KL{\nu K_D}{\gamma}\ge\KL{\nu K_{\tau I_d}}{\gamma}=\kappa>0.
\]
In particular, this bound applies to both
$D=T_\varepsilon$ and $D=B_\varepsilon$.

By continuity of $j$, choose $\varepsilon>0$ small enough that
$j(A_\varepsilon)-j(A)<\kappa$ and
$j(B_\varepsilon)-j(B)<\kappa$. The information bounds above then
remain within the original budgets. Meanwhile,
\eqref{eq:marked-q-weighted}--\eqref{eq:marked-c-weighted} yield
\[
 \begin{aligned}
 \E\norm{\E[X_\varepsilon\mid U_\varepsilon]}^2
 &\ge\Tr(\Sigma A_\varepsilon^2)>\Tr(\Sigma A^2),\\
 \E\ip{X_\varepsilon}{Y_\varepsilon}
 &\ge\Tr(\Sigma A_\varepsilon B_\varepsilon)>\Tr(\Sigma AB).
 \end{aligned}
\]
Thus both moment bounds are strict.
\end{proof}

\subsubsection{Comparison of the weighted errors}
\label{sec:rdpc-comparison}

We now apply the marked midpoint construction to the two-error problem
in \eqref{eq:weighted-energy}. The bridge theorem matches the information
budgets and the two moments of the Gaussian comparison chain. The
variational identity then incorporates the perception constraint.


\begin{proposition}[Comparison of the two weighted errors]
\label{prop:two-arm-comparison}
Let $\mu\in\mcP_2(\R^d)$ have mean $m$ and positive-definite
covariance $\Sigma$, and let $G=\mcN(m,\Sigma)$. For all finite
$R,C\ge0$ and $0\le\theta\le1$,
\begin{equation}\label{eq:two-arm-comparison}
 \cE_{\mu,\theta}(R,C)\le\cE_{G,\theta}(R,C).
\end{equation}
If $\mu$ is non-Gaussian and the Gaussian optimization in
\eqref{eq:gaussian-weighted-energy} admits an optimizer satisfying
$0\prec A\preceq B$, then the inequality is strict. In particular, for
a non-Gaussian scalar source with $R>0$, it is strict for every
$0\le\theta\le1$.
\end{proposition}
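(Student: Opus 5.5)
The plan is to combine the Gaussian representation of \Cref{lem:gaussian-normal-form} with the marked midpoint coupling of \Cref{thm:marked-bridge}, working throughout through the moment bookkeeping \eqref{eq:weighted-energy-moments}.

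First, reduce to centered covariance coordinates: the affine isometry of \Cref{subsec:gaussian-moment-targets} leaves $\cE_{\mu,\theta}$ and $\cE_{G,\theta}$ unchanged, so we may assume $m=0$ and $\Sigma=\diag(\sigma_i^2)$. By \Cref{lem:gaussian-normal-form}, the maximum in \eqref{eq:gaussian-weighted-energy} is attained by diagonal $0\preceq A\preceq B\prec I_d$ with $j(A)\le R$ and $j(B)\le R+C$. For attainment, note that the feasible set is compact, because $j(A)\le R$ forces $a_i\le\rho(R)<1$, and the objective is continuous. This gives
\[
 \cE_{G,\theta}(R,C)=(1+\theta)V-(1-\theta)q_G-2\theta c_G,
\]
with $q_G=\Tr(\Sigma A^2)$ and $c_G=\Tr(\Sigma AB)$ as in \eqref{eq:rdpc-gaussian-targets}.

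Next, apply \Cref{thm:marked-bridge} to this pair $(A,B)$. It produces a chain $X-U-Y$ with $X,Y\sim\mu$, $I(X;U)\le j(A)\le R$ and $I(Y;U)\le j(B)\le R+C$, so the chain lies in $\cA_\mu(R,C)$. The same theorem gives $q=\E\norm{\E[X\mid U]}^2\ge q_G$ and $c=\E\ip{X}{Y}\ge c_G$. Since $1-\theta\ge0$ and $\theta\ge0$, \eqref{eq:weighted-energy-moments} shows that $d_0+\theta d_1$ for this chain is at most the Gaussian value. Taking the infimum in \eqref{eq:weighted-energy} yields \eqref{eq:two-arm-comparison}.

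For strictness, suppose $\mu$ is non-Gaussian and the optimizer satisfies $0\prec A\preceq B$, where $B\prec I_d$ holds automatically. \Cref{cor:marked-bridge-strict} then gives a feasible chain with $q>q_G$ and $c>c_G$. The objective $(1+\theta)V-(1-\theta)q-2\theta c$ has coefficient $1-\theta$ on $q$ and $2\theta$ on $c$. For every $\theta\in[0,1]$ at least one of these coefficients is positive, and both gains are strict, so the weighted error is strictly below $\cE_{G,\theta}$. This strict inequality persists through the infimum, because it holds for a single feasible chain.

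For the scalar claim, with $R>0$ and $\theta>0$, \Cref{lem:gaussian-normal-form} states that every optimizer saturates $j(A)=R>0$, so $a>0$, and that $A\preceq B$. With $d=1$ this gives $0\prec A\preceq B$. When $\theta=0$ the objective is $(1-\theta)a^2\sigma^2$ alone, so the optimizer is $a=\rho(R)>0$ and we may take $B=A$. In both cases the hypothesis of the strict clause holds.

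The step I expect to need the most care is the $\theta=0$ case. There we must check that the choice $B=A$ is admissible, which requires $j(A)\le R\le R+C$, and we rely on $q>q_G$ alone, since the $c$ term drops out. Both are provided by the corollary, so no further obstacle is anticipated.
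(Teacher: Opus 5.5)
Your proposal is correct and follows essentially the same route as the paper. Both reduce to centered covariance coordinates, take an ordered Gaussian optimizer from \Cref{lem:gaussian-normal-form}, and use \Cref{thm:marked-bridge} with the moment identity for the non-strict bound; for strictness, both use \Cref{cor:marked-bridge-strict} and the fact that $1-\theta$ and $2\theta$ cannot both vanish. Your explicit check of the scalar clause, namely $a=\rho(R)>0$ by saturation when $\theta>0$ and $B=A$ with $a=\rho(R)$ when $\theta=0$, fills in a step that the paper's proof leaves implicit.
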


\begin{proof}
By the coordinate reduction in \Cref{subsec:gaussian-moment-targets},
it suffices to consider $m=0$ and
$\Sigma=\diag(\sigma_1^2,\ldots,\sigma_d^2)\succ0$. Choose an ordered
Gaussian optimizer $A\preceq B$ as in
\Cref{lem:gaussian-normal-form}. The marked midpoint theorem
(\Cref{thm:marked-bridge}) gives a chain $X-U-Y$ with
$X,Y\sim\mu$,
$I(X;U)\le j(A)\le R$, and
$I(Y;U)\le j(B)\le R+C$, whose moments satisfy
\[
 q\ge\Tr(\Sigma A^2),\qquad c\ge\Tr(\Sigma AB).
\]
This chain is feasible in \eqref{eq:weighted-energy}. Since
$1-\theta$ and $\theta$ are nonnegative, \eqref{eq:weighted-energy-moments}
gives
\begin{align*}
 \cE_{\mu,\theta}(R,C)
 &\le(1+\theta)V-(1-\theta)q-2\theta c\\
 &\le(1+\theta)V-(1-\theta)\Tr(\Sigma A^2)
                         -2\theta\Tr(\Sigma AB)\\
 &=\cE_{G,\theta}(R,C).
\end{align*}

If the Gaussian problem admits an ordered optimizer with $A\succ0$ and
$\mu$ is non-Gaussian, \Cref{cor:marked-bridge-strict} makes both
moment bounds strict. For every $\theta\in[0,1]$, the coefficients
$1-\theta$ and $2\theta$ are nonnegative and cannot both vanish.
Therefore the weighted comparison is strict for every such $\theta$.
\end{proof}

\subsubsection{Gaussian allocations and distortion comparison}
\label{subsec:rdpc-distortion-comparison}

The weighted comparison and \Cref{prop:rdpc-variational} give the
non-strict RDPC inequality. To prove strictness under the allocation
condition in \Cref{thm:rdpc-extremality}, we also apply the marked
construction to a distortion-minimizing Gaussian allocation. The
following property lets us order its observation strengths while
preserving the rate allocation for the first endpoint.


\begin{proposition}[Ordered distortion-minimizing Gaussian allocations]
\label{prop:gaussian-parallel}
For finite $R,C\ge0$ and $P\ge0$, every minimizing allocation in
\eqref{eq:gaussian-parallel} can be replaced by a minimizing allocation
with the same rates $r_i$ and with $r_i\le s_i$ for every $i$.
\end{proposition}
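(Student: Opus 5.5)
The plan is to fix a minimizing allocation $(r,s)$ in \eqref{eq:gaussian-parallel}, keep $r$ fixed, and replace $s$ by a maximizer of an auxiliary concave problem. Write $\rho_i=\rho(r_i)$. The first term $\sum_i\sigma_i^2e^{-2r_i}$ depends only on $r$. The second term is $h_P(W(s))$, where
\[
 W(s)=\sum_i\sigma_i^2\bigl(1+\rho_i^2-2\rho_i\rho(s_i)\bigr),
\]
and $h_P$ is nondecreasing. With $w_i:=\sigma_i^2\rho_i$, let $s'$ maximize the concave function $\Phi(s')=\sum_i w_i\rho(s'_i)$ over $s'\ge0$ with $\sum_i s'_i\le R+C$. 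A maximizer exists by compactness and continuity. Then $W(s')\le W(s)$, so $(r,s')$ has objective no larger than $(r,s)$, and it is therefore also minimizing. Since $\rho$ is strictly increasing, we may take $s'$ to saturate $\sum_i s'_i=R+C$: leftover budget can be added to any coordinate without decreasing $\Phi$. It remains to show that some such maximizer satisfies $s'_i\ge r_i$ for every $i$.

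Suppose instead that $s'_j<r_j$ for some $j$. Then $r_j>0$, so $w_j>0$. Since $\sum_i s'_i=R+C\ge R\ge\sum_i r_i$, there is some $k$ with $s'_k>r_k\ge0$. The key identity is
\[
 \rho(r)\rho'(r)=e^{-2r},\qquad \rho'(s)=\frac{e^{-2s}}{\rho(s)},
\]
and $\rho'$ is strictly decreasing. Hence the marginal value of $\Phi$ in coordinate $i$, evaluated at $s'_i=r_i$, equals $\sigma_i^2e^{-2r_i}$. This is exactly the marginal value of rate in the first term.

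The KKT conditions for the inner problem give a level $\lambda$ with $w_k\rho'(s'_k)=\lambda$, because $s'_k>0$. They also give $w_j\rho'(s'_j)\le\lambda$, with equality if $s'_j>0$; if $s'_j=0$ this still holds since $\rho'(0^+)=\infty$ forces $s'_j>0$ whenever $w_j>0$. Combining these with monotonicity of $\rho'$ yields
\[
 \sigma_j^2e^{-2r_j}=w_j\rho'(r_j)<w_j\rho'(s'_j)\le\lambda=w_k\rho'(s'_k)<w_k\rho'(r_k)\le\sigma_k^2e^{-2r_k}.
\]
If $r_k=0$, the last inequality is replaced by the bound $\lambda<\sigma_k^2$, which still holds since $w_k\rho'(s'_k)\le\sigma_k^2 e^{-2s'_k}/\ldots$ and must be checked separately. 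In every case we get $\sigma_k^2e^{-2r_k}>\sigma_j^2e^{-2r_j}$.

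Now perturb $r$ in the full problem with $s'$ fixed: move a small amount $\eta$ of rate from $j$ to $k$, which keeps $\sum_i r_i$ unchanged. The first term changes at rate $-2(\sigma_k^2e^{-2r_k}-\sigma_j^2e^{-2r_j})<0$. For the perception term,
\[
 \frac{\partial W}{\partial r_i}=2\sigma_i^2\rho'(r_i)\bigl(\rho(r_i)-\rho(s'_i)\bigr).
\]
This derivative is positive at $i=j$, where we are decreasing $r_j$, and negative at $i=k$, where we are increasing $r_k$. So $W$ weakly decreases, and so does $h_P(W)$. The total objective therefore strictly decreases for small $\eta$, which contradicts the minimality of $(r,s')$. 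Hence $s'_i\ge r_i$ for all $i$.

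The main obstacle is the boundary bookkeeping. This covers the case $r_k=0$, where $\rho'(0)=\infty$ and the one-sided derivative must be handled via $\rho'(s'_k)<\infty$ directly. It also covers the nondifferentiability of $h_P$ at $W=P$; this is harmless because only monotonicity of $h_P$ is used. Finally, it covers degenerate weights $w_i=0$, where $s'_i$ can simply be set to $0=r_i$.
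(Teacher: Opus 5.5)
Your proof is correct, and it takes a genuinely different and more elementary route than the paper. You fix $r$ and replace $s$ by a maximizer $s'$ of $\sum_i\sigma_i^2\rho(r_i)\rho(s_i')$ over the second simplex. Because $h_P$ is nondecreasing, this can only lower $h_P(d_1)$. You then use $\rho\rho'=e^{-2r}$: a violation $s'_j<r_j$ forces some $s'_k>r_k$, and the transfer condition for $s'$ gives $\sigma_j^2e^{-2r_j}<\sigma_k^2e^{-2r_k}$. Moving first-stage rate from $j$ to $k$ therefore strictly lowers $d_0$, while each term $1+a^2-2ab$ of $d_1$ moves toward its minimum at $a=b$.

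The paper instead works through the dual weighted representation. It takes a Gaussian dual maximizer $\theta_*$, whose existence rests on the minimax argument in \Cref{app:rdpc-variational}, and notes that $(r,s)$ then minimizes $d_0+\theta_*d_1$. For $P=0$ or $\theta_*>0$, it reads off $r_i\le s_i$ from the ordering identity in \Cref{lem:gaussian-normal-form}, with no replacement needed. Only when $\theta_*=0$ does it replace $s$ by a cross-moment maximizer $\bar s$. It then compares the RD multiplier $\sigma_i^2(1-a_i^2)=w_i\rho'(r_i)$ with the cross-moment multiplier $\sigma_i^2a_i(1-b_i^2)/b_i=w_i\rho'(\bar s_i)$, which is exactly your comparison.

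Your version is uniform in $P\ge0$ and self-contained. Since you perturb $r$ in a direction that improves both terms at once, you need neither Sion's theorem, nor dual attainment, nor support matching and multipliers for the full objective. You also need no regularity of $h_P$ beyond monotonicity. The paper's version yields extra information: whenever $P=0$ or $\theta_*>0$, every minimizer is already ordered. It also reuses the structure built for \Cref{lem:gaussian-normal-form}.

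One piece of bookkeeping should be cleaned up: the case $r_k=0$ that you flag for a separate check is vacuous. Since $w_j>0$, any maximizer of $\Phi$ puts no rate on a coordinate with $w_k=0$, because shifting that rate to $j$ strictly increases $\Phi$. Equivalently, your chain would give $0<\sigma_j^2e^{-2r_j}<\lambda=w_k\rho'(s'_k)=0$. Hence $r_k>0$, the last step of your chain is the equality $w_k\rho'(r_k)=\sigma_k^2e^{-2r_k}$, and the incomplete sentence about $\sigma_k^2e^{-2s'_k}$ can simply be deleted.
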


\begin{proof}
For the parallel Gaussian construction, define
\[
 d_0(r)=\sum_i\sigma_i^2e^{-2r_i},\qquad
 d_1(r,s)=\sum_i\sigma_i^2
    \bigl(1+\rho(r_i)^2-2\rho(r_i)\rho(s_i)\bigr).
\]
These are the two squared-error terms in the common-auxiliary
representation. By \Cref{lem:gaussian-normal-form},
\[
 \cE_{G,\theta}(R,C)
 =\min_{r,s}\{d_0(r)+\theta d_1(r,s)\},
 \qquad 0\le\theta\le1,
\]
where the minimum is taken over the rate constraints in
\eqref{eq:gaussian-parallel}.

If $R=0$, then $r_i=0$ for every $i$, and setting $s_i=0$ leaves the
objective unchanged, so an ordered minimizer exists. Now suppose $R>0$
and let $(r,s)$ minimize \eqref{eq:gaussian-parallel}. At $P=0$, the
objective is $d_0(r)+d_1(r,s)$, so
\Cref{lem:gaussian-normal-form} gives $r_i\le s_i$ for every $i$.


For $P>0$, define
\[
 \Phi(r,s,\theta)
 :=d_0(r)+\theta d_1(r,s)-\frac{\theta P}{1-\theta}.
\]
Let $\theta_*$ be a Gaussian dual maximizer, whose existence is
established in \Cref{app:rdpc-variational}. The weighted representation,
\eqref{eq:rdpc-energy-duality}, and \eqref{eq:perception-variational} imply
\[
 D_G(R,C,P)=\min_{u,v}\Phi(u,v,\theta_*)
 \le\Phi(r,s,\theta_*)
 \le d_0(r)+h_P(d_1(r,s))=D_G(R,C,P).
\]
Hence equality holds throughout, and $(r,s)$ minimizes
$d_0+\theta_*d_1$. If $\theta_*>0$, then
\Cref{lem:gaussian-normal-form} again gives $r_i\le s_i$ for every $i$.

When $\theta_*=0$, equality in the preceding display implies that
$r$ minimizes $d_0$ and $d_1(r,s)\le P$. For this fixed $r$, a maximizer
$\bar s$ of $\sum_i\sigma_i^2\rho(r_i)\rho(\bar s_i)$ over the second
rate simplex minimizes $d_1(r,\bar s)$. Consequently,
$d_1(r,\bar s)\le d_1(r,s)\le P$, and $(r,\bar s)$ remains a
distortion-minimizing allocation with the same first rates.

The strict decrease of $d_0$ in each coordinate gives
$\sum_i r_i=R$, so $\mathcal I:=\{i:r_i>0\}$ is nonempty.
For fixed $r$, the cross-moment is strictly increasing in $\bar s_i$
on $\mathcal I$, which also gives $\sum_i\bar s_i=R+C$.
Moreover, a positive second rate outside $\mathcal I$ would contribute
nothing to the cross-moment; transferring it to a coordinate in
$\mathcal I$ would strictly improve the objective. Thus
$\{i:\bar s_i>0\}\subseteq\mathcal I$.
The reverse inclusion follows from the second-rate transfer argument
in \Cref{app:gaussian-structure}, with $r$ held fixed. Indeed, if
$i\in\mathcal I$ had $\bar s_i=0$, transferring a sufficiently small
rate $\varepsilon$ from some $\bar s_j>0$ to $\bar s_i$ would yield
a gain of order $\sqrt\varepsilon$ and a loss of order $\varepsilon$,
contradicting optimality. The expansion and the loss estimate are the
same as in that argument. Hence
$\mathcal I=\{i:\bar s_i>0\}$, and both budgets are attained.

With $a_i=\rho(r_i)$ and $b_i=\rho(\bar s_i)$, stationarity of the
ordinary RD minimization over $r$ and of the cross-moment maximization
over $\bar s$ gives positive multipliers $\lambda,\eta$, each common
to all coordinates in $\mathcal I$, such that
\[
 \lambda=\sigma_i^2(1-a_i^2),\qquad
 \eta=\sigma_i^2a_i\frac{1-b_i^2}{b_i},
 \qquad i\in\mathcal I.
\]
These equations have the same algebraic form as
\eqref{eq:normal-form-kkt-a}--\eqref{eq:normal-form-kkt-b} at $\theta=0$,
with multipliers obtained here from the two separate optimizations.
By \eqref{eq:normal-form-ordering-identity}, the sign of $b_i-a_i$
is therefore the same at every coordinate in $\mathcal I$.
A negative sign would imply
$R+C=\sum_i\bar s_i<\sum_i r_i=R$, which is impossible.
It follows that $r_i\le\bar s_i$ on $\mathcal I$; outside
$\mathcal I$, both rates are zero. This proves the assertion.
\end{proof}


We can now prove \Cref{thm:rdpc-extremality}.

\begin{proof}[Proof of \Cref{thm:rdpc-extremality}]
For finite $C$ and $P>0$, substituting
\eqref{eq:two-arm-comparison}  into the variational representation
\eqref{eq:rdpc-energy-duality}  gives
\[
 \begin{split}
 D_\mu(R,C,P)
 &=\sup_{0\le\theta<1}
 \left\{\cE_{\mu,\theta}(R,C)-\frac{\theta P}{1-\theta}\right\}\\
 &\le\sup_{0\le\theta<1}
 \left\{\cE_{G,\theta}(R,C)-\frac{\theta P}{1-\theta}\right\}
 =D_G(R,C,P).
 \end{split}
\]
For $P=0$, apply the weighted comparison at $\theta=1$ and use
 \eqref{eq:rdpc-exact-energy}.

 For strictness at finite $C$, choose a minimizing Gaussian allocation
in \eqref{eq:gaussian-parallel} with $r_i>0$ for every $i$. By the
ordering assertion of \Cref{prop:gaussian-parallel}, we can keep these
rates and choose $s_i\ge r_i$. In a covariance eigenbasis, the
corresponding observation strengths satisfy
$0\prec A\preceq B\prec I_d$.
By \Cref{cor:marked-bridge-strict}, there is a source chain satisfying
the information constraints whose moments obey $q>q_G$ and $c>c_G$,
where $q_G,c_G$ are the moments of the chosen Gaussian minimizer. Set
\[
 \eta=\min\{q-q_G,\,2(c-c_G)\}>0.
\]
For every $0\le\theta\le1$, the moment identity
\eqref{eq:weighted-energy-moments} gives
\[
 d_0+\theta d_1
 \le d_{0,G}+\theta d_{1,G}-\eta,
 \qquad
 d_{0,G}=V-q_G,\quad d_{1,G}=V+q_G-2c_G.
\]
For $P>0$, the variational identity
\eqref{eq:perception-variational} then yields
\[
 D_\mu(R,C,P)
 \le d_0+h_P(d_1)
 \le d_{0,G}+h_P(d_{1,G})-\eta
 =D_G(R,C,P)-\eta.
\]
For $P=0$, the same conclusion follows by taking $\theta=1$. Thus the
distortion comparison is strict for every $P\ge0$ whenever the
minimizing allocation has positive $r_i$ in every direction.


Finally, when $C=\infty$, the RDPC comparison reduces to the RDP
comparison in \Cref{thm:rdp-extremality}, including its strictness
under the positive-allocation condition. 
For a scalar source, $R>0$ gives a positive rate in the only direction in both Gaussian benchmark formulas, yielding the stated strict comparison.
\end{proof}

The construction also shows how common randomness changes the midpoint
geometry. At $C=0$, the Gaussian optimizer has $A=B$, so the auxiliary
can coincide with the midpoint. For $C>0$, the larger information
budget for the second endpoint allows a stronger observation along the
right half-bridge. The two half-bridges still meet at the same smoothed
marginal. The Gaussian formulas
\eqref{eq:gaussian-parallel}--\eqref{eq:gaussian-scalar-rdpc}
quantify the resulting distortion tradeoff; their special cases are
discussed in \Cref{app:gaussian-special-cases}.

\section{Conclusion}\label{sec:conclusion}

The self-coupling inequality gives a sharp Gaussian comparison for the
information cost of imposing self-correlation. Its proof constructs
two half-bridges with the same midpoint law and joins their conditional
endpoint distributions. The relative entropy of the shared midpoint yields a
gain that, through a Gaussian change of reference and convex duality,
leads to a matrix cross-covariance bound. For non-Gaussian laws, the
information comparison is strict whenever the target cross-covariance
is positive definite. In particular, every non-Gaussian scalar law
admits strictly greater self-correlation than the Gaussian benchmark
at every positive information budget.

The quadratic RDP comparison follows by scaling
one endpoint of the self-coupling. With finite common randomness, two
information budgets constrain the endpoints' dependence on a shared
auxiliary random variable. A variational formula for the squared Wasserstein-2 perception
penalty reduces the problem to comparing two weighted quadratic errors.
A marked midpoint bridge, with its rates balanced simultaneously using
the Poincar\'e--Miranda theorem, then establishes Gaussian extremality
in this setting as well. Both distortion comparisons are strict for
non-Gaussian sources when a Gaussian minimizing allocation assigns
positive rate to every covariance eigendirection.

The shared midpoint entropy also raises questions about quantitative
stability of these comparisons. More broadly, gluing several
conditional endpoint laws over a common intermediate distribution may
help extend the analysis to self-coupling problems with more than two
copies. In both directions, the key idea is to coordinate the endpoint
constraints through a shared intermediate law.

\medskip
\noindent\textbf{Disclosure on the use of generative AI.}
GPT 5.6 Pro-Sol assisted in refining the authors' initial RDPC proof
strategy by suggesting adjustable observation strengths for simultaneous
rate balancing via the Poincar\'e--Miranda theorem. It also assisted in
constructing the counterexamples in
\Cref{app:rdp-nonstrict-vector,app:fixed-rd-representation}.
The underlying framework and initial RDPC construction were developed
by the authors, who take full responsibility for the final arguments
and conclusions.

\appendices
\crefalias{section}{appendix}
\section{Attainment and duality for the self-coupling information cost}
\label{app:duality}

Let $\nu$ have mean zero and covariance $I_d$. Extend
\eqref{eq:Jmu} to all real symmetric matrices $S$, setting
$J_\nu(S)=+\infty$ whenever the constraint is infeasible. We show that
$J_\nu$ is proper, convex, and lower semicontinuous, that every finite
value is attained, and that its conjugate is the free energy in
\eqref{eq:free-energy}.

Convexity follows by mixing couplings. If $\pi_0$ and $\pi_1$ are
feasible for targets $S_0$ and $S_1$, respectively, their mixture is
feasible for the corresponding mixture of the targets. Relative
entropy $\mathrm{KL}(\pi\|\nu\otimes\nu)$ is convex in $\pi$
\cite{polyanskiy2025information}. The independent coupling has zero
mutual information and zero cross-covariance, so $J_\nu(0)=0$.
In particular, $J_\nu$ is proper.

We next establish the compactness needed to handle the unbounded
quadratic constraint. The couplings in $\Pi(\nu,\nu)$ form a tight
family. Moreover, for each $i,j$, the products $X_iY_j$ are uniformly
integrable over this family. Indeed, for $a>0$, Cauchy--Schwarz gives
\begin{align*}
 \E_\pi\!\left[|X_iY_j|
             \boldsymbol1_{\{|X_iY_j|>a\}}\right]
        \le
 \left(\E_\nu[X_i^2\boldsymbol1_{\{|X_i|>\sqrt a\}}]
              \E_\nu[Y_j^2]\right)^{1/2}
 +\left(\E_\nu[X_i^2]
              \E_\nu[Y_j^2\boldsymbol1_{\{|Y_j|>\sqrt a\}}]
       \right)^{1/2}.
\end{align*}
This bound is uniform over $\pi$ and tends to zero as $a\to\infty$.
Consequently, weak convergence within this family implies entrywise
convergence of the cross-covariance matrices, by truncating each
continuous function $x_i y_j$.

To prove lower semicontinuity, let $S_n\to S$ in $\mathbb S^d$ and
suppose that
\[
 L:=\liminf_{n\to\infty}J_\nu(S_n)<\infty.
\]
Passing to a subsequence, which we continue to denote by $(S_n)$, assume
that $J_\nu(S_n)\to L$. For each $n$, choose
$\pi_n\in\Pi(\nu,\nu)$ such that
\[
 \operatorname{Sym}\int xy^\top\,\pi_n(\dd x,\dd y)\succeq S_n
\]
and
\[
 \mathrm{KL}(\pi_n\|\nu\otimes\nu)
 \le J_\nu(S_n)+\frac1n.
\]
Since all $\pi_n$ have the same marginals, the sequence $(\pi_n)$ is
tight. After passing to a further subsequence, suppose that
$\pi_n\rightharpoonup\pi^\star$ for some $\pi^\star\in\Pi(\nu,\nu)$, where $\rightharpoonup$ denotes  weak convergence.
The uniform integrability established above gives
\[
 \int xy^\top\,\dd \pi_n\longrightarrow\int xy^\top\,\dd \pi^\star.
\]
Since the positive-semidefinite cone is closed and $S_n\to S$, it
follows that
\[
 \operatorname{Sym}\int xy^\top\,\dd \pi^\star\succeq S.
\]
Thus $\pi^\star$ is feasible for $J_\nu(S)$. Lower semicontinuity of
relative entropy now yields
\[
 \begin{aligned}
 J_\nu(S) \le \mathrm{KL}(\pi^\star\|\nu\otimes\nu)\le \liminf_{n\to\infty}\mathrm{KL}(\pi_n\|\nu\otimes\nu)\le \lim_{n\to\infty}\left(J_\nu(S_n)+\frac1n\right)
 =L.
 \end{aligned}
\]
Therefore $J_\nu$ is lower semicontinuous.

Taking $S_n=S$ in the same argument gives a feasible limit $\pi^\star$
with $\KL{\pi^\star}{\nu\otimes\nu}=J_\nu(S)$ whenever
$J_\nu(S)<\infty$. Thus every finite value is attained.

Equip the space of symmetric matrices with the trace inner product.
For a coupling $\pi$, write
$H_\pi=\Sym\E_\pi[XY^\top]$. By the definition of the conjugate,
\begin{align*}
 J_\nu^*(M)
 &=\sup_S\sup_{\pi:\,S\preceq H_\pi}
       \{\Tr(MS)-I_\pi(X;Y)\}\\
 &=\sup_\pi\left\{
       \sup_{S\preceq H_\pi}\Tr(MS)-I_\pi(X;Y)\right\}.
\end{align*}
If $M\succeq0$, the inner supremum is $\Tr(MH_\pi)$ and is attained
at $S=H_\pi$. If $v^\top Mv<0$, choosing
$S=H_\pi-tv v^\top$ makes the inner supremum tend to $+\infty$ as
$t\to\infty$. Therefore,
\[
 J_\nu^*(M)=
 \begin{cases}
 \Lambda_\nu(M),&M\succeq0,\\
 +\infty,&M\not\succeq0.
 \end{cases}
\]
The Fenchel--Moreau theorem now gives \eqref{eq:fenchel} on the entire
space of symmetric targets. Finally,
$\Sym\E_\pi[XY^\top]\preceq I_d$, since
$\E[(X-Y)(X-Y)^\top]\succeq0$. Thus, $0\le\Lambda_\nu(M)\le\Tr M$ for $M\succeq0$, so the free energy
used throughout the proof is finite.

\section{Gaussian smoothing for the free-energy comparison}
\label{app:smoothing}


We extend \Cref{prop:free-energy} from laws with finite relative
entropy to any centered law $\nu$ with covariance $I_d$. Fix
$M\succeq0$, let $S$ be given by \eqref{eq:S-from-M}, and set
$T=S^{1/2}$. We approximate $\nu$ by Gaussian-smoothed laws $\nu_n$
with finite relative entropy with respect to $\gamma$, apply the
established comparison to each $\nu_n$, and pass to the limit using
lower semicontinuity.

Choose $0<c_n\uparrow1$. For independent $X\sim\nu$ and
$G\sim\gamma$, define
\[
 X_n=c_n X+\sqrt{1-c_n^2}\,G,\qquad
 \nu_n=\Law(X_n)=\nu K_{c_nI_d}.
\]
Each $\nu_n$ has mean zero and covariance $I_d$, and
$W_2^2(\nu_n,\nu)\le\E\norm{X_n-X}^2=2d(1-c_n)\to0$.
The added Gaussian noise gives $\nu_n$ a smooth, positive density.
Convexity of relative entropy also gives
\[
 \KL{\nu_n}{\gamma}
 \le\int\KL{\mcN(c_n x,(1-c_n^2)I_d)}{\gamma}\,\nu(\dd x)
 =-\frac d2\log(1-c_n^2)<\infty.
\]
Here $\E\norm X^2=d$ cancels the quadratic terms in the Gaussian
relative-entropy formula. Thus the finite-entropy comparison applies
to each $\nu_n$.

By \Cref{app:duality}, the free energy is finite. Choose
$\pi_n\in\Pi(\nu_n,\nu_n)$ whose objective in \eqref{eq:free-energy}
is at least $\Lambda_{\nu_n}(M)-1/n$. Combining this choice with the
comparison for $\nu_n$ gives
\begin{equation}
   \E_{\pi_n} [X_n^\top MY_n] - I_{\pi_n}(X_n; Y_n) \ge \Lambda_{\nu_n}(M) - \frac{1}{n} \ge \Lambda_\gamma(M) - \frac{1}{n} + \KL{\nu_n K_T}{\gamma},
\end{equation}
or, equivalently,
\begin{equation}\label{eq:approximation-two-entropies}
   I_{\pi_n}(X_n; Y_n)+ \KL{\nu_n K_T}{\gamma} \le  \E_{\pi_n} [X_n^\top MY_n] - \Lambda_\gamma(M) + \frac{1}{n}.
\end{equation}
The marginals $\nu_n$ form a tight family, so the couplings $\pi_n$
are tight as well. Passing to a subsequence, suppose that
$\pi_n\rightharpoonup\pi\in\Pi(\nu,\nu)$. Since $\nu_n\to\nu$ in
$W_2$, their squared norms are uniformly integrable:
\[
 \lim_{L\to\infty}\sup_n
 \int_{\{\norm w>L\}}\norm w^2\,\nu_n(\dd w)=0.
\]
The Cauchy--Schwarz truncation estimate from \Cref{app:duality} then
shows that each product $X_{n,i}Y_{n,j}$ is uniformly integrable.
Consequently,
\begin{equation}\label{eq:approximation-reward-convergence}
 \E_{\pi_n}[X_n^\top MY_n]\longrightarrow\E_\pi[X^\top MY].
\end{equation}

We also have $\nu_n\otimes\nu_n\rightharpoonup\nu\otimes\nu$ and
$\nu_nK_T\rightharpoonup\nu K_T$, the latter by continuity of the
Gaussian observation kernel. Joint lower semicontinuity of relative
entropy, its nonnegativity, and
\eqref{eq:approximation-two-entropies}--\eqref{eq:approximation-reward-convergence}
therefore imply
\begin{align*}
 I_\pi(X;Y)+\KL{\nu K_T}{\gamma}
 &\le\liminf_n\left\{
       \KL{\pi_n}{\nu_n\otimes\nu_n}+\KL{\nu_nK_T}{\gamma}
                  \right\}\\
 &\le\E_\pi[X^\top MY]-\Lambda_\gamma(M).
\end{align*}
The right-hand side is finite, so both limiting divergences are finite.
Rearranging and using the feasible self-coupling $\pi$ gives
\[
 \Lambda_\nu(M)
 \ge\E_\pi[X^\top MY]-I_\pi(X;Y)
 \ge\Lambda_\gamma(M)+\KL{\nu K_{S^{1/2}}}{\gamma}.
\]
This proves \Cref{prop:free-energy}, including the midpoint entropy
gain.

\section{Equality for a non-Gaussian vector source}
\label{app:rdp-nonstrict-vector}

We now give the example mentioned in \Cref{rem:rdp-nonstrict-vector}.

Equality can occur when the Gaussian benchmark allocates all its rate
to a Gaussian component of a non-Gaussian source. Let $X=(Z,B)$, where
$Z\sim\mcN(0,4)$ and $B$ is equally likely to be $-1$ or $1$,
independently of $Z$. Then $\E B=0$, $\E B^2=1$, and $\mu=\Law(X)$ is
non-Gaussian with covariance $\diag(4,1)$. We show that
\[
 D_\mu(R_*,P_*)=D_G(R_*,P_*)=3,
 \qquad R_*=\tfrac12\log2,\quad P_*=5,
\]
where $G=\mcN(0,\diag(4,1))$.

\begin{enumerate}
\item We first evaluate the Gaussian benchmark. For every feasible rate
allocation, $0\le\rho(r_i)\le1$ implies
\[
 4(1-\rho(r_1))^2+(1-\rho(r_2))^2\le5=P_*.
\]
Hence the perception penalty in \eqref{eq:gaussian-rdp} vanishes, and
\[
 D_G(R_*,P_*)
 =\min_{\substack{r_1,r_2\ge0\\r_1+r_2\le R_*}}
       \bigl(4e^{-2r_1}+e^{-2r_2}\bigr).
\]
Both terms decrease with their respective rates, so the rate constraint
is tight at a minimizer. Set $s=r_2$ and use $e^{-2R_*}=1/2$ to write
\[
 f(s)=4e^{-2(R_*-s)}+e^{-2s}=2e^{2s}+e^{-2s},
 \qquad 0\le s\le R_*.
\]
Since $f'(s)=4e^{2s}-2e^{-2s}\ge2>0$, the minimum occurs at $s=0$.
Hence the minimizing allocation is $(r_1,r_2)=(R_*,0)$, and
$D_G(R_*,P_*)=f(0)=3$.

\item An explicit reconstruction for $\mu$ attains distortion $3$. Let
$N\sim\mcN(0,1)$ be independent of $(Z,B)$, and set
$T=Z/2+N$ and $\widehat X=(T,0)$. Since
$\Var(T)=4/4+1=2$, we have $T\sim\mcN(0,2)$. Also, $B$ is independent
of $(Z,T)$, so
\begin{align*}
 I(X;\widehat X)
 &=I(Z,B;T)=I(Z;T)\\
 &=h(T)-h(T\mid Z)=h(T)-h(N)\\
 &=\tfrac12\log(2\pi e\cdot2)-\tfrac12\log(2\pi e)
 =\tfrac12\log2=R_*.
\end{align*}
The distortion is
\begin{align*}
 \E\norm{X-\widehat X}^2
 &=\E(Z/2-N)^2+\E B^2\\
 &=\tfrac14\E Z^2+\E N^2+\E B^2
 =1+1+1=3,
\end{align*}
since the cross term vanishes by independence and zero means.

To check the perception constraint, note that the second-coordinate
cost is $1$ in every coupling of $\mu$ and $\Law(\widehat X)$. The
first coordinates have variances $4$ and $2$, so Cauchy--Schwarz bounds
their cross-moment by $\sqrt{4\cdot2}$. Thus the total transport cost
is at least
$4+2-2\sqrt{4\cdot2}+1=7-4\sqrt2$.
This lower bound is attained by coupling $(Z,B)$ with
$(Z/\sqrt2,0)$, whose second marginal is $\Law(\widehat X)$. Hence
\[
 W_2^2(\mu,\Law(\widehat X))=7-4\sqrt2<5=P_*.
\]
The reconstruction is feasible, so $D_\mu(R_*,P_*)\le3$. This upper
bound also follows from \Cref{thm:rdp-extremality}.

\item To prove equality, it remains to rule out distortion below $3$. Let
$V=(V_1,V_2)$ be any reconstruction with
$I(Z,B;V)\le R_*$ and finite distortion, and set $t=I(B;V)$.
Since $Z$ and $B$ are independent, the chain rule gives
\begin{align*}
 I(Z,B;V)
 &=I(Z;V)+I(B;V\mid Z)\\
 &=I(Z;V)+I(B;V)+I(Z;B\mid V)\\
 &\ge I(Z;V)+t.
\end{align*}
Thus $0\le t\le R_*$ and $I(Z;V)\le R_*-t$. In other words, the
information used to represent the binary coordinate reduces the
information available for the Gaussian coordinate.

Let $d_Z=\E(Z-V_1)^2$. Translation by the $V$-measurable variable
$V_1$, followed by removal of conditioning and the maximum-entropy
bound, gives
\[
 h(Z\mid V)=h(Z-V_1\mid V)
 \le h(Z-V_1)\le\tfrac12\log(2\pi e\,d_Z).
\]
Since $h(Z)=\tfrac12\log(2\pi e\cdot4)$, it follows that
\[
 I(Z;V)\ge\tfrac12\log\frac4{d_Z},
 \qquad
 d_Z\ge4e^{-2I(Z;V)}\ge4e^{-2(R_*-t)}=2e^{2t}.
\]

For the binary coordinate, define
$p(V)=\Pr(B=1\mid V)$ and $b(V)=\E[B\mid V]=2p(V)-1$. The
conditional relative-entropy formula for mutual information gives
\[
 t=\E\bigl[p(V)\log(2p(V))
       +(1-p(V))\log(2(1-p(V)))\bigr].
\]
For $0\le p\le1$, define
$\phi(p)=p\log(2p)+(1-p)\log(2(1-p))$, with $0\log0=0$.
Since $\phi(1/2)=\phi'(1/2)=0$ and
$\phi''(p)=1/[p(1-p)]\ge4$ on $(0,1)$, we have
$\phi(p)\ge2(p-1/2)^2$; the inequality extends to the endpoints by
continuity. Therefore,
\[
 t\ge2\E\bigl(p(V)-\tfrac12\bigr)^2
   =\tfrac12\E b(V)^2.
\]
Conditional-mean orthogonality then gives
\begin{align*}
 \E(B-V_2)^2
 &=\E(B-b(V))^2+\E(b(V)-V_2)^2\\
 &\ge\E B^2-\E b(V)^2
 =1-\E b(V)^2\ge1-2t.
\end{align*}
Combining the bounds for the two coordinates and using
$e^{2t}\ge1+2t$, 
we obtain
\[
 \E\norm{X-V}^2
 \ge2e^{2t}+1-2t
 \ge2(1+2t)+1-2t=3+2t\ge3.
\]
This lower bound holds without imposing a perception constraint, so it
applies to every feasible RDP reconstruction. Together with the upper
bound, it proves
$D_\mu(R_*,P_*)=D_G(R_*,P_*)=3$.
\end{enumerate}
The Gaussian optimizer assigns zero rate to the direction carrying the
non-Gaussian component. Thus non-Gaussianity alone does not guarantee
strictness in every dimension for every choice of $(R,P)$.

\section{Suboptimality of a fixed RD representation at perfect perception}
\label{app:fixed-rd-representation}
\label{app:posterior-interpolation}

We give the two-dimensional example discussed in
\Cref{rem:conventional-rd-interpolation}. It shows that the rate
allocation of an ordinary RD optimizer need not remain optimal when
perfect perception is imposed. Let
\[
 G\sim\mcN\bigl(0,\diag(4,1)\bigr),\qquad R=\frac12\log2.
\]
Gaussian reverse water filling allocates rates $(R,0)$ to the two
coordinates. The resulting reconstruction satisfies
\[
 \widehat{X}_{\rm RD}\sim\mcN\bigl(0,\diag(2,0)\bigr),\qquad
 \varepsilon_{\rm RD}=3.
\]
The squared Wasserstein-2 distance between the source and reconstruction
laws is
\[
 w_{\rm RD}^2=(2-\sqrt2)^2+1.
\]
Any decoder based on this fixed representation and satisfying perfect
perception induces a Markov chain
$G-\widehat{X}_{\rm RD}-Y$, with $Y\sim\Law(G)$. Conditional-mean
orthogonality and the definition of $W_2$ give
\begin{equation}\label{eq:fixed-rd-exact-perception}
 \E\norm{G-Y}^2
 =\varepsilon_{\rm RD}+\E\norm{\widehat{X}_{\rm RD}-Y}^2
 \ge\varepsilon_{\rm RD}+w_{\rm RD}^2
 =10-4\sqrt2.
\end{equation}
This bound is attained by choosing the conditional law of $Y$ given
$\widehat{X}_{\rm RD}$ from an optimal Wasserstein-2 coupling.

Now consider the rate allocation
\[
 r_1=\frac12\log\frac95,\qquad
 r_2=\frac12\log\frac{10}{9}.
\]
It satisfies $r_1+r_2=R$ and gives
$\rho(r_1)=2/3$ and $\rho(r_2)=1/\sqrt{10}$. Substituting this
allocation into \eqref{eq:gaussian-rdp} at $P=0$ yields
\begin{equation}\label{eq:fixed-rd-vector-obstruction}
 D_G(R,0)
 \le 2\left[4\left(1-\frac23\right)
                 +1-\frac1{\sqrt{10}}\right]
 =\frac{14}{3}-\frac2{\sqrt{10}}
 <10-4\sqrt2.
\end{equation}
Thus, even for this Gaussian source, the fixed ordinary RD
representation is suboptimal at perfect perception.

\section{Weighted variational representation of the RDPC function}
\label{app:rdpc-variational}

We prove \Cref{prop:rdpc-variational} by showing that the error region
is compact and convex, then applying a minimax theorem to the perception
penalty. We also prove continuity of the weighted value and attainment
of the dual optimum. Throughout, fix finite $R,C\ge0$.

\subsection{Compactness and convexity of the error region}

Recall that $\mathcal D_{\rm att}$ is the set of error pairs
$(d_0,d_1)$ in \eqref{eq:rdpc-two-errors}  generated by feasible chains
in $\cA_\mu(R,C)$, and let $\mathcal D$ be its Euclidean closure.
Write $m=\E X$ and $V=\Tr\Sigma$. For any such chain, the conditional
mean $M_U=\E[X\mid U]$ satisfies
$\E\norm{M_U-m}^2\le V$. By conditional-mean orthogonality and the
squared triangle inequality,
\[
 \begin{split}
 d_0&=V-\E\norm{M_U-m}^2\in[0,V],\\
 d_1&\le2\E\norm{Y-m}^2+2\E\norm{M_U-m}^2\le4V.
 \end{split}
\]
The region is nonempty, since independent endpoints with a constant
auxiliary form a feasible chain. Hence $\mathcal D$ is a nonempty
compact subset of $[0,V]\times[0,4V]$. If $V=0$, the source is
deterministic and all error values are zero, so suppose henceforth
that $V>0$.

To prove convexity, take two feasible chains
$X_j-U_j-Y_j$, $j=0,1$, and define
\[
 m_j(u)=\E[X_j\mid U_j=u],\qquad
 d_{0,j}=\E\norm{X_j-m_j(U_j)}^2,\qquad
 d_{1,j}=\E\norm{Y_j-m_j(U_j)}^2.
\]
Fix $0<t<1$ and set $p_0=1-t$, $p_1=t$. Draw a label $J$ with
$\Pr(J=j)=p_j$. Conditional on $J=j$, let $(X,U,Y)$ have the joint
law of $(X_j,U_j,Y_j)$, and define the new auxiliary
$\widetilde U=(J,U)$. Given $\widetilde U=(j,u)$, the endpoints have
the conditional laws of $X_j$ and $Y_j$ given $U_j=u$. They are
conditionally independent, so $X-\widetilde U-Y$ is a Markov chain.

For each value of $J$, both endpoint marginals are $\mu$. Thus $J$ is
independent of both $X$ and $Y$, and the mutual-information chain rule
gives
\begin{align*}
 I(X;\widetilde U)
 =I(X;J)+I(X;U_J\mid J)
   =\sum_{j=0}^1p_j I(X_j;U_j)\le R,\;\;I(Y;\widetilde U)
   =\sum_{j=0}^1p_j I(Y_j;U_j)\le R+C.
\end{align*}
The mixed chain is therefore feasible. Because the auxiliary retains
the label,
\[
 \widetilde M:=\E[X\mid\widetilde U]=m_J(U_J).
\]
Thus, conditional on $J=j$, $\widetilde M=m_j(U_j)$. Averaging the
squared errors over $J$ gives
\begin{align*}
 \widetilde d_0
 =\sum_{j=0}^1p_j\E\bigl[\norm{X-\widetilde M}^2\mid J=j\bigr]
   =\sum_{j=0}^1p_j d_{0,j},\;\; \widetilde d_1
 =\sum_{j=0}^1p_j\E\bigl[\norm{Y-\widetilde M}^2\mid J=j\bigr]
   =\sum_{j=0}^1p_j d_{1,j}.
\end{align*}
Hence every convex combination of two attainable error pairs is
attainable. Therefore $\mathcal D_{\rm att}$, and consequently its
closure $\mathcal D$, is convex.

Taking the closure does not change the infimum of a continuous
objective, and compactness ensures that the infimum is attained.
Recalling $h_P(v)=(\sqrt v-\sqrt P)_+^2$, we obtain
\[
 D_\mu(R,C,P)
 =\min_{(d_0,d_1)\in\mathcal D}
 \{d_0+h_P(d_1)\},\;\;\cE_{\mu,\theta}(R,C)=\min_{(d_0,d_1)\in\mathcal D}
 (d_0+\theta d_1),\qquad 0\le\theta\le1.
\]

\subsection{Minimax representation of the RDPC function}

Assume first that $P>0$. For fixed $d_1$, the maximizing weight in
\eqref{eq:perception-variational} is $0$ if $d_1\le P$ and
$1-\sqrt{P/d_1}$ if $d_1>P$. Since every $(d_0,d_1)\in\mathcal D$
satisfies $d_1\le4V$, these maximizing weights all lie in the compact
interval
\[
 0\le\theta\le\bar\theta,
 \qquad \bar\theta=\max\{0,1-\sqrt{P/(4V)}\}<1.
\]
Consequently, with
\[
 \Phi(d_0,d_1,\theta)
 :=d_0+\theta d_1-\frac{\theta P}{1-\theta},
\]
the scalar variational identity gives
$d_0+h_P(d_1)=\max_{0\le\theta\le\bar\theta}\Phi(d_0,d_1,\theta)$
for every $(d_0,d_1)\in\mathcal D$.

The sets $\mathcal D$ and $[0,\bar\theta]$ are compact and convex.
Moreover, $\Phi$ is continuous, affine in $(d_0,d_1)$, and concave in
$\theta$. Sion's minimax theorem \cite{Sion1958} therefore gives
\begin{align*}
 D_\mu(R,C,P)
 &=\min_{(d_0,d_1)\in\mathcal D}
             \max_{0\le\theta\le\bar\theta}\Phi(d_0,d_1,\theta)\\
 &=\max_{0\le\theta\le\bar\theta}
             \min_{(d_0,d_1)\in\mathcal D}\Phi(d_0,d_1,\theta)\\
 &=\max_{0\le\theta\le\bar\theta}
       \left\{\cE_{\mu,\theta}(R,C)-\frac{\theta P}{1-\theta}\right\}.
\end{align*}
It remains to extend the maximization to $0\le\theta<1$. For any such
$\theta$, the scalar variational identity gives
$\Phi(d_0,d_1,\theta)\le d_0+h_P(d_1)$ for every error pair. Taking
the minimum over $\mathcal D$ yields
\[
 \cE_{\mu,\theta}(R,C)-\frac{\theta P}{1-\theta}
 \le D_\mu(R,C,P).
\]
Since the maximum over the smaller interval already equals
$D_\mu(R,C,P)$, extending the interval does not change the value. This
proves \eqref{eq:rdpc-energy-duality}. When $P=0$, no minimax interchange is needed: $h_0(d_1)=d_1$, so
$D_\mu(R,C,0)=\cE_{\mu,1}(R,C)$, as stated in
\eqref{eq:rdpc-exact-energy}.

\subsection{Continuity of the weighted value}

Fix $\theta,\theta'\in[0,1]$, and let
$(d_0',d_1')\in\mathcal D$ minimize the objective at weight $\theta'$.
Using this pair as a competitor at weight $\theta$ gives
\begin{align*}
 \cE_{\mu,\theta}(R,C)
 &\le d_0'+\theta d_1'\\
 &=\cE_{\mu,\theta'}(R,C)+(\theta-\theta')d_1'\\
 &\le\cE_{\mu,\theta'}(R,C)+4V|\theta-\theta'|.
\end{align*}
Interchanging $\theta$ and $\theta'$ yields the reverse bound, so
\[
 \bigl|\cE_{\mu,\theta}(R,C)-\cE_{\mu,\theta'}(R,C)\bigr|
 \le4V|\theta-\theta'|.
\]
Thus $\cE_{\mu,\theta}(R,C)$ is Lipschitz continuous in $\theta$ on
$[0,1]$.
For $P>0$, dual attainment follows from the compact maximization above:
a maximizing weight can be chosen in $[0,\bar\theta]$, where
$\bar\theta<1$.

\section{Gaussian optimizers and ordered observation strengths}
\label{app:gaussian-structure}

We prove the Gaussian representation and ordering asserted in
\Cref{lem:gaussian-normal-form}. Throughout, we work in the covariance
coordinates of \Cref{subsec:gaussian-moment-targets}, where
$\Sigma=\diag(\sigma_1^2,\ldots,\sigma_d^2)\succ0$ and
$V=\Tr\Sigma$.

\subsection{Evaluation of the Gaussian weighted problem}

Consider any feasible chain $G-Z-\widetilde G$ with
$G,\widetilde G\sim\mcN(0,\Sigma)$. Define
$M_i=\E[G_i\mid Z]$, $N_i=\E[\widetilde G_i\mid Z]$, and
\[
 r_i=I(G_i;Z),\qquad s_i=I(\widetilde G_i;Z).
\]
The coordinates of $G$ are independent, so $h(G)=\sum_i h(G_i)$.
Conditional entropy subadditivity  gives
\[
 I(G;Z)=\sum_i h(G_i)-h(G\mid Z)
 \ge\sum_i\bigl[h(G_i)-h(G_i\mid Z)\bigr]
 =\sum_i r_i.
\]
Applying the same argument to $\widetilde G$ yields
\[
 \sum_i r_i\le I(G;Z)\le R,\qquad
 \sum_i s_i\le I(\widetilde G;Z)\le R+C.
\]
The scalar Gaussian quadratic RD bound
\cite{CoverThomas2006,Berger1971} implies
\begin{equation}\label{eq:normal-form-posterior-bound}
 \E M_i^2\le\sigma_i^2\rho(r_i)^2,\qquad
 \E N_i^2\le\sigma_i^2\rho(s_i)^2.
\end{equation}
Since $G$ and $\widetilde G$ are conditionally independent given $Z$,
Cauchy--Schwarz then gives
\begin{equation}\label{eq:normal-form-cross-bound}
 \E[G_i\widetilde G_i]=\E[M_iN_i]
 \le\sigma_i^2\rho(r_i)\rho(s_i).
\end{equation}
It follows that
\[
 \begin{aligned}
 d_0+\theta d_1
 &=(1+\theta)V-\sum_i\bigl[(1-\theta)\E M_i^2
                              +2\theta\E[M_iN_i]\bigr]\\
 &\ge(1+\theta)V-\sum_i\sigma_i^2
       \bigl[(1-\theta)\rho(r_i)^2+2\theta\rho(r_i)\rho(s_i)\bigr]
 \end{aligned}
\]
for $0\le\theta\le1$.
Optimizing over the feasible coordinate rates gives the lower bound
in \eqref{eq:gaussian-weighted-energy}.

To attain this bound, set $a_i=\rho(r_i)$ and $b_i=\rho(s_i)$. Let
the standard Gaussian variables $Z_i,E_i,\widetilde E_i$ be mutually
independent, and define
\begin{equation}\label{eq:normal-form-parallel-chain}
 \begin{split}
 G_i&=\sigma_i\bigl(a_iZ_i+\sqrt{1-a_i^2}\,E_i\bigr),\\
 \widetilde G_i
 &=\sigma_i\bigl(b_iZ_i+\sqrt{1-b_i^2}\,\widetilde E_i\bigr).
 \end{split}
\end{equation}
Given $Z=(Z_1,\ldots,Z_d)$, the endpoints are independent,
with conditional means $\sigma_i a_iZ_i$ and $\sigma_i b_iZ_i$.
The Gaussian channel formula gives
$I(G;Z)=\sum_i g(a_i)$ and
$I(\widetilde G;Z)=\sum_i g(b_i)$. Both moment bounds above hold with
equality, proving \eqref{eq:gaussian-weighted-energy}.


\subsection{Ordered strengths for the weighted problem}

Write the objective in terms of the rates:
\[
 \begin{aligned}
 F_\theta(r,s)&=\sum_i f_i(r_i,s_i),\\
 f_i(r,s)&=\sigma_i^2\bigl[(1-\theta)(1-e^{-2r})
                         +2\theta\rho(r)\rho(s)\bigr].
 \end{aligned}
\]
As before, set $a_i=\rho(r_i)$ and $b_i=\rho(s_i)$. The rate
simplices are compact, so a maximizer exists. If $R=0$, take
$A=B=0$. If $\theta=0$, the objective does not depend on $B$; choosing
$B=A$ for an optimal $A$ gives a feasible ordered pair.

It remains to consider $R>0$ and $\theta>0$. The maximum is positive,
so $a_i>0$ for some $i$. Increasing $s_i$ at this coordinate strictly
increases the objective, so the second rate budget must be fully used.
In particular, $b_j>0$ for some $j$. Increasing $r_j$ would then
strictly increase the objective as well, so the first budget is also
fully used:
\begin{equation}\label{eq:normal-form-full-budgets}
 \sum_i r_i=R,\qquad \sum_i s_i=R+C.
\end{equation}

We next show that the two allocations have the same active set.
Suppose first that $r_i=0<s_i$. Since $R>0$, some coordinate $j$ has
$r_j>0$. Keep $s$ fixed and replace $(r_i,r_j)$ by
$(\varepsilon,r_j-\varepsilon)$, where $0<\varepsilon<r_j/2$. This
preserves the first rate budget. The gain at coordinate $i$ is
\[
 \begin{aligned}
 f_i(\varepsilon,s_i)-f_i(0,s_i)
 &=\sigma_i^2\bigl[(1-\theta)(1-e^{-2\varepsilon})
                  +2\theta\rho(s_i)\rho(\varepsilon)\bigr]\\
 &=2\sqrt2\,\theta\sigma_i^2\rho(s_i)\sqrt\varepsilon
   +O(\varepsilon),
 \end{aligned}
\]
using $1-e^{-2\varepsilon}=2\varepsilon+O(\varepsilon^2)$ and
$\rho(\varepsilon)=\sqrt{2\varepsilon}+O(\varepsilon^{3/2})$.
The coefficient of $\sqrt\varepsilon$ is positive because
$\theta>0$ and $s_i>0$. At coordinate $j$, the function
$f_j(\cdot,s_j)$ is smooth near $r_j>0$, so
\[
 f_j(r_j-\varepsilon,s_j)-f_j(r_j,s_j) =-\partial_r f_j(r_j,s_j)\varepsilon+O(\varepsilon^2).
\]
The total change is therefore $c\sqrt\varepsilon+O(\varepsilon)$
for some $c>0$, and is positive for sufficiently small $\varepsilon$.
This contradicts optimality.

Now suppose that $s_i=0<r_i$. Transfer a small amount of the second
rate from some coordinate $j$ with $s_j>0$ to coordinate $i$, keeping
$r$ fixed. The gain at $i$ is
$2\theta\sigma_i^2\rho(r_i)\rho(\varepsilon)$, whose leading term is
a positive multiple of $\sqrt\varepsilon$. Since $s_j>0$, the loss at
$j$ is $O(\varepsilon)$. This again contradicts optimality. Hence
\[
 \mathcal I:=\{i:r_i>0\}=\{i:s_i>0\}.
\]
Both rates vanish outside $\mathcal I$.

On the active set, small rate transfers are possible in either
direction. At a maximum, the marginal gain from the first rate is
therefore the same at every active coordinate, as is the marginal gain
from the second rate. Equivalently, the two equality constraints in
\eqref{eq:normal-form-full-budgets} have multipliers
$\ell_r,\ell_s$, with Lagrangian
\[
 \mathcal L(r,s)=F_\theta(r,s)
 -\ell_r\left(\sum_i r_i-R\right)
 -\ell_s\left(\sum_i s_i-(R+C)\right).
\]
Stationarity on $\mathcal I$ gives
$\partial_{r_i}F_\theta=\ell_r$ and
$\partial_{s_i}F_\theta=\ell_s$. Using
\[
 \rho'(r)=\frac{1-\rho(r)^2}{\rho(r)},\qquad r>0,
\]
we obtain
\[
 \begin{aligned}
 \partial_{r_i}F_\theta
 &=2\sigma_i^2(1-a_i^2)
       \left(1-\theta+\theta\frac{b_i}{a_i}\right),\\
 \partial_{s_i}F_\theta
 &=2\theta\sigma_i^2a_i\frac{1-b_i^2}{b_i}.
 \end{aligned}
\]
These derivatives are strictly positive on $\mathcal I$. Define
$\lambda=\ell_r/2$ and $\eta=\ell_s/(2\theta)$. Then
\begin{align}
 \lambda&=\sigma_i^2(1-a_i^2)
              \left(1-\theta+\theta\frac{b_i}{a_i}\right),
 \label{eq:normal-form-kkt-a}\\
 \eta&=\sigma_i^2a_i\frac{1-b_i^2}{b_i},
 \label{eq:normal-form-kkt-b}
\end{align}
where $\lambda,\eta>0$ are common to all $i\in\mathcal I$.

Subtracting these equations gives the ordering identity
\begin{equation}\label{eq:normal-form-ordering-identity}
 \lambda-\eta
 =\sigma_i^2(b_i-a_i)
   \left[\frac{\theta(1-a_i^2)}{a_i}+\frac1{b_i}+a_i\right].
\end{equation}
The bracket is strictly positive, so $b_i-a_i$ has the same sign as
$\lambda-\eta$ for every active coordinate. If this sign were negative,
strict monotonicity of $\rho$ would imply $s_i<r_i$ throughout
$\mathcal I$, contradicting
$\sum_i s_i=R+C\ge R=\sum_i r_i$. Thus $A\preceq B$. If $C=0$,
equality of the total rates forces $A=B$; if $C>0$, then
$a_i<b_i$ for every active coordinate. Together with the cases
$R=0$ and $\theta=0$ considered above, this proves
\Cref{lem:gaussian-normal-form}.

\section{Continuity of posterior energy under Gaussian observations}
\label{app:posterior-continuity}

The rate balances in \eqref{eq:rate-stop} depend on the observation
strengths through a conditional expectation. We first prove continuity
for bounded scalar targets using the Gaussian observation densities.
We then truncate the target while keeping the observation fixed, and
use the $L^2$ contraction property of conditional expectation to bound
the truncation error uniformly over all observation matrices.

\begin{lemma}[Posterior-energy continuity]
\label{lem:posterior-continuity}
Let $W\in\R^d$ have finite second moment, and let $D_n,D$ be diagonal
matrices such that
\[
 D_n\longrightarrow D,\qquad 0\preceq D_n,D\preceq rI_d
 \quad\text{for some }0\le r<1.
\]
Let $G\sim\mcN(0,I_d)$ be independent of $W$, and define
\[
 Z_n=D_nW+(I_d-D_n^2)^{1/2}G,\qquad
 Z=DW+(I_d-D^2)^{1/2}G.
\]
Then the second-moment matrices of the posterior means converge
entrywise:
\begin{equation}\label{eq:posterior-continuity-result}
 \E\bigl[\E[W\mid Z_n]\E[W\mid Z_n]^\top\bigr]
 \longrightarrow
 \E\bigl[\E[W\mid Z]\E[W\mid Z]^\top\bigr].
\end{equation}
\end{lemma}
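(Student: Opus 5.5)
The plan is the one announced just before the lemma: first prove continuity for bounded scalar targets using the explicit Gaussian observation densities, then remove the boundedness by truncation, with an error bound that is uniform in the observation matrix.

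\textbf{Step 1: bounded targets.} Fix a bounded measurable $\phi:\R^d\to\R$ with $|\phi|\le K$, and write $\nu=\Law(W)$. Since $D\preceq rI_d$ with $r<1$, the covariance $I_d-D^2\succeq(1-r^2)I_d$ is uniformly nondegenerate, so $Z$ has conditional Lebesgue density $k_D(z\mid w)$ given $W=w$. The pair $(W,Z)$ then has the explicit posterior mean
\[
 \E[\phi(W)\mid Z=z]=\frac{N_D(z)}{p_D(z)},\qquad
 N_D(z)=\int\phi(w)k_D(z\mid w)\,\nu(\dd w),\qquad
 p_D(z)=\int k_D(z\mid w)\,\nu(\dd w),
\]
and $p_D(z)>0$ everywhere. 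The second moment we need is
\[
 \int \frac{N_D(z)^2}{p_D(z)}\,\dd z.
\]
As $D_n\to D$, the densities $k_{D_n}(z\mid w)\to k_D(z\mid w)$ pointwise and are uniformly bounded by $(2\pi(1-r^2))^{-d/2}$. Dominated convergence in $w$ therefore gives $N_{D_n}\to N_D$ and $p_{D_n}\to p_D$ pointwise, so the integrand converges pointwise. The integrand is dominated by $K^2 p_{D_n}(z)$, since $N^2\le K^2p^2$. The functions $p_{D_n}$ are probability densities converging pointwise to the probability density $p_D$, so the generalized dominated convergence theorem (Scheff\'e's lemma gives $p_{D_n}\to p_D$ in $L^1$) yields
\[
 \E\bigl[\E[\phi(W)\mid Z_n]^2\bigr]\to\E\bigl[\E[\phi(W)\mid Z]^2\bigr].
\]
Cross moments follow by polarization, applying the same argument to $\phi=\phi_i\pm\phi_j$.

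\textbf{Step 2: truncation.} Let $W^{(L)}$ be the coordinatewise truncation $W_i^{(L)}=\max(-L,\min(L,W_i))$, keeping the observations $Z_n$ and $Z$ fixed (they still depend on the untruncated $W$). The function $w\mapsto w_i^{(L)}$ is bounded, so Step 1 applies to it. Conditional expectation is an $L^2$ contraction, so for every observation variable $Z'$,
\[
 \bigl\|\E[W_i\mid Z']-\E[W_i^{(L)}\mid Z']\bigr\|_{L^2}\le \|W_i-W_i^{(L)}\|_{L^2}=:\epsilon_L .
\]
This bound does not depend on $D_n$, and $\epsilon_L\to0$ by finite second moments and dominated convergence. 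For entries of the second-moment matrix, write
\[
 \E[ab]-\E[a'b']=\E[(a-a')b]+\E[a'(b-b')],
\]
with all factors having $L^2$ norm at most $\|W\|_{L^2}$. Cauchy--Schwarz then bounds the difference between the true and truncated matrix entries by $2\epsilon_L\|W\|_{L^2}$, uniformly in $n$. A standard $\varepsilon/3$ argument (choose $L$ large first, then $n$ large) gives \eqref{eq:posterior-continuity-result}.

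\textbf{Main obstacle.} The delicate point is the domination in Step 1. The ratio $N^2/p$ involves a denominator that is not bounded below uniformly in $z$, so naive domination fails. The resolution is to dominate by $K^2p_{D_n}$ and use $L^1$ convergence of the densities, rather than to seek a fixed integrable envelope.
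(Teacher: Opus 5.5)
Your proposal is correct and follows essentially the paper's proof: Bayes-formula densities with the uniform likelihood bound for bounded targets, $L^1$ (Scheff\'e) convergence of $p_{D_n}$ to handle the ratio $N^2/p$, then truncation with an $L^2$-contraction error bound that is uniform in the observation matrix. The differences are cosmetic. You invoke the generalized dominated convergence theorem where the paper splits the difference into two explicit terms. You also polarize at the bounded level after truncating coordinatewise, whereas the paper truncates $v^\top W$ and polarizes at the end.
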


\begin{proof}
First consider a bounded measurable scalar target $\psi(W)$ with
$|\psi|\le M$, estimated from the observations. Write
$Z_D=DW+(I_d-D^2)^{1/2}G$. The conditional density of $Z_D$ given
$W=w$ is
\[
 k_D(z\mid w)
 =\frac{\exp\!\left[-\frac12(z-Dw)^\top(I_d-D^2)^{-1}(z-Dw)\right]}
        {(2\pi)^{d/2}\det(I_d-D^2)^{1/2}}.
\]
Since $I_d-D^2\succeq(1-r^2)I_d$,
\[
 0<k_D(z\mid w)\le [2\pi(1-r^2)]^{-d/2},
\]
uniformly in $z,w$ and in all the matrices under consideration. For
each fixed $z,w$, continuity of the matrix inverse and determinant
gives $k_{D_n}(z\mid w)\to k_D(z\mid w)$, including when some
diagonal entries of $D$ are zero.

Define
\[
 p_D(z)=\E k_D(z\mid W),\qquad
 N_D(z)=\E[\psi(W)k_D(z\mid W)],\qquad
 m_D(z)=\frac{N_D(z)}{p_D(z)}.
\]
The function $p_D$ is the strictly positive density of $Z_D$, and
$|N_D|\le Mp_D$, so $|m_D|\le M$. To verify that $m_D(Z_D)$ is a
version of $\E[\psi(W)\mid Z_D]$, let $\varphi$ be any bounded
measurable function. Fubini's theorem gives
\[
 \E[\psi(W)\varphi(Z_D)]
 =\int\varphi(z)N_D(z)\,\dd z
 =\E[m_D(Z_D)\varphi(Z_D)].
\]
No density for $W$ is needed. The uniform likelihood bound and
$|\psi|\le M$ allow us to apply dominated convergence under the law of
$W$, yielding $p_{D_n}(z)\to p_D(z)$ and
$N_{D_n}(z)\to N_D(z)$ for every $z$. Since $p_D(z)>0$, it follows
that $m_{D_n}(z)\to m_D(z)$.

The densities also converge in $L^1$. Indeed,
\[
 \int|p_{D_n}-p_D|\,\dd z
 =2-2\int\min\{p_{D_n},p_D\}\,\dd z\longrightarrow0,
\]
because the integrand in the last integral converges pointwise to
$p_D$ and is dominated by $p_D$. Since $|m_{D_n}|,|m_D|\le M$,
\begin{align*}
 &\left|\int m_{D_n}(z)^2p_{D_n}(z)\,\dd z
              -\int m_D(z)^2p_D(z)\,\dd z\right|\\
 &\qquad\le M^2\int|p_{D_n}-p_D|\,\dd z
       +\int|m_{D_n}^2-m_D^2|p_D\,\dd z
 \longrightarrow0.
\end{align*}
The first term tends to zero by $L^1$ convergence. The second tends
to zero by dominated convergence, since its integrand is bounded by
$2M^2p_D$. This proves continuity of the posterior energy for every
bounded target.

Now fix $v\in\R^d$ and let $\xi=v^\top W$. Truncate this scalar
target as $\xi_K=\max\{-K,\min\{\xi,K\}\}$, leaving the observations
$Z_n,Z$ unchanged. Factoring the difference of squares, applying
Cauchy--Schwarz, and using the $L^2$ contraction property of
conditional expectation gives
\begin{align*}
 \left|\E\bigl[\E[\xi\mid Z_n]^2\bigr]
       -\E\bigl[\E[\xi_K\mid Z_n]^2\bigr]\right|
 &=\left|\E\bigl[\E[\xi-\xi_K\mid Z_n]
                    \E[\xi+\xi_K\mid Z_n]\bigr]\right|\\
 &\le\|\xi-\xi_K\|_2\|\xi+\xi_K\|_2\\
 &\le2\|\xi\|_2\|\xi-\xi_K\|_2.
\end{align*}
For the last inequality, we used $|\xi_K|\le|\xi|$, which implies
$\|\xi_K\|_2\le\|\xi\|_2$. This bound is uniform in $n$ and also holds
with $Z$ in place of $Z_n$. For each fixed $K$, the bounded-target
result applies to $\xi_K$. The triangle inequality therefore  yields
\[
 \limsup_{n\to\infty}
 \left|\E\bigl[\E[\xi\mid Z_n]^2\bigr]
       -\E\bigl[\E[\xi\mid Z]^2\bigr]\right|
 \le4\|\xi\|_2\|\xi-\xi_K\|_2.
\]
Since $\xi$ is square-integrable,
$\|\xi-\xi_K\|_2^2\le\E[\xi^2\boldsymbol1_{\{|\xi|>K\}}]\to0$.
Letting $K\to\infty$ proves the scalar assertion.

Taking $\xi=W_i$ gives convergence of the diagonal entries in
\eqref{eq:posterior-continuity-result}. For $i\ne j$, use
\[
 \begin{split}
 2\E\bigl[\E[W_i\mid Z_n]\E[W_j\mid Z_n]\bigr]
 &=\E\bigl[\E[W_i+W_j\mid Z_n]^2\bigr]\\
 &\quad-\E\bigl[\E[W_i\mid Z_n]^2\bigr]
       -\E\bigl[\E[W_j\mid Z_n]^2\bigr].
 \end{split}
\]
Each term on the right converges by the scalar result. This proves
convergence of the off-diagonal entries as well, completing the proof.
\end{proof}

In the marked construction, $t_i(\alpha)=\sqrt{b_i\alpha_i}$ is
continuous on the entire parameter box, including its boundary. Also,
$0\preceq T(\alpha)\preceq B\preceq rI_d$, where
$r=\max_i b_i<1$. Applying the lemma to any sequence
$\alpha^{(n)}\to\alpha$ shows that $Q_\nu(T(\alpha))$ depends
continuously on $\alpha$. Hence
$e_i(\alpha)=Q_\nu(T(\alpha))_{ii}-b_i\alpha_i$ is continuous as well.
The  factor
\[
 k_i(\alpha)=\frac{b_i-\alpha_i}{1-b_i\alpha_i}
\]
is continuous because $0\le\alpha_i\le a_i\le b_i<1$ implies
$1-b_i\alpha_i\ge1-b_i^2>0$. Finally, $g$ and $g'$ are continuous on
$[0,a_i]$. Therefore each component of the rate-balancing map in
\eqref{eq:rate-stop} is continuous on the closed box, as required by
the Poincar\'e--Miranda theorem.

\section{Special cases of the Gaussian RDPC formula}
\label{app:gaussian-special-cases}

We record three special cases of the Gaussian formula
\eqref{eq:gaussian-parallel}: no common randomness, perfect
perception, and unlimited common randomness. Throughout, let
$V=\Tr\Sigma$, and denote the positive eigenvalues of $\Sigma$ by
$\sigma_i^2$.

\subsection{No common randomness}

When $C=0$, the ordered optimizer in
\Cref{lem:gaussian-normal-form} has $r_i=s_i$ for every $i$. The two
squared errors then both equal $\sum_i\sigma_i^2e^{-2r_i}$. Define
the ordinary Gaussian quadratic RD function by
\begin{equation}\label{eq:ordinary-gaussian-rd}
 d_G(R)=\min_{\substack{r_i\ge0\\\sum_i r_i\le R}}
                  \sum_i\sigma_i^2e^{-2r_i}.
\end{equation}
For each $\theta$, the minimum weighted energy is therefore
$(1+\theta)d_G(R)$. It follows that
\begin{equation}\label{eq:gaussian-no-common-randomness}
 D_G(R,0,P)
 =d_G(R)+\pos{\sqrt{d_G(R)}-\sqrt P}^{\,2}.
\end{equation}
For $R>0$, the usual reverse water-filling formula gives
\[
 d_G(R)=\sum_i\min\{\sigma_i^2,\xi\},\qquad
 R=\frac12\sum_i\left[\log\frac{\sigma_i^2}{\xi}\right]_+,
\]
where $\xi>0$ is chosen to satisfy the rate equation. At $R=0$,
$d_G(0)=V$. In particular, at perfect perception,
$D_G(R,0,0)=2d_G(R)$.

\subsection{Perfect perception}

At perfect perception, $P=0$, the two squared-error terms in
\eqref{eq:gaussian-parallel} combine to give
\begin{equation}\label{eq:gaussian-perfect-perception}
 D_G(R,C,0)
 =2V-2\max_{\substack{r_i,s_i\ge0\\
                      \sum_i r_i\le R,\ \sum_i s_i\le R+C}}
                    \sum_i\sigma_i^2\rho(r_i)\rho(s_i).
\end{equation}
Both endpoints have the source law, so the distortion is determined
entirely by their achievable cross-correlation.

\subsection{Unlimited common randomness}

By \eqref{eq:rdp-rdpc-identification}, unlimited common randomness
reduces the problem to the RDP function evaluated in
\eqref{eq:gaussian-rdp}. The finite-budget Gaussian formula converges
to this limit:
\begin{equation}\label{eq:gaussian-unlimited-benchmark}
 \lim_{C\to\infty}D_G(R,C,P)=D_G(R,\infty,P)=D_G(R,P).
\end{equation}
Indeed, $\rho(s_i)\le1$ gives a lower bound on the objective in
\eqref{eq:gaussian-parallel} by the corresponding objective in
\eqref{eq:gaussian-rdp}. For the reverse limiting bound,  fix a rate allocation $(r_i)$ attaining the minimum in \eqref{eq:gaussian-rdp}  and set $s_i=(R+C)/d$.
 Then every $\rho(s_i)$ tends to one as $C\to\infty$,
and continuity of the objective yields
\eqref{eq:gaussian-unlimited-benchmark}.
At zero  rate, all common-randomness budgets give
\[
 D_G(0,C,P)=V+\pos{\sqrt V-\sqrt P}^{\,2},
 \qquad 0\le C\le\infty.
\]

\bibliographystyle{IEEEtran}
\bibliography{references}
\end{document}